\newtheorem{theorem}{Theorem}[section]
\newtheorem{lemma}[theorem]{Lemma}
\theoremstyle{definition}
\newtheorem{definition}[theorem]{Definition}
\theoremstyle{remark}
\newtheorem{remark}[theorem]{Remark}
\numberwithin{equation}{section}
\newcommand{\abs}[1]{\left|#1 \right|}
\newcommand{\ainteg}[4]{\displaystyle{\int\limits^{#2}_{#1}}\! {#3}\,\text{d} {#4}}
\newcommand{\asum}[2]{\displaystyle{\sum\limits^{#2}_{#1}}}
\newcommand{\clim}[1]{\left[ {#1} \right]_\text{c}}
\newcommand{\dif}[1]{\frac{\text{d}}{\text{d} {#1}}}
\newcommand{\Exp}[1]{\text{e}^{#1}\,}
\newcommand{\fourier}[1]{\mathcal{F}[#1]}
\newcommand{\integ}[4]{\int\limits^{#2}_{#1}\!\! {#3}\,\text{d} {#4}}
\newcommand{\nab}[1]{\nabla_{\! #1}\,}
\newcommand{\norm}[1]{\lVert {#1} \rVert}
\newcommand{\state}[2]{\langle #1 \rangle_{#2}}
\DeclareMathOperator\co{co}
\DeclareMathOperator\si{si}
\DeclareMathOperator\supp{supp}
\title{\textbf{The Chaplygin Gas Equation of State for the Quantized Free Scalar Field on Cosmological Spacetimes}}
\author{\textsc{Jan Zschoche}  \\
	\\
	\emph{Institut f. Theoretische Physik, Universit\"at Leipzig}\\
	\emph{Postfach 100 920, D-04009 Leipzig, Germany}\\[0.25cm]
	\emph{Max-Planck-Institute f. Mathematics in the Sciences}\\
	\emph{Inselstr. 22, D-04103 Leipzig, Germany}\\
	\emph{jan.zschoche@gmx.de}
	}
\date{\today}
\begin{document}

\maketitle

\begin{abstract}
In this paper we try to answer the question whether the quantized free scalar field on a spatially flat Friedmann-Robertson-Walker (FRW) spacetime is a matter model that can induce a Chaplygin gas equation of state. For this purpose we first describe how one can obtain every possible homogeneous and isotropic Hadamard (HIH) state once any such state is given. We also identify a condition on the scale factor sufficient to entail the existence of a simple HIH state $-$ this state is constructed explicitly and can thence be used as a starting point for constructing all HIH states. Furthermore, we employ these results to show that on an FRW spacetime with non-positive constant scalar curvature there is, with one exception, no Chaplygin gas equation of state compatible with any HIH state. Finally, we argue that the semi-classical Einstein equation and the Chaplygin gas equation of state can presumably not be consistently solved for the quantized free scalar field. 
\end{abstract}


\section{Introduction}
One of the most challenging and surprising discoveries in cosmology within the last 15 years was the observation of an accelerated expansion of the universe. As gravity is usually acting as an attractive force one would have expected at least a decelerated expansion. In the standard model of cosmology, the $\Lambda$CDM model, this accelerated expansion is taken into account by the introduction of a non-vanishing cosmological constant $\Lambda$, where observations suggest a value of $\Lambda=10^{-52}\,\text{m}^{-2}$, cf. \cite{datagroup}. This cosmological constant effectively acts as a non-clustering energy form, usually referred to as dark energy.\\
One intensively investigated approach to explain the observed acceleration in a different way is the assumption of a certain equation of state (EoS) for the matter content of the universe, namely the \emph{Chaplygin Gas} (CG) EoS
\begin{equation}
\label{eqn_CG_EoS}
p = -\frac{A}{\varrho}, \quad A>0,
\end{equation}
we will now briefly describe. This EoS originates from the study of lifting forces on an airplane's wing and was first discovered by S. Chaplygin \cite{chaplygin} in 1904 and was later  on independently rediscovered in the same context by Tsien \cite{tsien} and von Karman \cite{vkarman}. That this EoS also yields very interesting results if applied to cosmological scenarios was first pointed out by Kamenshchik, Moschella and Pasquier in \cite{kamopa_1}, see also \cite{gor_ka_mo_1, gor_ka_mo_3, gor_ka_mo_2}. Let us assume the spacetime $(M,g)$ is of (a spatially flat) Friedmann-Robertson-Walker (FRW) type, i.e. $M= \mathbb{R}\times \mathbb{R}^3$ with line element $\text{d}s^2 = \text{d}t^2-a(t)^2(\text{d}x^2+\text{d}y^2+\text{d}z^2)$, where $t$ denotes the cosmological time and $a\in C^\infty(\mathbb{R},\mathbb{R}^+_0)$ the scale factor. For this spacetime the Einstein equation reduces to the Friedmann equations
\begin{equation}
\label{eqn_friedmann_equations}
\left(\dif{t}\, a(t)\right)^2 = \frac{8\pi}{3}\,\varrho(t)\,a(t)^2 \quad \text{and} \quad \dif{t} \left(\varrho(t)\,a(t)^3\right) = -p(t)\,\dif{t} a(t)^3,
\end{equation}
which can easily be solved using the CG EoS (\ref{eqn_CG_EoS}). As shown in \cite{kamopa_1} the solution reads in natural units as
\begin{align}
\label{eqn_cg_scalefactor}
\varrho(t) & =  \sqrt{A+\dfrac{B}{a_\text{cg}(t)^6}}\nonumber\\
 t   & =  \dfrac{1}{4\sqrt{6\pi}\,\sqrt[4]{A}}\left(\log\dfrac{\sqrt[4]{A+\frac{B}{a_\text{cg}(t)^6}}+\sqrt[4]{A}}{\sqrt[4]{A+\frac{B}{a_\text{cg}(t)^6}}-\sqrt[4]{A}}-2\arctan \sqrt[4]{1+\dfrac{B}{A\,\!a_\text{cg}(t)^6}} + b\right)\!,
\end{align}
where $b,B\in \mathbb{R}$ are constants of integration, and permits the following conclusions. First, the scale factor is zero for $t_*=(b-\pi)/(4 \sqrt{6\pi}\,\sqrt[4]{A})$ and thus $t_*$ determines the big bang singularity. For small scale factors, i.e. $a_\text{cg} \ll B/A$, one can expand the energy density with respect to small $a_\text{cg}$ and obtains for $\varrho$ the scaling behavior of a dust filled universe, namely $\varrho \approx \sqrt{B}\,a_\text{cg}^{-3}$ and $p\approx 0$. For large scale factors $a_\text{cg}\gg B/A$ on the other hand the energy density becomes basically a constant with $\varrho = -p = \sqrt{A}$ and thus mimics the effect of a cosmological constant. That is, with this simple EoS one can reproduce three important features of the $\Lambda$CDM model $-$ the big bang, the matter dominated era and the cosmological constant era $-$ in a generic way.\\
Note that it is even possible to incorporate a radiation dominated era by modifying the CG EoS via the introduction of additional parameters. For instance one could consider the so-called modified CG EoS $p= D\,\varrho-A\,\varrho^{-\alpha}$ with $D \geq 0$ and $1\geq \alpha >0$, cf. \cite{benaoum, debach}. While the $\alpha$ factor only leads to minor changes in the late time behavior of the scale factor, the additional term $D\cdot\varrho$ yields for small scale factors the radiation dominated era already announced, if $D=\frac{1}{3}$ is chosen.\\
However, this approach can only be considered successful if there are matter models actually inducing the CG EoS (or its modifications) since otherwise it would be merely a phenomenological description. So far the CG EoS arose in the context of string theory, where it can be obtained from the Nambu-Goto action for $d$-branes moving in a $(d+2)$-dimensional spacetime, cf. \cite{bord_hoppe, jackiw, kamopa_1}, and in classical field theory on curved spacetimes where the Lagrangian $L = \frac{1}{2}\dot{\phi}^2-\frac{\sqrt{A}}{2}(\cosh 3\phi + \cosh^{-1}3\phi)$ leads to the CG EoS, cf. \cite{kamopa_1}.\\[0.2cm]
In this paper we want to answer the question if the quantized minimally coupled free scalar field on a curved background is another matter model that leads to a CG EoS $-$ from a physical point of view the modified CG EoS mentioned above would be much more interesting but at the same would make any analysis much more complicated.\\
Therefore, we will give in the first part of the second section a brief introduction to quantum field theory (QFT) on curved spacetimes (CST). This in particular includes the description of the quantization scheme via the free field algebra and Hadamard states\footnote{Within the framework of QFT on CST these are the physically sensible states.} as well as the definition of the quantized energy-momentum tensor (EMT) $-$ this part contains no new results and is solely presented in order to fix notations and to make this work as self-contained as possible. In the second part of the second section, we will specify the introduced notions to the case of a spatially flat FRW background. That is, based on former results by L\"uders \& Roberts \cite{lued_rob} and Schlemmer \cite{schlemmer} on the mode decomposition of homogeneous and isotropic states as well as their renormalization, as described by Eltzner \& Gottschalk \cite{eltz_gott}, we explicitly calculate the renormalized quantized EMT for homogeneous and isotropic Hadamard (HIH) states in terms of a plane wave mode decomposition $-$ note that a similar mode decomposition for states of spin-0 and spin-1 fields on homogeneous spacetimes was recently achieved by Avetisyan \cite{avet} and Avetisyan \& Verch \cite{avet_verch}.\\
In the third chapter we study the occurrence of the CG EoS on different fixed background spacetimes.\\
We begin with Minkowski spacetime as the simplest case and very briefly argue that a CG EoS within the class of HIH states is only possible if two conditions are fulfilled. Namely, the energy density and pressure have to be constant and the remaining renormalization ambiguity has to be chosen properly.\\
In subsection \ref{sec_Homogeneous and Isotropic Hadamard States} we will include gravity by considering spacetimes with constant, non-positive scalar curvature $R$. Since not all homogeneous and isotropic states are physically sensible, i.e. they are not all Hadamard states, we have to describe in a first step how to obtain any HIH state once such a state is given $-$ this will be shown for a generic FRW spacetime. An initial ``Hadamard seed state'' for the latter procedure is constructed afterwards under a certain assumption on the scale factor in a rather simple manner in subsection \ref{sec_Construction_of_a_Hadamard_State_on_certain_spacetimes}. As this assumption is fulfilled by the spacetimes with scalar curvature $R=\text{const.}\leq 0$, as long as the minimally coupled scalar field is of mass $m=\sqrt{-\frac{1}{6}R}$, we are in the position to explicitly compute the EMT for all HIH states defined on these spacetimes. It will turn out that, unless $a(t)=a_\text{dS}(t) := \beta\, \Exp{\frac{m}{\sqrt{2}}t}, \beta \in \mathbb{R}^+,$ there are no HIH states entailing the CG EoS at all times, in particular late times. In the de Sitter case $a_\text{dS}$ there exists exactly one state compatible with the CG EoS, namely the Bunch-Davies state for which the energy density and pressure however become constants. Note that the analysis carried out explicitly for $R=\text{const.} \leq0$ and $m=\sqrt{-\frac{1}{6}R}$ is possible, since the field is effectively conformally coupled. It is therefore also possible to repeat the same analysis for the conformally coupled massless scalar field but with a somewhat \emph{different} EMT.\\
Based on the previous results, we will finally present some (yet heuristic) arguments which probably also exclude the existence of CG EoS inducing HIH states on an FRW spacetime with the CG scale factor $a_\text{cg}$. If made rigorous this would be equivalent to the claim that the semi-classical Friedmann equations and the CG EoS cannot be solved consistently within the class of HIH states and thus the quantized minimally coupled scalar field would not be a matter model compatible with the CG EoS.

\section{Quantum Field Theory in Curved Spacetime}
In this section we will give a brief introduction to QFT on CST. While in its first subsection the general framework is outlined we will specialize in the second subsection to FRW backgrounds. In particular, this includes the calculation of the quantized EMT in an HIH state.   

\subsection{The field algebra and Hadamard states}
\label{sec_The Field Algebra and Hadamard States}
As the general framework of quantum field theory on curved spacetimes is described in large detail elsewhere, cf. \cite{fewster_lec, hollands, wald_qft} and citations therein, we sketch only briefly the general setting to the extent we need, mainly for the sake of completeness and to fix our notation.\\[0.2cm]
In the following we will consider the minimally coupled free scalar field $\varphi$ on a globally hyperbolic spacetime $(M,g)$. The associated Klein-Gordon (KG) equation $P\varphi \equiv (g^{a b} \nab{a}\nab{b}+m^2)\,\varphi=0,\, m\geq0$, possesses, due to the global hyperbolicity of $(M,g)$, for every smooth and compactly supported initial data a unique solution. In particular, there are unique advanced and retarded Green's operators $E^\pm: C^\infty_0(M,\mathbb{R}) \rightarrow C^\infty(M,\mathbb{R})$, respectively, such that for all $f\in C^\infty_0(M,\mathbb{R})$ we have $P\,E^\pm f = E^\pm\, P f = f$ with $\supp(E^\pm f) \subset \mathcal{J}^\pm(\supp(f))$, $\mathcal{J}^\pm(O)$ denoting the causal future/past of the set $O$, cf. \cite{gin_baer_pfaef}. With these two operators we can define the fundamental solution $E:=E^+-E^-$ and associated to it the so-called \emph{commutator function}
\begin{equation*}
\mathscr{E}(f,h):=\integ{M}{}{f(q)\,(E\,h)(q)}{\mu_g(q)}\qquad \qquad \left(f,h \in C^\infty_0(M,\mathbb{R})\right),
\end{equation*}
where $\text{d}\mu_g(q):= \sqrt{\abs{g(q)}}\,\text{d}^4q$ denotes the metric-induced volume element on $M$. Since $E f$ solves by definition of $E$ the KG equation, $\mathscr{E}$ is a distributional bisolution of the field equation, i.e.
\begin{equation*}
\mathscr{E}(Pf,h) = \mathscr{E}(f,Ph) = 0.
\end{equation*}
Furthermore, due to $\supp(E f) \subset \mathcal{J}^+(\supp(f)) \cup \mathcal{J}^-(\supp(f))$ the commutator function vanishes, i.e. $\mathscr{E}(f,h) = 0,$ if the supports of $f$ and $h$ are acausally localized.\\
To quantize the system we consider the free unital *-algebra generated by the unit element $\mathds{1}$ and the field symbols $\phi(f)$ labeled by $f\in C^\infty_0(M,\mathbb{R})$. The \emph{algebra of free smeared fields} $\mathcal{A}(M,g)$ is then obtained by taking the quotient of the free unital *-algebra with respect to the relations\footnote{$[A,B]:=AB-BA$ denotes the commutator.} 
\begin{equation*}
\begin{array}{l r c l}
\textit{(}\mathit{1}\!\textit{)}\,  \text{\emph{Linearity}}:  		& \phi(\alpha f+h)  &\!\! = &\!\! \alpha \phi(f)+\phi(h)\\[0.2cm]
\textit{(}\mathit{2}\!\textit{)}\,  \text{\emph{Neutral field}}:		& \phi(f)^* 	    &\!\! = &\!\! \phi\left(\bar{f}\,\right)\\[0.2cm]
\textit{(}\mathit{3}\!\textit{)}\,  \text{\emph{Klein-Gordon Equation:}}	& \phi(Pf)  &\!\! = &\!\! 0\\[0.2cm]
\textit{(}\mathit{4}\!\textit{)}\,  \text{\emph{Commutation Relation:}}	& [\phi(f),\phi(h)] &\!\! = &\!\! -\text{\emph{i}}\,\mathscr{E}(f,h)\,\mathds{1},
\end{array}
\end{equation*}
where $\alpha \in \mathbb{C}$ and $f,h \in C^\infty_0(M,\mathbb{R}).$ Note that it is precisely the fourth relation that quantizes the field theory.\\
\\
As usual, a \emph{state} $\omega$ is given within this framework by a continuous linear functional $\omega: \mathcal{A}(M,g) \rightarrow \mathbb{C}$, $A \mapsto \omega(A) \equiv \state{A}{\omega}$, compatible with the *-structure of the algebra $\mathcal{A}(M,g)$, such that $\omega$ is normalized, i.e. $\omega(\mathds{1})=1$, and the positivity condition $\omega(A^*A)\geq 0$, for all $A\in \mathcal{A}(M,g)$, is fulfilled. Since this notion of a state is quite general, one cannot expect each state to be a physically sensible one. We therefore need to confine ourselves to a class of admissible states, which for instance enables us to define the energy-momentum tensor (EMT); for further discussions of this point cf. \cite{wald_qft}. Such a class is given by the set of (quasifree) Hadamard states we are going to characterize next.\\
The first restriction usually made is to consider only states whose $n$-point functions $$f_1\otimes ... \otimes f_n \mapsto \mathscr{W}^\omega_n(f_1,\;...\;,f_n):= \omega(\phi(f_1)...\phi(f_n))$$  are distributions. Such a state is completely characterized once all of its $n$-point functions are known. If additionally all odd $n$-point functions vanish while the even ones are given by 
\begin{equation*}
\mathscr{W}^\omega_{2n}(f,\,...\,,f) = \frac{(2n)!}{2^n\,n!}\, \mathscr{W}^\omega_2(f,f)^n,
\end{equation*} 
i.e. all $n$-point functions are completely fixed by the two-point function, we call the state \emph{quasifree}. It is furthermore a (quasifree) \emph{Hadamard state} iff for any convex normal neighborhood $\Omega\subset M$ of any point $q\in M$ and any time function\footnote{A smooth function $\tau:M\rightarrow \mathbb{R}$ is called time function if its gradient is past-directed timelike and its level sets are spacelike Cauchy surfaces.} $\tau,$ one can find a sequence $G^\omega_l\in C^l(\Omega\times\Omega,\mathbb{C})$ such that for all $f_1, f_2 \in C^\infty_0(\Omega,\mathbb{R})$ one has
\begin{equation}
\label{eqn_hadamard_prop_1}
\mathscr{W}^\omega_2(f_1,f_2) = \lim_{\varepsilon \rightarrow 0^+} \integ{\Omega\times\Omega}{}{\left(\mathcal{H}_{l,\varepsilon}(q,q')+G^\omega_l(q,q')\right)\,f_1(q)f_2(q')}{\mu_g(q)}\text{d}\mu_g(q'),
\end{equation} 
where 
\begin{align}
\mathcal{H}_{l,\varepsilon}(q,q') := & -\frac{u(q,q')}{4\pi^2\,(\sigma(q,q')+2\text{i}\,(\tau(q)-\tau(q'))\,\varepsilon+\varepsilon^2)}\nonumber\\[0.1cm]
\label{eqn_hadamard_prop_2}
				       & \quad-\frac{1}{4\pi^2}\,V_l(q,q')\, \log\left(-\frac{1}{\lambda^2}\,(\sigma(q,q')+2\text{i}\,(t(q)-t(q'))\,\varepsilon+\varepsilon^2)\right),\nonumber\\[0.1cm]
 V_l(q,q') := & \sum^l_{j=0} v_j(q,q')\,\sigma^j(q,q').
\end{align}
Here $\sigma(q,q')$ is the squared geodesic distance\footnote{If $\gamma:[a,b]\rightarrow \Omega$ denotes the unique geodesic connecting $q=\gamma(a)$ and $q'=\gamma(b)$ then $\sigma(q,q'):=\left(\integ{a}{b}{\sqrt{g(\dot{\gamma}(l),\dot{\gamma}(l))}}{l}\right)^2$. Note that $\sigma$ is negative, zero or positive whenever the geodesic is spacelike, lightlike or timelike, respectively.} between the two points $q$ and $q'$, $\lambda \in \mathbb{R}^+$ is an arbitrary constant length scale and the functions $u, v_j \in C^\infty(\Omega\times\Omega,\mathbb{R})$ are determined by the so-called Hadamard recursion relations\footnote{If not stated otherwise, Latin indices always label tensor components while Greek indices label tetrad components. Letters from the beginning of the alphabet, i.e. $a,b,c,...$ and $\alpha, \beta, \gamma, ...$, always run from $0,...,3$ while letters from the middle of the alphabet, i.e. $i,j,k,...$, only run from $1,...,3$.}  
\begin{align}
\label{eqn_had_par_u}
2\nab{a}\sigma \nabla^a\,u + (\square \sigma-8)\, u      & = 0\\[0.25cm]
\label{eqn_had_par_v0}
2\nab{a}\sigma \nabla^a\,v_0 + (\square \sigma-4)\, v_0 & = -(\square+m^2)\, u\\[0.25cm]
\label{eqn_had_par_vj}
2\nab{a}\sigma \nabla^a\,v_j + (\square \sigma+4(j-1))\, v_j & = -\frac{1}{j}\,(\square+m^2)\, v_{j-1} \quad \text{for}\quad 1\leq j\leq l
\end{align}
with the initial condition $u(q,q)=1$. For later purposes we define the symmetrization of the appearing distributions $\mathscr{W}^\omega_2(f_1,f_2)$, $\mathcal{H}_{k,\varepsilon}(f_1,f_2)$ and $G^{\omega}_l(f_1,f_2)$. Namely we set $\mathscr{W}^{\omega,\text{s}}_2(f_1,f_2):= \frac{1}{2}\,(\mathscr{W}^{\omega}_2(f_1,f_2)+\mathscr{W}^{\omega}_2(f_2,f_1)),$ and so on.\\
Note that, for arbitrary but fixed $l\in \mathbb{N}$, two different Hadamard states, say $\omega$ and $\omega'$, always share by definition the same singular part $\mathcal{H}_{l,\varepsilon}$. Hence, the difference of the two-point functions $\mathscr{W}^\omega_2(q,q')$ and $\mathscr{W}^{\omega'}_2(q,q')$ has to be $l$-times differentiable. But since $l$ was arbitrary we actually have $(\mathscr{W}^\omega_2(q,q')-\mathscr{W}^{\omega'}_2(q,q'))\in C^\infty(\Omega\times \Omega,\mathbb{C})$.\\
For a more in-depth discussion of Hadamard states we refer the reader for instance to \cite{kay_wald, radzikowski} for the quantized scalar field and for vector-valued quantum fields to \cite{sahlmann_verch2}.\\[0.2cm]
We can now turn to the construction of the quantized EMT where we follow the proposal of Moretti \cite{moretti}. That is, we first employ the point-splitting procedure, then, in order to preserve general covariance, subtract from the symmetrized two-point function $\mathscr{W}^{\omega,\text{s}}_2$ of a Hadamard state $\omega$ the symmetric Hadamard parametrix $H^\text{s}_{l,\varepsilon}$ of sufficiently high order thus obtaining a $C^l$ function, cf. \cite{wald_qft}. After applying a modified energy-momentum operator to this $C^l$ function we finally undo the point-splitting by taking the coincidence limit.\\
So let us define the canonical energy-momentum operator acting on a biscalar $\psi(q,q')\in C^2(M\times M)$ by
\begin{equation*}
T^\text{can}_{a b'} \psi(q,q'):= \left(\nab{a}\!\otimes \nab{b'}\! - \frac{1}{2}\,g_{a b'}\left(g^{c d'}\nab{c}\!\otimes\nab{d'}\! - m^2\,\mathds{1}\right)\right) \psi(q,q'),
\end{equation*}
where $g_{a b'}= g_{a c}\,g^{c}_{b'}$ and $g^{c}_{b'}$ denotes the parallel propagator. Then the quantized EMT is defined as follows.
\begin{definition}[Moretti \cite{moretti}]
\label{def_quantized_EMT}
Let $(M,g)$ be a globally hyperbolic spacetime, $\omega$ a quasifree Hadamard state and $P:=\square_q+m^2$ the Klein-Gordon operator acting with respect to $q$. Then the \emph{expectation value of the energy-momentum tensor} of the quantized minimally coupled Klein-Gordon field in the state $\omega$ is defined for $l\geq 1$ by
\begin{align}
\label{def_quantized_emt}
\state{T_{a b}(q)}{\omega} := & \clim{g_b^{\,b'}(q,q')\left(T^\text{can}_{a b'} -\frac{1}{3}\,g_{a b'}\, P \otimes \mathds{1}\right) \left(\mathscr{W}^{\omega,\emph{s}}_2(q,q')-\mathcal{H}^\emph{s}_{l,\varepsilon}(q,q')\right)}\nonumber\\[0.1cm]
					   &\,\, + C_{a b}(q),
\end{align}
where $\clim{...}$ denotes the coincidence limit $q'\rightarrow q$ and $C_{a b}(q)$ are the renormalization degrees of freedom defined by 
\begin{align}
\label{eqn_emt_ambiguities1}
C_{a b}(q) \equiv  D_1 & \,m^4\,g_{a b}(q) + D_2\,m^2\,G_{a b}(q)+ D_3\,I_{a b}(q) +D_4\,J_{a b}(q)\\
\text{with} \quad G_{a b}&  = R_{a b}-\frac{1}{2}\,R\,g_{a b}\nonumber\\
	     I_{a b}  & = g_{a b} \left(\dfrac{1}{2}\,R^2 + 2\,\square R\right) -2 \nab{b}\nab{a}R -2 R\, R_{a b}\nonumber\\
	     J_{a b}  & = \dfrac{1}{2}\,g_{a b}\left(R_{a b}\,R^{a b}+\square R\right) - \nab{b}\nab{a} R +\square R_{a b} -2 R_{c d}\,R^{c\;\;d}_{\;\;a\;\;b}\nonumber
\end{align}
and arbitrary constants $D_1,...,D_4 \in \mathbb{R}$, cf. \cite{hack, HoWa_2, wald_qft}.
\end{definition}
Note that, first, the above definition of $\state{T_{a b}(q)}{\omega}$ is actually independent of the specific choice of $l$ and, second, as was shown by Moretti in \cite{moretti}, this definition of the quantized EMT indeed obeys all four of Wald's axioms, cf. \cite{wald_2}. In particular, this means that the EMT is defined in a locally covariant way, it is covariantly conserved and a change of the length scale $\lambda$, entering via the Hadamard parametrix $\mathcal{H}^\text{s}_{l,\varepsilon}$, can be absorbed in the renormalization ambiguity $C_{a b}$ by a redefinition of the constants $D_1, ..., D_4$. Also there is precisely one length scale, say $\lambda_0$, such that $\state{T_{a b}}{\omega_0}$ vanishes if it is evaluated in the Minkowski vacuum state $\omega_0$.\\
Now we can use this EMT to formulate the semi-classical Einstein equation, which describes the back-reaction of the quantum field onto the spacetime.

\begin{definition}
A Hadamard state $\omega$ associated to a quantized scalar field and the metric of the underlying spacetime $(M,g)$ are said to fulfill the \emph{semi-classical Einstein equation} if
\begin{equation}
\label{eqn_sc_einstein_eqn}
R_{a b}-\frac{1}{2}\,R\,g_{a b} = 8\pi\, \state{T_{a b}}{\omega}.
\end{equation}
\end{definition}
Note that this equation only becomes meaningful as the definition of the EMT is locally covariant in the sense of \cite{brun_fred_ver, HoWa_1, HoWa_2, wald_qft}. Otherwise, it would have been meaningless to equate the EMT with the locally and covariantly defined Einstein tensor $G_{a b}= R_{a b}-\frac{1}{2}\,R\,g_{a b}$.  However, one should also notice that it is in general not possible to define a covariant map from the category of globally hyperbolic spacetimes into the class of Hadamard states.  This means that there exists no ``natural'' Hadamard state, which is defined on all spacetimes, cf. \cite{fewster_verch}. Therefore, strictly speaking, the EMT is in general not completely locally covariant and the semi-classical Einstein equation is in principle ill-defined $-$ on FRW spacetimes this problem can be fixed though, since on those spacetimes there exists at least one generic Hadamard state, namely the state of low energy, cf. \cite{degner, olbermann}. \\[0.2cm]
As we want to answer the question of the existence of the Chaplygin gas equation of state (CG EoS) in a cosmological scenario we will explicitly calculate in the next subsection the EMT for a homogeneous and isotropic Hadamard (HIH) state on an FRW spacetime.\\[-0.2cm]

\subsection{The energy-momentum tensor for homogeneous and isotropic Hadamard states on Friedmann-Robertson-Walker spacetimes}
\label{sec_EMT_HIH_state}
Motivated by cosmological observations we will exclusively work throughout the rest of the paper in a four-dimensional spatially flat FRW spacetime. That is, $(M= I\times\Sigma,g)$ is a four-dimensional, oriented and time-oriented Lorentzian manifold where $I$ denotes any interval and $\Sigma$ denotes the three-dimensional plane $-$ each point $q\in M$ is characterized by its coordinates $q=(t,\mathbf{x})$ with $\mathbf{x}\in \mathbb{R}^3$ and $t\in \mathbb{R}$  being the cosmological time. With respect to these coordinates the metric $g$ is given by
\begin{equation}
\label{eqn_metric_FRW}
g=\text{d}t\otimes\text{d}t-a(t)^2\,\delta_{ij}\,\text{d}\mathbf{x}^i\otimes\text{d}\mathbf{x}^j,
\end{equation}
where $\delta_{ij}$ denotes the Kronecker delta and $a \in C^\infty(\mathbb{R},\mathbb{R}^+_0)$ the scale factor. The associated volume element is given by $\text{d}\mu_g(q) = a(t)^3\,\text{d}t\,\text{d}^3\mathbf{x}$. An observer $\boldsymbol{\gamma}$, whose tangent vector field satisfies $\dot{\boldsymbol{\gamma}}(t) = \partial_t$, is called an \emph{isotropic observer} as it is the unique observer for whom the universe appears spatially isotropic. Associated to each isotropic observer is the orthonormal frame $\{e_0, e_1, e_2, e_3 \}:= \{\partial_t,a(t)^{-1}\,\partial_{1},a(t)^{-1}\,\partial_{2},a(t)^{-1}\,\partial_{3} \}$, where the frame components of the EMT $T_{a b}$ define, for instance, the \emph{energy density} $\varrho:= T_{a b}\,(e_0)^a\,(e_0)^b$ or the pressures $p_\mu:= T_{a b}\, (e_\mu)^a\, (e_\mu)^b$, $\mu=1,2,3,$  as measured by the isotropic observer.
Since the spatial section $\Sigma$ of a spatially flat FRW spacetime is just the ordinary $\mathbb{R}^3$, we can Fourier transform any test function $f=f(t, \mathbf{x})\in C^\infty_0(M,\mathbb{R})$ with respect to $\mathbf{x}$, i.e.
\begin{equation*}
\fourier{f}(t,\mathbf{k}) := \integ{\mathbb{R}^3}{}{f(t,\mathbf{x}) \,\Exp{\text{i}\, \mathbf{kx}}}{^3 \mathbf{x}}.
\end{equation*}
Note that $\fourier{f}$ depends only on the modulus $k$ of the vector $\mathbf{k} \in \mathbb{R}^3$, if this already holds for $f$ with respect to $\mathbf{x}$, i.e. if $f(q)=f(t,x)$ with $x=\norm{\mathbf{x}}$ being the Euclidean norm then $\fourier{f}=\fourier{f}(t,k)$.\\
To compute the quantized EMT, cf. equation (\ref{def_quantized_emt}), for an HIH state on an FRW spacetime we will proceed as follows. First, we will cite a theorem, based on the work of L{\"u}ders \& Roberts \cite{lued_rob} and Schlemmer \cite{schlemmer}, that characterizes all homogeneous and isotropic states by a plane wave mode decomposition. Then, based on the results of Eltzner \& Gottschalk \cite{eltz_gott}, we calculate the (equal time Fourier transformed) Hadamard parametrix for the energy density, as this is an essential part of the renormalization procedure. After determining the explicit form of the renormalization ambiguities $C_{a b}$ we finally put all results together in order to take the coincidence limit of $\left(T^\text{\emph{can}}_{0 0'} -\frac{1}{3}\,g_{0 0'}\, P \otimes \mathds{1}\right)(\mathscr{W}^{\omega,\emph{s}}_2-\mathcal{H}^\emph{s}_{1,\varepsilon})$ and thus obtain the quantized energy density. Afterwards we establish that the EMT in an HIH state is of perfect fluid type and use its covariant conservation to compute the pressure.
\begin{theorem}[L\"uders \& Roberts \cite{lued_rob}, Schlemmer \cite{schlemmer}]
\label{thm_homo_iso_states}
Suppose $\omega$ is a quasifree homogeneous and isotropic state of the minimally coupled free KG field in a spatially flat FRW background with scale factor $a(t)$ and Hubble parameter $H = \dot{a}/a$. Then $\omega$ is given by its two-point distribution
\begin{align}
\mathscr{W}^{\,\omega}_2(f,h) & \,= 
   \integ{\mathbb{R}^3}{}{\,\integ{\mathbb{R}}{}{\,\integ{\mathbb{R}}{}{\mathit{w}_k(t,t')\;\fourier{f}(t,\mathbf{k})\, \overline{\fourier{\overline{h}}}(t',\mathbf{k})\; a^3(t)\,a^3(t')}{t}}{t'}}{\!\; ^3\mathbf{k}}\label{eqn_2pd_his}\\[-0.2cm]
\text{with}\nonumber\\
\mathit{w}_k(t,t') &:=\frac{1}{(2\pi)^3}\,\left( \Xi(k)\,T_k(t)\,\bar{T}_k(t')+\left(\Xi(k)+1\right)\bar{T}_k(t)\, T_k(t')\right), 
\label{eqn_mode_hi_state}
\end{align}
where $T_k(t)$ fulfills the following differential equation and normalization condition
\begin{align}
\label{eqn_tkge}
\ddot{T}_k(t) +3H(t)\, \dot{T}_k(t)+ \omega_k(t)^2\,T_k(t) &= 0,\qquad\,\, \omega_k(t)^2 \equiv \frac{k^2}{a(t)^2}+m^2\\
\label{eqn_normkge}
\dot{T}_k(t)\, \bar{T}_k(t) - \dot{\bar{T}}_k(t)\, T_k(t) &= \text{\emph{i}}\,a(t)^{-3}.
\end{align}
Furthermore, $k\mapsto T_k(t)$ as well as $k\mapsto \dot{T}_k(t)$ are polynomially bounded. $\Xi(k)$ is an integrable polynomially bounded real-valued function, which is almost everywhere non-negative.\\[0.1cm]
If a quasifree state has a two-point function of the type just described then the state is homogeneous and isotropic.
\end{theorem}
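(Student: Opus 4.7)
The plan is to combine the defining properties of $\omega$ one by one. Quasifreeness reduces everything to the two-point distribution $\mathscr{W}_2^\omega$; homogeneity together with isotropy yield a partial Fourier decomposition whose kernel depends on $\mathbf{k}$ only through $k=|\mathbf{k}|$; the Klein-Gordon relation forces this kernel to satisfy (\ref{eqn_tkge}) in each time argument; the canonical commutation relation together with hermiticity fix the antisymmetric part via (\ref{eqn_normkge}); a residual Bogoliubov freedom casts the result into the specific form (\ref{eqn_mode_hi_state}); positivity then gives $\Xi\geq 0$ almost everywhere.

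First I would view $\mathscr{W}_2^\omega$ as a distribution on $M\times M$ and factor off the volume form $\text{d}\mu_g = a^3(t)\,\text{d}t\,\text{d}^3\mathbf{x}$. Invariance under the spatial translation subgroup of the isometry group of $(M,g)$ forces its kernel to depend on $\mathbf{x},\mathbf{x}'$ only through $\mathbf{x}-\mathbf{x}'$, so a distributional Fourier transform in both spatial slots produces a kernel of the form $\tilde{w}(t,t';\mathbf{k})\,\delta^3(\mathbf{k}-\mathbf{k}')$ with $\tilde{w}$ a distribution in $t,t'$ depending measurably on $\mathbf{k}$. Invariance under rotations then forces $\tilde{w}(t,t';\mathbf{k}) = w_k(t,t')$, producing the representation (\ref{eqn_2pd_his}).

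Next, axiom (3) gives $\mathscr{W}_2^\omega(Pf,h) = 0 = \mathscr{W}_2^\omega(f,Ph)$. With $P = \partial_t^2 + 3H(t)\partial_t - a(t)^{-2}\Delta_{\mathbf{x}} + m^2$ on an FRW background, Fourier-transforming $-a^{-2}\Delta_{\mathbf{x}}$ into $k^2/a(t)^2$ shows that $w_k(t,t')$ solves (\ref{eqn_tkge}) in each argument separately. Picking any nontrivial complex solution $T_k$ of (\ref{eqn_tkge}), a direct consequence of (\ref{eqn_tkge}) is that the Wronskian $\dot{T}_k\bar{T}_k - \dot{\bar{T}}_k T_k$ is proportional to $a(t)^{-3}$, so a constant rescaling of $T_k$ normalizes it to (\ref{eqn_normkge}); the set $\{T_k,\bar{T}_k\}$ is then a basis of solutions, and $w_k$ must equal $a(k)T_k T_k + b(k) T_k\bar{T}_k + c(k)\bar{T}_k T_k + d(k)\bar{T}_k\bar{T}_k$ for $k$-dependent coefficients. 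Hermiticity (from axiom (2)) gives $d=\bar{a}$ and $b,c\in\mathbb{R}$; the commutator axiom (4), combined with the explicit mode computation of $\mathscr{E}$ via the Wronskian (\ref{eqn_normkge}), gives $c-b = (2\pi)^{-3}$. Finally a $k$-measurable Bogoliubov transformation $T_k\mapsto\alpha(k)T_k+\beta(k)\bar{T}_k$ with $|\alpha|^2-|\beta|^2=1$ can always be chosen to eliminate the diagonal coefficients $a$ and $d$ (two real constraints against three real parameters); setting $b=(2\pi)^{-3}\Xi(k)$ then produces exactly (\ref{eqn_mode_hi_state}).

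The positivity condition $\mathscr{W}_2^\omega(\bar{f},f)\geq 0$, applied to test functions whose spatial Fourier transforms concentrate in small neighborhoods of arbitrary points $\mathbf{k}_0$, forces $\Xi(k)\geq 0$ almost everywhere; the polynomial boundedness of $T_k$ and $\dot{T}_k$ and the polynomially bounded integrability of $\Xi$ follow from continuity of $\mathscr{W}_2^\omega$ on $C_0^\infty(M,\mathbb{R})^{\otimes 2}$. The converse direction is a direct verification: given any data $(T_k,\Xi)$ as described, the prescription (\ref{eqn_2pd_his})-(\ref{eqn_mode_hi_state}) defines a quasifree state whose axioms (1)-(4) follow immediately from (\ref{eqn_tkge}) and (\ref{eqn_normkge}), and whose invariance under spatial translations and rotations is manifest from the $|\mathbf{k}|$-only dependence. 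The main technical obstacle is the Bogoliubov reduction step — ensuring that the eliminating parameters $\alpha(k),\beta(k)$ can be selected measurably in $k$ — together with the careful control of the distributional Fourier manipulations; for the remaining bookkeeping one can largely cite the proofs in \cite{lued_rob} and \cite{schlemmer}.
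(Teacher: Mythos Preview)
The paper does not supply its own proof of this theorem: it is stated with attribution to L\"uders \& Roberts \cite{lued_rob} and Schlemmer \cite{schlemmer} and then used without further argument. Your outline follows precisely the standard route of those references --- reduce to the two-point function by quasifreeness, Fourier-decompose using spatial homogeneity and isotropy, impose the mode Klein--Gordon equation and the Wronskian normalization from the commutator, diagonalize via a Bogoliubov transformation, and read off $\Xi\geq 0$ from positivity --- so there is nothing to compare against in this paper beyond noting that your sketch is consistent with the cited sources.
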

\begin{remark}
\label{rem_bogoliubov_trafo}
As soon as one homogeneous and isotropic state $\omega_0$, characterized by the mode functions $T_k(t)$, is given one can construct any other homogeneous and isotropic state $\omega$, characterized by $S^\omega_k(t)$, out of $\omega_0$ via a \emph{Bogoliubov transformation}. That is, there are $k$-dependent constants $\alpha(k),\beta(k)\in \mathbb{C}$ such that $S^\omega_k(t) = \alpha(k)\,T_k(t)+\beta(k)\,\bar{T}_k(t)$ and $\abs{\alpha(k)}^2-\abs{\beta(k)}^2= 1$ hold.
\end{remark}
Now that we have specified the two-point distributions of homogeneous and isotropic states we need to turn to the Hadamard parametrix. In preparation of this section's main theorem about the quantized EMT in an HIH state we have to explicitly compute the Hadamard series (\ref{eqn_hadamard_prop_2}), apply to it the energy-momentum operator $T^\text{can}_{a b'}$ and finally take the equal time limit $t'\rightarrow t$ as well as its Fourier transform.\\
The main part of this task was already accomplished by Eltzner and Gottschalk in \cite{eltz_gott}. Namely, they first determined the squared geodesic distance and the Hadamard coefficients up to a sufficiently high order in $(t-t')$ and $(\mathbf{x-x'})$, calculated the necessary derivatives and expanded these for equal times in a power series with respect to the modulus of the coordinate difference $\mathbf{z}= (\mathbf{x}-\mathbf{x'}) \in \mathbb{R}^3$ $-$ due to homogeneity and isotropy of the FRW spacetime the squared geodesic distance $\sigma$ and the Hadamard coefficients $u$ and $v_j$ depend on the spatial coordinates solely via the Euclidean norm of the coordinate difference, $z:=\norm{\mathbf{x-x'}}$. Additionally, they discarded all terms in these expansions which vanish in the coincidence limit $z\rightarrow 0$, like $z^2$ or $z^2\,\log z$. Here we will indicate this operation of expanding a function $f(t,t',z)$ with respect to $z$, taking the equal time limit and discarding all ``regular'' orders in $z$, which are vanishing in the coincidence limit, by $[f(t,t',z)]^o_{t'=t}$.  Then they take the (distributional) Fourier transform of the leading order to obtain a form which is compatible with a mode decomposition of the two-point function. For the subleading singular orders they chose an ansatz for the structure of the Fourier transformed Hadamard series, motivated by their results of the leading order, and determined the free coefficient functions accordingly. \\
However, a drawback of this elegant approach is that one cannot fix the non-vanishing contributions from the $z^0$-order. We make up for this in the next lemma.
\begin{lemma}
\label{lem_h_rho/p}
Let $(M,g)$ be a spatially flat FRW spacetime with scale factor $a(t)$, Hubble parameter $H(t)=a(t)^{-1}\dot{a}(t)$, Ricci scalar $R(t)=-6(\dot{H}(t)+2H(t)^2)$ and $\clim{v_1}$ the coincidence limit of the Hadamard recursion coefficient $v_1(q,q')$. Then for a minimally coupled scalar field of mass $m\geq0$ and $l\geq 1$, the following expansions hold:
\begin{align}
\label{eqn_h_rho}
\mathfrak{h}^\varrho(t,k)&:=  \frac{1}{\pi^2}\,\clim{v_1}\,\delta_0(\mathbf{k})+\fourier{\left[(T^\text{can}_{00'}\mathcal{H}^\text{s}_{l,0})(t,t',z)\right]^o_{t'=t}}(t,k) \nonumber\\
					& = \asum{n=-1}{1}\mathfrak{h}^\varrho_{2n-1}(t)\, k^{2n-1} +\mathfrak{h}^\varrho_{0}(t)\,\delta_0(\mathbf{k})
\end{align}
where $\delta_0(\mathbf{k})$ is the three-dimensional delta distribution and the purely time dependent coefficients $\mathfrak{h}^{\varrho}_n(t),\, n\in\{1,0,-1,-3\}$  are given by 
\begin{align}
\mathfrak{h}^{\varrho}_1(t) 		& :=  \dfrac{1}{2 a^4}\nonumber\\
\mathfrak{h}^{\varrho}_{-1}(t)	& :=  \dfrac{m^2+H^2}{4\,a^2}\hspace{8.5cm} \nonumber\\
\mathfrak{h}^{\varrho}_{-3}(t)	& := \dfrac{1}{576 }\left(R^2-36m^4+12H^2(6m^2+R)+12H\dot{R}\right)\nonumber\\
\mathfrak{h}^{\varrho}_0(t)		& := -\,\dfrac{\pi}{4320}\left(36 H^4 - 11 R^2 + 12 H^2 (150 m^2 + 19 R) - 132 H \dot{R}\right)\nonumber\\
					 		&\qquad -\,\dfrac{\pi}{144}\,\log\left(\dfrac{a}{\lambda}\right)\left(R^2-36m^4+12H^2(6m^2+R)+12H\dot{R}\right).
\end{align}
The symbol $[...]^o_{t=t'}$ (basically) denotes the equal time limit, see the explanation above.
\end{lemma}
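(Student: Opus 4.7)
The plan is to follow the strategy of Eltzner and Gottschalk \cite{eltz_gott}, but to carefully keep track of the constant-in-$z$ contributions that they discarded. I would start from the symmetric Hadamard parametrix $\mathcal{H}^{\text{s}}_{l,0}$ given in (\ref{eqn_hadamard_prop_2}), substitute the known FRW expansions of $\sigma(t,t',z)$, $u(t,t',z)$ and $V_l(t,t',z)$ in powers of $(t-t')$ and $z := \norm{\mathbf{x}-\mathbf{x}'}$, where the time-dependent coefficients of these expansions are fixed by the Hadamard recursion relations (\ref{eqn_had_par_u})--(\ref{eqn_had_par_vj}) together with the initial condition $u(q,q)=1$. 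Applying $T^{\text{can}}_{00'}$ and then letting $t'\to t$ produces a sum of $z$-dependent pieces of the types $z^{-4}$, $z^{-2}$, $\log(z/\lambda)$ and $z^0$; higher positive orders of $z$ are discarded under $[\,\cdot\,]^o_{t'=t}$ since they vanish in the coincidence limit after Fourier transformation.

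Each $z$-type contribution is then Fourier transformed over $\mathbb{R}^3$ using the standard distributional identities $\fourier{z^{-4}}\propto k$, $\fourier{z^{-2}}\propto k^{-1}$, $\fourier{\log(z/\lambda)}\propto k^{-3}$ (up to $\delta_0(\mathbf{k})$ contributions), and $\fourier{1}\propto \delta_0(\mathbf{k})$. This immediately accounts for the four types of terms $\mathfrak{h}^\varrho_{2n-1}(t)\,k^{2n-1}$ for $n\in\{-1,0,1\}$ and $\mathfrak{h}^\varrho_0(t)\,\delta_0(\mathbf{k})$ appearing in (\ref{eqn_h_rho}). The three polynomial coefficients $\mathfrak{h}^\varrho_{1}, \mathfrak{h}^\varrho_{-1}, \mathfrak{h}^\varrho_{-3}$ can then be read off directly from the leading behavior of the expansions of $\sigma$, $u$ and $v_0$, after converting $\dot a, \ddot a, \dddot a$ to $H, \dot H, R, \dot R$ by the FRW identities.

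The coefficient $\mathfrak{h}^\varrho_0(t)$ collects two separate contributions. The polynomial piece comes from the subleading $z$-expansions of the $u/\sigma$ and $v_0$ factors surviving the action of $T^{\text{can}}_{00'}$, evaluated at the equal-time limit. The $\log(a/\lambda)$ piece arises because, at leading order $\sigma(t,t,z) \sim -a(t)^2 z^2$, the parametrix factor $\log(-\sigma/\lambda^2)$ splits as $2\log(a/\lambda) + 2\log z + \text{regular}$, and the $z$-independent $\log(a/\lambda)$ decouples from the Fourier integral, yielding a $\delta_0(\mathbf{k})$ contribution with prefactor inherited from $V_l$. Finally, the explicit $\pi^{-2}\clim{v_1}\,\delta_0(\mathbf{k})$ addend in the \emph{definition} of $\mathfrak{h}^\varrho$ reflects Moretti's modification of the canonical operator by $-\tfrac{1}{3}g_{ab'}P\otimes\mathds{1}$ in definition \ref{def_quantized_EMT}: applying $P_q$ to $\mathcal{H}^{\text{s}}_{l,0}$ and taking the coincidence limit produces, via the recursions, a spatially constant multiple of $\clim{v_1}$, whose Fourier transform is the announced distributional term.

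The main obstacle is the careful bookkeeping of the $z^0$-order. Contributions here come simultaneously from subleading terms in the $(t-t')$- and $z$-expansions of $\sigma, u, v_0$, from the explicit $\log$ factor, and from every temporal and spatial derivative entering $T^{\text{can}}_{00'}$ acting on each of these. All such pieces must be expanded consistently to the same order before the equal-time limit is taken, and the resulting time-dependent expressions must then be simplified using the FRW identities for $R = -6(\dot H + 2H^2)$ and its time derivative in order to produce the compact forms for $\mathfrak{h}^\varrho_1, \mathfrak{h}^\varrho_{-1}, \mathfrak{h}^\varrho_{-3}$ and $\mathfrak{h}^\varrho_0$ quoted in the statement.
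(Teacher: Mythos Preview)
Your proposal is correct and follows essentially the same approach as the paper's proof: both start from the Hadamard parametrix, insert the Eltzner--Gottschalk expansions of $\sigma,u,v_0,v_1$ in $(t-t')$ and $z$, apply $T^{\text{can}}_{00'}$, take the equal-time limit, Fourier transform, and identify the $z^0$-order contribution as the only genuinely new piece beyond \cite{eltz_gott}. The paper additionally notes two small simplifications you might make explicit: by Moretti's symmetry result \cite{moretti_0} one has $\mathcal{H}^{\text{s}}_{1,0}=\mathcal{H}_{1,0}$, and since $T^{\text{can}}_{00'}$ involves at most second derivatives it suffices to work with $l=1$, all higher $v_j\sigma^j$ terms vanishing in the coincidence limit.
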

\begin{proof}
The proof is sketched in the appendix.
\end{proof}
In the next lemma we state the explicit form of the renormalization ambiguities for an FRW spacetime.
\begin{lemma}
\label{thm_renorm_ambiguity_rw_spacetime}
For a spatially flat FRW spacetime equipped with the metric (\ref{eqn_metric_FRW}) the renormalization ambiguity $C_{0 0}$, defined via equation (\ref{eqn_emt_ambiguities1}), is given by
\begin{align}
\label{eqn_renorm_ambiguity_rw_spacetime}
C_{00} 	& = D_1 m^4 + 24 D_4 H^4 + \frac{1}{6}\,(D_4-3 D_3 ) R^2 \nonumber\\
		&\hspace{0.5cm}	+ H^2 \left(3 D_2 m^2 + (4D_4-6 D_3) R\right) + 2 (3 D_3 + D_4) H\dot{R}.
\end{align}
Furthermore, we have $C_{11}=C_{22}=C_{33}$ and $C_{a b}=0$ for $a\neq b.$
\end{lemma}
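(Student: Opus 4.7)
The proof is essentially a direct tensor calculation. The plan is to substitute the FRW data into the generally covariant tensors $G_{ab}$, $I_{ab}$, $J_{ab}$ from Definition~\ref{def_quantized_EMT} and reduce the frame components against the orthonormal tetrad $\{e_0,e_1,e_2,e_3\} = \{\partial_t, a^{-1}\partial_1, a^{-1}\partial_2, a^{-1}\partial_3\}$.

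First I would tabulate the curvature data for the metric (\ref{eqn_metric_FRW}). A standard computation gives the non-vanishing Ricci components $R_{\hat 0\hat 0}= -3(\dot H + H^2)$ and $R_{\hat i\hat j}= (\dot H + 3 H^2)\delta_{ij}$, consistent with $R= -6(\dot H + 2 H^2)$ and $R_{ab} R^{ab}= 12(\dot H^2 + 3 \dot H H^2 + 3 H^4)$. Because $R$ is a scalar function of $t$ alone, its covariant Hessian reduces to $(\nabla\nabla R)_{\hat 0\hat 0}= \ddot R$ and $(\nabla\nabla R)_{\hat i\hat i}= -H\dot R$, whence $\square R = \ddot R + 3 H \dot R$. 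The Riemann components entering the contraction in $J_{ab}$ are $R^{\hat 0}{}_{\hat i\hat 0\hat j}= (\dot H + H^2)\delta_{ij}$ and $R^{\hat i}{}_{\hat j\hat k\hat l}= H^2(\delta^i{}_k \delta_{jl} - \delta^i{}_l \delta_{jk})$.

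With these in hand the $\hat 0\hat 0$ components are straightforward. The metric and Einstein contributions give $g_{\hat 0\hat 0}=1$ and $G_{\hat 0\hat 0}= R_{\hat 0\hat 0} - \tfrac12 R = 3 H^2$, which account for the $D_1 m^4$ and $3 D_2 m^2 H^2$ terms of (\ref{eqn_renorm_ambiguity_rw_spacetime}). For $I_{\hat 0\hat 0}$, substituting the above and eliminating $\dot H$ via $\dot H + H^2 = -R/6 - H^2$ produces
\begin{equation*}
I_{\hat 0\hat 0} = \tfrac12 R^2 + 6 H \dot R + 6 R(\dot H + H^2) = -\tfrac12 R^2 - 6 R H^2 + 6 H \dot R,
\end{equation*}
which is exactly the $D_3$ coefficient of the target. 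For the $D_4$ coefficient one evaluates $J_{\hat 0\hat 0}$: compute $\square R_{\hat 0\hat 0}$ by differentiating $R_{00}$ in the coordinate basis and applying the tensor d'Alembertian (keeping track of the Christoffel and commutator contributions), compute the Riemann contraction $R_{\hat c\hat d}R^{\hat c\;\hat d}_{\,\hat 0\,\hat 0}$ from the Riemann data, and collect everything using $R = -6(\dot H + 2 H^2)$ and $\dot R = -6(\ddot H + 4 H \dot H)$ to re-express the outcome in the variables $H$, $R$, $\dot R$ of the target.

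The spatial claims follow from symmetry without further calculation. The FRW background is invariant under the full group of spatial translations, $SO(3)$ rotations, and reflections through the hyperplanes $x^i=\text{const}$, and this invariance is inherited by any tensor built locally from $g$ and its curvatures, in particular by $C_{ab}$. Consequently the off-diagonal frame components vanish and the three spatial diagonal entries coincide. The main obstacle I expect is the computational bookkeeping inside $J_{\hat 0\hat 0}$: the commutator terms generated when $\square = g^{cd}\nabla_c \nabla_d$ acts on the tensor $R_{ab}$ produce Riemann--Ricci combinations that must not be double-counted against the explicit $-2 R_{cd} R^{c\;d}_{\;a\;b}$ term, and a sustained calculation is needed to collapse everything into the compact form of (\ref{eqn_renorm_ambiguity_rw_spacetime}). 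Conceptually nothing beyond careful differentiation and index manipulation is required, and the intermediate expressions can be cross-checked against the known invariant $R_{ab}R^{ab}$ recorded above.
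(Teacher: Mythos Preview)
Your proposal is correct and follows essentially the same approach as the paper, which simply states that the result follows by a straightforward calculation using the definition (\ref{eqn_emt_ambiguities1}) and the well-known Christoffel symbols for a spatially flat FRW spacetime. Your write-up is considerably more detailed than the paper's one-line proof, and your symmetry argument for $C_{11}=C_{22}=C_{33}$ and $C_{ab}=0$ for $a\neq b$ is a clean shortcut, but the underlying method is the same direct computation.
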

\begin{proof}
Using definition (\ref{eqn_emt_ambiguities1}) for the renormalization ambiguities as well as the well-known Christoffel symbols for a spatially flat FRW spacetime this lemma can be proven by a straight forward calculation. 
\end{proof}
Finally, we can state this section's first central result, namely the complete energy density for HIH states.
\begin{theorem}
\label{thm_renorm_energy/pressure}
Let $(M,g)$ be the spatially flat FRW spacetime with scale factor $a(t)$ and $q\in M$. Furthermore, we choose an arbitrary $k_0>0$. Then for an isotropic observer $\boldsymbol{\gamma}(t)$ with $\dot{\boldsymbol{\gamma}} = \partial_t$ the energy density $\state{\varrho(q)}{\omega}$ for the minimally coupled quantized scalar field $\phi$ in an HIH state $\omega$, characterized via $T_k(t)$ and $\Xi(k)$ of theorem \ref{thm_homo_iso_states}, is given by
\begin{align}
\state{\varrho(q)}{\omega} = &\,\dfrac{1}{(2\pi)^3} \ainteg{B_{k_0}}{}{\varrho_k(t)}{\,^3 \mathbf{k}}+ \dfrac{1}{(2\pi)^3} \ainteg{\mathbb{R}^3\backslash B_{k_0}}{}{\left(\varrho_k(t)-\mathfrak{h}^\varrho(t,k)+\mathfrak{h}_{0}^\varrho(t)\right)}{\,^3\mathbf{k}}\nonumber\\
			&\qquad -\,\dfrac{k_0^4}{8\pi^2}\, \mathfrak{h}^\varrho_{1}(t) -\dfrac{1}{(2\pi)^3}\,\mathfrak{h}_{0}^\varrho(t)- \dfrac{k_0^2}{4\pi^2}\,\mathfrak{h}^\varrho_{-1}(t)\nonumber\\
			&\qquad + \dfrac{1}{2\pi^2} \left(1-\gamma-\log(k_0)\right)\mathfrak{h}^\varrho_{-3}(t)+ C_{00}(t),
\label{eqn_renorm_energy}
\end{align}
with the energy density per mode
\begin{equation}
\label{eqn_energydensity/preassure_mode}
\varrho_k(t) :=  \dfrac{1}{2}(1+2\Xi(k))\left(|\dot{T}_k(t)|^2+\omega_k(t)^2\,\abs{T_k(t)}^2\right)
\end{equation}
and $\omega_k(t)^2:= (\frac{k^2}{a(t)^2}+m^2)$. The energy density parametrix  $\mathfrak{h}^\varrho(t,k)$ is defined by (\ref{eqn_h_rho}) and $C_{0 0}$ by (\ref{eqn_renorm_ambiguity_rw_spacetime}). $B_{k_0}$ denotes the ball of radius $k_0$ around the origin and $\gamma$ the Euler-Mascheroni constant. 
\end{theorem}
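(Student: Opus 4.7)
The plan is to start from Definition~\ref{def_quantized_EMT} applied to the $(0,0)$-component in the isotropic frame, where $\state{\varrho}{\omega}=\state{T_{00}}{\omega}$. Because $\mathscr{W}^\omega_2$ is a bisolution of the Klein--Gordon equation, the operator $P\otimes\mathds{1}$ annihilates it, so the Moretti correction $-\tfrac{1}{3}g_{0 0'}\,P\otimes\mathds{1}$ acts non-trivially only on the Hadamard parametrix $\mathcal{H}^\text{s}_{l,0}$. Its coincidence limit is a purely time-dependent function and, after spatial Fourier transform, it produces exactly the $\tfrac{1}{\pi^2}\clim{v_1}\,\delta_0(\mathbf{k})$ summand appearing in the parametrix density $\mathfrak{h}^\varrho$ supplied by Lemma~\ref{lem_h_rho/p}.

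Next I insert the mode decomposition of Theorem~\ref{thm_homo_iso_states} into $\mathscr{W}^{\omega,\text{s}}_2$. Symmetrising $w_k(t,t')$ yields $w_k^\text{s}(t,t')=(2\pi)^{-3}(2\Xi(k)+1)\,\mathrm{Re}\bigl(T_k(t)\overline{T_k(t')}\bigr)$. Applying the canonical energy-momentum bidifferential operator $T^\text{can}_{0 0'}$, which at coincidence and in the $\partial_t$-frame reduces to $\tfrac{1}{2}\bigl(\partial_t\partial_{t'}+a^{-2}\delta^{ij}\partial_i\partial_{j'}+m^2\bigr)$, to the mode kernel $w_k^\text{s}(t,t')\,\Exp{\text{i}\mathbf{k}\cdot(\mathbf{x}-\mathbf{x}')}$ and then setting $q'=q$ yields for each mode precisely $(2\pi)^{-3}\varrho_k(t)$ with $\varrho_k$ as in~(\ref{eqn_energydensity/preassure_mode}). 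Combined with Lemma~\ref{lem_h_rho/p}, the coincidence limit required by Definition~\ref{def_quantized_EMT} becomes, formally, the momentum integral $(2\pi)^{-3}\int(\varrho_k(t)-\mathfrak{h}^\varrho(t,k))\,\text{d}^3\mathbf{k}$.

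The central step is to give this formal integral a precise meaning. By the adiabatic expansion of $T_k$, consequence of~(\ref{eqn_tkge}) and~(\ref{eqn_normkge}), the integrand is ultraviolet-integrable: the pieces $\mathfrak{h}^\varrho_1\,k$, $\mathfrak{h}^\varrho_{-1}\,k^{-1}$ and $\mathfrak{h}^\varrho_{-3}\,k^{-3}$ cancel the large-$k$ tail of $\varrho_k$. However $\mathfrak{h}^\varrho$ is not integrable at $k=0$ while $\varrho_k$ is, so the naive integral is infrared-ambiguous. I introduce the auxiliary scale $k_0>0$ and split the momentum integration: on $B_{k_0}$ I keep $\varrho_k$ alone, on $\mathbb{R}^3\setminus B_{k_0}$ I use the subtracted expression $\varrho_k-\mathfrak{h}^\varrho+\mathfrak{h}^\varrho_0$. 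Matching this regrouping against the original distributional coincidence limit produces four compensating counterterms. Spherical integration of $\mathfrak{h}^\varrho_1\,k$ and $\mathfrak{h}^\varrho_{-1}\,k^{-1}$ over $B_{k_0}$ supplies $-\tfrac{k_0^4}{8\pi^2}\mathfrak{h}^\varrho_1$ and $-\tfrac{k_0^2}{4\pi^2}\mathfrak{h}^\varrho_{-1}$, the $\mathfrak{h}^\varrho_0\,\delta_0(\mathbf{k})$ piece (supported entirely in $B_{k_0}$) gives $-\tfrac{1}{(2\pi)^3}\mathfrak{h}^\varrho_0$, and the non-integrable $\mathfrak{h}^\varrho_{-3}\,k^{-3}$ contribution is regularised as a Hadamard finite part. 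Using the standard distributional identity for the analytic continuation of $|\mathbf{k}|^{s-3}$ in three dimensions near $s=0$, this finite part produces exactly $\tfrac{1}{2\pi^2}(1-\gamma-\log k_0)\mathfrak{h}^\varrho_{-3}$, in which the Euler--Mascheroni constant appears through the digamma residue.

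Finally I add the renormalisation ambiguity $C_{00}(t)$ of Lemma~\ref{thm_renorm_ambiguity_rw_spacetime} and verify $k_0$-independence of the sum, which must hold by construction since the splitting was auxiliary. The hard part will be the last step of paragraph three: pinning down the precise constant $1-\gamma$ in the $k^{-3}$ finite-part prescription and verifying that it combines correctly with the Moretti $\clim{v_1}$-delta and with the choice of symmetric parametrix $\mathcal{H}^\text{s}_{l,0}$. This distributional bookkeeping is exactly what Lemma~\ref{lem_h_rho/p} (whose proof is only sketched in the appendix) encodes in the explicit form of $\mathfrak{h}^\varrho_0$.
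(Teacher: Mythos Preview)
Your outline follows the paper's proof closely: both start from the $00$-component of Definition~\ref{def_quantized_EMT}, use that $(P\otimes\mathds{1})$ kills $\mathscr{W}^{\omega,\text{s}}_2$ but leaves a $\clim{v_1}$-contribution from the parametrix (absorbed into the $\delta_0(\mathbf{k})$-piece of $\mathfrak{h}^\varrho$), insert the mode decomposition of Theorem~\ref{thm_homo_iso_states} to obtain $\varrho_k$, and then regularise the formal momentum integral by splitting at $k_0$ and treating the $k^{-3}$-term as a distributional Fourier transform of $\log z$.

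There are two genuine gaps you should close. First, before you can ``apply $T^\text{can}_{00'}$ and then set $q'=q$'' mode by mode, you must justify that the equal-time restrictions of $T^\text{can}_{00'}\mathscr{W}^{\omega,\text{s}}_2$ and $T^\text{can}_{00'}\mathcal{H}^\text{s}_{l,0}$ exist as distributions on $\Sigma\times\Sigma$; the paper does this via the wavefront-set criterion for pull-backs (H{\"o}rmander) together with the result of Fewster and Smith that both wavefront sets contain only lightlike covectors, hence are disjoint from the conormal bundle $N^*\Sigma\times N^*\Sigma$. Without this step the interchange of limits and the mode-wise coincidence evaluation is only formal. Second, your claim that the subtracted integrand is UV-integrable ``by the adiabatic expansion of $T_k$, consequence of~(\ref{eqn_tkge}) and~(\ref{eqn_normkge})'' is not quite right: equations~(\ref{eqn_tkge}) and~(\ref{eqn_normkge}) alone characterise all homogeneous isotropic states, not just Hadamard ones. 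The $\mathcal{O}(k^{-5})$ fall-off of $\varrho_k-\mathfrak{h}^\varrho$ is precisely the content of the Hadamard assumption on $\omega$, and you should invoke it explicitly at this point.
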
 
\begin{proof}
For an isotropic observer with $\dot{\boldsymbol{\gamma}}=\partial_t$ the energy density is just the EMT's 00-component, i.e. $\state{\varrho(q)}{\omega} = [(T^\text{can}_{00'}-1/3\,P\otimes \mathds{1})(\mathscr{W}^{\omega,\text{s}}_2-\mathcal{H}^\text{s}_{l,\varepsilon})]_\text{c}+C_{00}$. On the one hand the two-point distribution solves the KG equation, i.e. $(P\otimes \mathds{1})\mathscr{W}^{\omega,\text{s}}_2=0$, and on the other hand it was shown in \cite{moretti} that $[(P\otimes \mathds{1}) \mathcal{H}^\text{s}_{l,\varepsilon}]_\text{c}= -3\clim{v_1}/\pi^2$. Hence, we need to determine 
\begin{equation}
\label{eqn_energy_density1}
\state{\varrho}{\omega} = \clim{T^\text{can}_{00'}(\mathscr{W}^{\omega,\text{s}}_2-\mathcal{H}^\text{s}_{l,\varepsilon})} - \frac{1}{\pi^2}\, \clim{v_1} +C_{00}.
\end{equation}
We are going to take the coincidence limit along spacelike directions, i.e. we need to restrict the distributions $T^\text{can}_{00'}\mathscr{W}^{\omega,\text{s}}_2$ and $T^\text{can}_{00'}\mathcal{H}^\text{s}_{l,\varepsilon}$ to equal times $t=t'$. For that purpose let us define the embedding $\imath: \Sigma \rightarrow M, \mathbf{x} \mapsto q_\mathbf{x}:= (t,\mathbf{x})$. Associated to this embedding is the conormal bundle
\begin{equation*}
\begin{array}{r c l}
N^* \Sigma &:= & \left\{(q_\mathbf{x},\,\xi)\in T^*M\; :\; \mathbf{x}\in \mathbb{R}^3, \; \text{d}\imath^*(\xi) =0  \right\}\\[0.3cm]
	   & = & \left\{(q_\mathbf{x},\,\xi)\in T^*M\; :\; \mathbf{x}\in \mathbb{R}^3, \; \xi = \alpha\, \text{d}t \quad (\alpha \in \mathbb{R}) \right\}.
\end{array}
\end{equation*} 
Now, the pull-back $(\imath \otimes \imath)^* u$, i.e. the restriction of $u\in \mathcal{D}'(M\times M)$ to $\Sigma\times \Sigma$,  yields a well-defined distribution on $\Sigma \times \Sigma$ whenever the conormal bundle $N^*\Sigma \times N^*\Sigma$ does not intersect with the distribution's wavefront set\footnote{The wavefront set is a subset of the cotangent bundle $T^*M\times T^*M$ and encodes the position and direction of a distribution's singularities. For a rigorous definition see \cite{hoermander2}.} $WF(u)$, i.e. $(N^*\Sigma \times N^*\Sigma)\cap WF(u) = \emptyset$, cf. corollary 8.2.7 in \cite{hoermander2}. Furthermore, Fewster \& Smith have shown in \cite{fewster_smith} that all elements of $WF(T^\text{can}_{a b'}\,\mathscr{W}^{\omega,\text{s}}_2)$ and $WF(T^\text{can}_{a b'}\,\mathcal{H}^{\text{s}}_{l,\varepsilon})$ have to be lightlike. Since $N^*\Sigma$ only contains timelike elements, $(N^*\Sigma \times N^*\Sigma)\cap WF(u)$ is indeed empty for $u\in \{T^\text{can}_{a b'}\,\mathscr{W}^{\omega,\text{s}}_2, T^\text{can}_{a b'}\,\mathcal{H}^{\text{s}}_{l,\varepsilon}\}$ and the restrictions exist.\\
In the next step, we will deduce the explicit form of those two restricted distributions where we start with $[T^\text{can}_{a b'}\,\mathscr{W}^{\omega,\text{s}}_2]_{t'=t}$. Using theorem \ref{thm_homo_iso_states} the two-point distribution of a homogeneous and isotropic state can be rewritten for $f,h \in C^\infty_0(M)$ as 
\begin{equation*}
\mathscr{W}^\omega_2(f,h) = \lim\limits_{\delta \rightarrow 0} \int\limits_M \int\limits_M \int\limits_{\mathbb{R}^3} \mathit{w}_k(t,t')\,\Exp{\text{i}\mathbf{k(x-x')}-\delta\,k} \text{d}^3\mathbf{k}\;f(q) h(q') \text{d}\mu_g(q)\text{d}\mu_g(q')\,;
\end{equation*}
to obtain this equation from (\ref{eqn_2pd_his}) one first moves the Fourier factors $\Exp{\text{i}\mathbf{k}\,\cdot}$ from the test functions to $\mathit{w}_k(t,t')$, then includes the regularizing $\Exp{-\delta k}$ term and by virtue of Fubini's theorem exchanges the $\mathbf{k}$-integration with the $\mu_g(q)$- and $\mu_g(q')$-integration. Hence, the two-point function reads
\begin{equation}
\label{eqn_2pt_his_3}
\mathscr{W}^\omega_2(q,q') =\lim\limits_{\delta \rightarrow 0} \int\limits_{\mathbb{R}^3} \mathit{w}_k(t,t') \Exp{\text{i}\mathbf{k(x-x')}-\delta\,k}\text{d}^3 \mathbf{k},
\end{equation}  
where the limit is understood in the distributional sense. As the $\mathit{w}_k(t,t')$ are polynomially bounded we can simply differentiate underneath the integral in order to obtain $\nab{a}\otimes\nab{b'}\mathscr{W}^\omega_2(q,q')$ and thus find for the ``point-split energy density''
\begin{align*}
(T^\text{can}_{0 0'}\,\mathscr{W}^{\omega}_2)(q,q')  &= \lim_{\delta\rightarrow 0} \int\limits_{\mathbb{R}^3}\frac{1}{2}\left(\partial_t\partial_{t'} \mathit{w}_k(t,t')+(k^2 a(t)^{-2}+m^2)\,\mathit{w}_k(t,t')\right)\\[-0.25cm]
			& \hspace{6cm} \times \Exp{\text{i}\mathbf{k(x-x')}-\delta\,k}\text{d}^3\mathbf{k}.
\end{align*} 
Finally, after plugging in the definition of $ \mathit{w}_k(t,t')$, cf. (\ref{eqn_mode_hi_state}), we set $t=t'$, symmetrize with respect to $\mathbf{x}$ and $\mathbf{x'}$ and obtain the desired result
\begin{equation}
\label{eqn_unrenorm_energy_density}
[(T^\text{can}_{0 0'}\,\mathscr{W}^{\omega,\text{s}}_2)(q,q')]_{t'=t} = \lim\limits_{\delta \rightarrow 0}\,\frac{1}{(2\pi)^3}\, \integ{\mathbf{R}^3}{}{\varrho_k(t)\,\Exp{-\delta k} \cos(\mathbf{k(x-x')})}{^3\mathbf{k}}.
\end{equation} 
\\ 
Next we need to find a similar mode expansion for $[T^\text{can}_{0 0'}\,\mathcal{H}^{\text{s}}_{l,\varepsilon}]_{t'=t}$. Suppose, the kernel of a distribution $u\in \mathcal{D}'(\mathbb{R}^3\times\mathbb{R}^3)$ depends on $\mathbf{x,x'}\in \mathbb{R}^3$ solely via $z:=\norm{\mathbf{x-x'}},$ i.e. the distribution is invariant under the action of the Euclidean group. Then we can write the distribution $u$ for $f,h \in C^\infty_0(\Sigma,\mathbb{R})$ as   
\begin{align}
\label{eqn_TH_distribution}
u(f,h) & = \ainteg{\mathbb{R}^3}{}{\,\ainteg{\mathbb{R}^3}{}{u(\norm{\mathbf{x-x'}})f(\mathbf{x})h(\mathbf{x'})}{^3\mathbf{x}}}{^3\mathbf{x'}}  = \ainteg{\mathbb{R}^3}{}{u(z)\,(f * h)(\mathbf{z})}{^3\mathbf{z}},
\end{align}
where $(f*h)(\mathbf{z}):= \int_{\mathbb{R}^3}f(\mathbf{x}) h(\mathbf{x-z})\text{d}^3\mathbf{x}$ denotes the convolution of $f$ and $h$. Using the convolution theorem, which states $f*h=\mathcal{F}^{-1}[\fourier{f}\cdot\overline{\fourier{\overline{h}}}]$, and again the homogeneity and isotropy of $u$, which implies $\fourier{u}= (2\pi)^3\,\mathcal{F}^{-1}[u]$, equation (\ref{eqn_TH_distribution}) turns out to be equivalent to 
\begin{equation*}
u(f,h) = \frac{1}{(2\pi)^3}\integ{\mathbb{R}^3}{}{\fourier{u}(k)\,\fourier{f}(\mathbf{k})\overline{\fourier{\overline{h}}}(\mathbf{k})}{^3\mathbf{k}}.
\end{equation*}
This expression for the distribution $u$ is of the same type as the one for the two-point distribution of a homogeneous and isotropic state, cf. (\ref{eqn_2pd_his}). Hence, by repeating the same manipulations, that led to equation (\ref{eqn_2pt_his_3}), we find for the kernel associated to $u$:
\begin{align*}
u(\mathbf{x,x'}) & = \frac{1}{(2\pi)^3}\,\lim\limits_{\delta \rightarrow 0}\int\limits_{\mathbb{R}^3} \fourier{u}(k)\,\Exp{-\delta\,k}\!\cos(\mathbf{k(x-x')})\text{d}^3 \mathbf{k},
\end{align*}
where the $\delta$-limit is understood in the distributional sense. Since the flat FRW spacetime is spatially homogeneous and isotropic, the square geodesic distance and thus $[T^\text{can}_{00'}\mathcal{H}^\text{s}_{l,\varepsilon}]_{t'=t}$, too, depend on the spatial coordinates only via $\norm{\mathbf{x-x'}}$. Therefore, we replace in the previous equation the distribution $u$ by $[T^\text{can}_{00'}\mathcal{H}^\text{s}_{l,\varepsilon}]_{t'=t}$, combine it with equation (\ref{eqn_2pt_his_3}) and the definition of the energy density, cf. (\ref{eqn_energy_density1}), and finally obtain
\begin{align*}
\state{\varrho(q)}{\omega} & = \lim\limits_{\mathbf{x'}\rightarrow \mathbf{x}}\, \lim\limits_{\delta \rightarrow 0}\, \frac{1}{(2\pi)^3} \int\limits_{\mathbb{R}^3}\left(\varrho_k(t)-\fourier{[T^\text{can}_{0 0'}\,\mathcal{H}^{\text{s}}_{l,0}]^o_{t'=t}}(t,k)-\frac{[v_1]_c}{\pi^2}\delta_0(\mathbf{k})\right)\\[-0.2cm]
			& \hspace*{5cm}\times\,\Exp{-\delta k}\cos(\mathbf{k(x-x')})\text{d}^3\mathbf{k} + C_{00},
\end{align*}
where we set $\varepsilon =0$, since we are taking the coincidence limit along spacelike directions and thus no regularization for $\mathcal{H}^\text{s}_{l,\varepsilon}$ is needed. Now we apply lemma \ref{lem_h_rho/p} and as $\omega$ is a Hadamard state the function $(\varrho_k(t)-\mathfrak{h}^\varrho(t,k))$ is of order $k^{-5}$ for large $k$ so that we can drop the $\delta$-limit and eventually take the coincidence limit, which yields
\begin{equation*}
\begin{array}{r c l}
\state{\varrho(q)}{\omega} & = & \dfrac{1}{(2\pi)^3}\,\ainteg{\mathbb{R}^3}{}{\left(\varrho_k(t)-\mathfrak{h}^\varrho(t,k)\right)}{^3\mathbf{k}}+\,C_{00}.
\end{array}
\end{equation*}
It seems that this expression exhibits an infrared divergence due to the occurring factor $\mathfrak{h}^{\varrho}_{-3}(t)\,k^{-3}$ in $\mathfrak{h}^{\varrho}(t,k)$. However, as was already explained in \cite{eltz_gott}, if one properly traces this singular factor back to its definition as the Fourier transform of the locally integrable function $\log(z)$, one can indeed obtain a finite expression $-$ the explicit calculation leading to expression (\ref{eqn_renorm_energy}) can be found in \cite{zschoche}.
\end{proof}

\begin{remark}
(i) Note that another approach to obtain a mode decomposition for the quantized energy density of a free scalar field on a spatially flat FRW spacetime can be found in the PhD theses of Schlemmer \cite{schlemmer} and Degner \cite{degner}.\\
(ii) Whenever $\mathfrak{h}^{\varrho}_{-3}(t)\equiv 0$, we can choose without loss of generality $k_0 = 0$. In this case equation (\ref{eqn_renorm_energy}) simplifies again to
\begin{equation}
\state{\varrho(q)}{\omega} =  \dfrac{1}{(2\pi)^3}\,\ainteg{\mathbb{R}^3}{}{\left(\varrho_k(t)-\mathfrak{h}^\varrho(t,k)\right)}{^3\mathbf{k}}+\,C_{00}.
\end{equation}
\end{remark}

\begin{lemma}
\label{lem_HIH_pressure}
The energy-momentum tensor on a spatially flat FRW spacetime evaluated in an HIH state is of perfect fluid type with respect to the isotropic observer $\gamma(t) = \partial_t$, i.e. $\state{T^{a b}}{\omega} = (\state{\varrho}{\omega}+\state{p}{\omega})\,(\partial_t)^a\,(\partial_t)^b- \state{p}{\omega} \,g^{a b}$.\\
In particular, the expectation value of the pressure with respect to any HIH state $\omega$ is given by 
\begin{equation}
\label{eqn_renorm_pressure}
\state{p(t)}{\omega} = -\state{\varrho(t)}{\omega}-\frac{1}{3H(t)} \dif{t}\state{\varrho(t)}{\omega}.
\end{equation}
\end{lemma}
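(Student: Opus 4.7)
The lemma has two parts: establishing the perfect-fluid structure of $\langle T_{ab}\rangle_\omega$, and then reading off the pressure from covariant conservation. I would prove them separately in that order.

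For the perfect-fluid claim, the plan is to exploit the symmetries of the construction. By hypothesis $\omega$ is homogeneous and isotropic, the FRW metric (\ref{eqn_metric_FRW}) is invariant under the Euclidean group $\mathbb{R}^{3}\rtimes SO(3)$ acting on the spatial slice, and the Hadamard parametrix $\mathcal{H}^{\mathrm{s}}_{l,\varepsilon}$ depends only on the geometry and so inherits the same invariance. Consequently the symmetrized difference $\mathscr{W}^{\omega,\mathrm{s}}_{2}(q,q')-\mathcal{H}^{\mathrm{s}}_{l,\varepsilon}(q,q')$, restricted to equal-time points in $\{t\}\times\Sigma$, depends on the spatial coordinates only through $z=\|\mathbf{x}-\mathbf{x}'\|$, as was already used in the proof of Theorem \ref{thm_renorm_energy/pressure}. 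The operator $T^{\mathrm{can}}_{ab'}-\tfrac{1}{3}g_{ab'}(P\otimes\mathds{1})$ is itself built from covariant derivatives and the metric, so the coincidence limit delivers a symmetric 2-tensor at $q$ that is invariant under spatial translations (so it depends only on $t$) and under the $SO(3)$-stabilizer of the isotropic observer $\boldsymbol{\gamma}$. The only such symmetric 2-tensors are linear combinations of $g_{ab}$ and $(\partial_{t})_{a}(\partial_{t})_{b}$, i.e.\ perfect-fluid form. Lemma \ref{thm_renorm_ambiguity_rw_spacetime} confirms that the renormalization ambiguity $C_{ab}$ shares this structure ($C_{0i}=0$, $C_{11}=C_{22}=C_{33}$), so adding it back does not spoil the form. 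Setting $\state{\varrho}{\omega}=\state{T_{ab}}{\omega}(e_{0})^{a}(e_{0})^{b}$ and $\state{p}{\omega}=\state{T_{ab}}{\omega}(e_{1})^{a}(e_{1})^{b}$ therefore yields $\state{T^{ab}}{\omega}=(\state{\varrho}{\omega}+\state{p}{\omega})(\partial_{t})^{a}(\partial_{t})^{b}-\state{p}{\omega}\,g^{ab}$.

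For the pressure formula, I would invoke covariant conservation $\nabla^{a}\state{T_{ab}}{\omega}=0$, which is one of Wald's axioms and holds for the Moretti EMT thanks to the $-\tfrac{1}{3}g_{ab'}(P\otimes\mathds{1})$ term in Definition \ref{def_quantized_EMT}; this is precisely noted in the paragraph following that definition. On a spatially flat FRW spacetime, inserting the perfect-fluid form just established and using the Christoffel symbols $\Gamma^{i}_{j0}=H\delta^{i}_{j}$, $\Gamma^{0}_{ij}=a^{2}H\delta_{ij}$ reduces the $b=0$ component of $\nabla^{a}\state{T_{ab}}{\omega}=0$ to the standard continuity equation
\begin{equation*}
\dot{\varrho}+3H(\varrho+p)=0,
\end{equation*}
and rearranging gives (\ref{eqn_renorm_pressure}). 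The $b=i$ components are automatically satisfied by homogeneity (all spatial derivatives vanish) together with $\state{T_{0i}}{\omega}=0$, so they produce no additional constraint.

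The only place where care is needed is ensuring that the mode-expanded expression for $\state{\varrho}{\omega}$ from Theorem \ref{thm_renorm_energy/pressure} is indeed compatible with covariant conservation; otherwise the pressure defined by (\ref{eqn_renorm_pressure}) would differ from the one obtained by direct computation of $\state{T_{11}}{\omega}$ through the analogue of that theorem. This consistency is what the Moretti prescription buys us: the $-\tfrac{1}{3}g_{ab'}P\otimes\mathds{1}$ modification of $T^{\mathrm{can}}_{ab'}$ is precisely the adjustment that restores conservation after the Hadamard subtraction (at the cost of the trace anomaly), and the ambiguity $C_{ab}$ in (\ref{eqn_emt_ambiguities1}) is built from covariantly conserved curvature tensors. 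Hence no inconsistency can arise, and the compact formula (\ref{eqn_renorm_pressure}) replaces the much longer direct calculation of $\state{p}{\omega}$.
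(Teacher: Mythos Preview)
Your proof is correct and reaches the same conclusion as the paper, but by a somewhat different route for the first part. The paper works directly with the mode decomposition: it applies $T^{\mathrm{can}}_{ii'}$ to the two-point function (\ref{eqn_2pt_his_3}), obtains a regularized integral $p_i^\delta(t)$, and then uses the concrete identity $\int_{\mathbb{R}^3}k_i^2\,|T_k(t)|^2\,\Exp{-\delta k}\text{d}^3\mathbf{k}=\int_{\mathbb{R}^3}k_j^2\,|T_k(t)|^2\,\Exp{-\delta k}\text{d}^3\mathbf{k}$, which holds because $T_k$ depends on $\mathbf{k}$ only through $k=|\mathbf{k}|$, to conclude $p_1^\delta=p_2^\delta=p_3^\delta$; isotropy of the parametrix and of $C_{ab}$ then carries this through renormalization. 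Your argument replaces this explicit integral identity by the abstract observation that an $SO(3)$-invariant symmetric 2-tensor at a point must be a combination of $g_{ab}$ and $(\partial_t)_a(\partial_t)_b$. Both arguments rest on the same underlying isotropy, but yours handles the off-diagonal vanishing and the equality of the spatial diagonals in one stroke and avoids manipulating the regularized integrals, while the paper's computation has the advantage of staying close to the mode expressions actually used elsewhere (e.g.\ in Theorem~\ref{thm_renorm_energy/pressure}). For the second part you and the paper agree: once the perfect-fluid form is in place, (\ref{eqn_renorm_pressure}) is the $b=0$ component of $\nabla^a\state{T_{ab}}{\omega}=0$, which holds for the Moretti EMT by construction.
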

\begin{proof}
The proof is given in the appendix.
\end{proof}

\begin{remark}
The previous lemma is a necessary requirement in order to use $\state{T^{a b}}{\omega}$ as a source for the semi-classical Friedmann equations and can thus be seen as an a posteriori justification for the restriction to HIH states.     
\end{remark}

\section{The Chaplygin Gas Equation of State}
The major advantage of the CG EoS is that it entails in combination with the classical Friedmann equations (\ref{eqn_friedmann_equations}) a behavior for the scale factor of an FRW spacetime which is surprisingly close to the one observed in nature $-$ namely, it incorporates a Big Bang, a (dark) matter dominated phase and a dark energy era. In the previous section we described how to (covariantly) define the EMT for the quantized scalar field in an HIH state on \emph{all} spatially flat FRW spacetimes. In particular, for such states the EMT is of perfect fluid type, which makes it suitable as a source term for the semi-classical Friedmann equations. Usually, if one wants to solve these equations with the quantized EMT as the source one has to seek a scale factor $a(t)$ and mode functions $T_k(t)$ of an HIH state $\omega$ such that the temporal KG equation (\ref{eqn_tkge}), the normalization condition (\ref{eqn_normkge}) and the semi-classical Friedmann equations, where $\state{\varrho}{\omega}$ and $\state{p}{\omega}$ have to be replaced by their explicit expressions as derived in section \ref{sec_EMT_HIH_state}, are simultaneously fulfilled $-$ note that from this perspective the renormalization ambiguities play a crucial role since they determine the highest order of the occurring time derivatives of the scale factor and thus the ``nature'' of the equations of motion for $a(t)$ and $T_k(t)$, cf. \cite{dap_fred_pin, eltz_gott, pinamonti2}. \\
In our case the situation is however somewhat different. We are already \emph{given} the scale factor, namely $a_\text{cg}$, we already know which EoS is needed to obtain this scale factor from the (semi-classical) Friedmann equations and we have to \emph{check} whether there are HIH states for the quantized scalar field that obey such an EoS on the given spacetime or not $-$ in particular, we will view the renormalization ambiguities not as a possibility to modify the structure of the equations of motion but as a freedom in the definition of energy density and pressure itself.\\ 
Therefore, to answer the question whether the quantized scalar field is a matter model that leads to a solution which consistently solves the semi-classical Friedmann equations and the CG EoS one ought to proceed as follows:\\
\\
\begin{tabular}{l}
$\emph{(1)}\,$ \parbox[t]{10.7cm}{assume that there are HIH states $\omega$ which entail the CG EoS, $\state{p}{\omega}=-A\,\state{\varrho}{\omega}^{-1},$ $A>0,$ and solve the semi-classical version of the Friedmann equations (\ref{eqn_friedmann_equations})\\
	 $\rightarrow$ determine the scale factor $a(t)$ compatible with both assumptions}\\
\\
$\emph{(2)}\,$ \parbox[t]{10.7cm}{solve the KG equation (\ref{eqn_tkge}) and the normalization condition (\ref{eqn_normkge}) with $a(t)$ taken from step $\emph{(1)}$ $\rightarrow$ this yields the mode functions $T_k(t)$}\\ 
\\
$\emph{(3)}\,$ \parbox[t]{10.7cm}{construct from $T_k(t)\hspace{-0.03cm}$ all HIH states $\omega\hspace{-0.03cm}$ and the associated energy density and pressure, cf. (\ref{eqn_renorm_energy}) and (\ref{eqn_renorm_pressure}), respectively}\\
\\
$\emph{(4)}\,$ \parbox[t]{10.7cm}{identify those HIH states that are consistent with the CG EoS, i.e. find an HIH state $\omega$ such that $\state{p}{\omega} = -A\,\state{\varrho}{\omega}^{-1}, \, A>0$, or show that there are no such states at all.}\\
\\
\end{tabular}
Note that by accomplishing step $\emph{(4)}$ one has either found a solution to the semi-classical Friedmann equations or has constructed a contradiction which thereby shows that the quantized minimally coupled scalar field is not a matter model which allows for a CG EoS.\\[0.2cm]
While step $\emph{(1)}$ is easily accomplished, as the solution $a(t)=a_\text{cg}(t)$ was given by equation (\ref{eqn_cg_scalefactor}) above, step $\emph{(2)}$ poses a serious problem. Although the assumption of a homogeneous and isotropic state simplifies the matter a lot, the temporal KG equation  for this particular scale factor $a_\text{cg}(t)$ is not analytically solvable.\\
For this reason, one could first study the existence of HIH states consistent with the CG EoS on less complex spacetimes. On Minkowski spacetime, for instance, it is straight forward to check that in an HIH state the energy density has to be constant in time. Hence, an equation of state $\state{p(t)}{\omega} = f(\state{\varrho}{\omega})$ is only possible if $\state{p(t)}{\omega}$ is time-independent as well. Furthermore, for the Minkowski vacuum state $\omega_0$ we have by Lorentz invariance $c = -\state{p}{\omega_0} = \state{\varrho}{\omega_0}$, where $c$ is an arbitrary constant stemming from the renormalization ambiguities. That is, since the vacuum state's energy density and pressure are the lower limits for the energy density and pressure in any HIH state, we can conclude that a CG EoS on Minkowski spacetime is only possible if the energy density as well as the pressure are constant and $c$ is chosen to be positive $-$ if the renormalization is accomplished via normal ordering, we have $c=0$ and no CG EoS can be realized. In order to examine the influence of gravity to the existence of HIH states entailing the CG EoS we are going to consider curved backgrounds with non-positive, constant scalar curvature, $R=\text{const.}$ We chose those spacetimes basically for two reasons. The first one is that, at least for a certain mass parameter of the scalar field, these spacetimes allow for an explicit solution of the temporal KG equation. This solution additionally induces a Hadamard state, which will become the starting point of our analysis on those spacetimes. On the other hand the CG scale factor $a_\text{cg}$ converges with respect to the cosmological time to a certain de Sitter space exponentially fast. As the de Sitter space satisfies the condition $R=\text{const.}$ one might hope to obtain some knowledge about the late time behavior of the energy density and pressure by the study of the respective late time behavior in the $R=\text{const.}$-case. Note that at least for a self-interacting scalar field theory on de Sitter space, and presumably also on spacetimes asymptotically approaching de Sitter space exponentially fast, every generic state converges to the Bunch-Davies state. This is known as the ``quantum cosmic no hair theorem/conjecture,'' cf. \cite{hollands_1, hollands_2, mar_mor}. \\[0.2cm]
Since the answer to the question of the existence of HIH states inducing the  CG EoS requires the prior knowledge of all HIH states on a given FRW spacetime, we first have to describe an algorithm how those can be obtained. We divide this task into two parts. In the first part, subsection \ref{sec_Homogeneous and Isotropic Hadamard States}, we characterize the transformation behavior between HIH states in terms of a certain function class. In the second step, subsection \ref{sec_Construction_of_a_Hadamard_State_on_certain_spacetimes}, we present a way how to obtain a (particularly simple) HIH state on certain FRW spacetimes in the first place. And finally, in subsection \ref{sec_NoGoTheorem}, we accomplish the analysis announced above for the FRW spacetimes with constant scalar curvature $R$ and the scalar field of mass $m= \sqrt{-\frac{1}{6}R}$, show that there exists exactly one state which yields the CG EoS $-$ in this state, the energy density and pressure are constant similar to the Minkowski case $-$ and comment afterwards on the situation for the actual case of interest, the CG scale factor.

\subsection{Homogeneous and isotropic Hadamard states}
\label{sec_Homogeneous and Isotropic Hadamard States}
Before we can actually characterize every HIH state assuming one of those is already given we need to introduce the concept of essentially rapidly decaying functions. So, let us start with this section's crucial definition.
\begin{definition}
\label{def_erd}
Let $\Omega \subseteq \mathbb{R}^n$, $n\in \mathbb{N},$ and $\text{d}\mu$ the associated Lebesgue measure. A function $f: \Omega \rightarrow \mathbb{C}$ is called \emph{essentially rapidly decaying} (ERD) on $\Omega$ if and only if, for all $m\in \mathbb{N}_0$, the functions $x \mapsto \abs{x}^m\,f(x),$ $x\in \Omega$, are in $L^1(\Omega,\text{d}\mu)$. The set of essentially rapidly decaying functions on $\Omega$ will be denoted by $\text{ERD}(\Omega)$.
\end{definition}

\begin{remark}
Note that all rapidly decaying functions\footnote{A function $f$ is rapidly decaying if and only if for all $m\in \mathbb{N}_0$ there is a $C_m>0$ such that $\abs{f(k)} < \frac{C_m}{1+\abs{k}^m}$.} are ERD. The opposite is however not true as can be seen by the ``comb function'' $f_\text{C}: \mathbb{R}^+_0 \rightarrow \mathbb{R},$
\begin{equation}
\label{eqn_example_erd}
x \mapsto \sum_{n=1}^\infty\,\Exp{\frac{n}{2}}\, \chi_{[n-\Exp{-n},n+\Exp{-n}\!]}(x),
\end{equation}
with $\chi_I$ being the characteristic function. Although this function is unbounded it is still ERD since the width of each peak decreases sufficiently fast.
\end{remark}
Next we will prove some basic yet needed properties for functions of essentially rapid decay.
\begin{lemma}
\label{lem_prop_erd_1}
Let $\Omega \subseteq \mathbb{R}^n$, $n\in \mathbb{N}$, and $\text{d}\mu$ the Lebesgue measure. Then it holds:
\begin{longtable}{ll}
($\mathit{1}$) & \parbox[t]{10.5cm}{$(f: \Omega \rightarrow \mathbb{C})\,\in \text{ERD}(\Omega)$ $\!\quad \Longleftrightarrow\! \quad \forall m\in \mathbb{N}_0:\quad(1+\abs{\,\cdot\,})^m f \in L^1(\Omega,\text{d}\mu)$ } \\
\\[-0.15cm]
($\mathit{2}$) &\parbox[t]{10.5cm}{$(f: \Omega \rightarrow \mathbb{R}^+_0) \in \text{ERD}(\Omega) \quad \Longleftrightarrow \quad$ All moments of $f$ exist.}\\ 
\\[-0.15cm]
($\mathit{3}$) &\parbox[t]{10.5cm}{Let $p$ be a polynomially bounded function and $f\in \text{ERD}(\Omega)$ then $(p\cdot f)\in \text{ERD}(\Omega)$.}\\
\\[-0.15cm]
($\mathit{4}$) &\parbox[t]{10.5cm}{$\text{ERD}(\Omega)$ is a vector space.\\[0.1cm]
				 If $f,g\geq 0$ and $(f+g)\in \text{ERD}(\Omega)$ then $f$ and $g$ are ERD.}\\
\\[-0.15cm]
($\mathit{5}$) & \parbox[t]{10.5cm}{Let $f,g:\Omega \rightarrow \mathbb{C},\inf\limits_{x\in \Omega}\abs{f(x)} > 0 \;\text{and} \; f\!\cdot\! g \in \text{ERD}(\Omega) $ then $ g\! \in\!\text{ERD}(\Omega)$.}\\
\\[-0.15cm]
($\mathit{6}$) &\parbox[t]{10.5cm}{If $f^p\in \text{ERD}(\Omega)$ for some $p\geq 1$ then so is $f$.}
\end{longtable}
\end{lemma}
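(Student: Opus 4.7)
The plan is to dispatch (1)--(5) as essentially bookkeeping and to spend the real effort on (6).

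For (1), I would expand $(1+|x|)^m$ by the binomial theorem as a finite sum of $|x|^k$, $0 \le k \le m$, so that integrability of $(1+|x|)^m |f|$ follows from each $|x|^k |f| \in L^1$, while the converse direction uses the pointwise bound $|x|^m \le (1+|x|)^m$. Item (2) is a direct restatement of the defining condition for non-negative $f$, since the $m$-th moment is exactly $\int_\Omega |x|^m f\,d\mu$. For (3), a polynomial bound $|p(x)| \le C(1+|x|)^N$ combined with (1) immediately gives $\int |x|^m |pf|\,d\mu < \infty$ for every $m$. The linearity half of (4) is the triangle inequality $|x|^m|\alpha f + \beta g| \le |\alpha|\,|x|^m|f| + |\beta|\,|x|^m|g|$; the domination half uses $0 \le f \le f+g$ to conclude $\int |x|^m f\,d\mu \le \int |x|^m (f+g)\,d\mu < \infty$, and analogously for $g$. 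Property (5) is the estimate $|g(x)| \le (\inf_\Omega|f|)^{-1}|(fg)(x)|$ multiplied by $|x|^m$.

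The heart of the lemma is (6), and the subtlety is that on the set $\{|f|<1\}$ one has $|f|^p \le |f|$, so $|f|^p \in \text{ERD}(\Omega)$ does \emph{not} pointwise dominate $|f|$. My plan is to localize via a dyadic decomposition and H\"older's inequality. First, for any ball $B_R$, H\"older yields $\int_{B_R}|f|\,d\mu \le \bigl(\int_{B_R}|f|^p\,d\mu\bigr)^{1/p}\mu(B_R)^{1-1/p} < \infty$, so $|x|^m|f|$ is integrable on every bounded set. On the dyadic annulus $A_j := \{2^jR < |x| \le 2^{j+1}R\}$, $j \ge 0$, the same H\"older estimate gives
\[
\int_{A_j}|x|^m|f|\,d\mu \le (2^{j+1}R)^m \bigl(\textstyle\int_{A_j}|f|^p\,d\mu\bigr)^{1/p}\mu(A_j)^{1-1/p}.
\]
Since $|x| \ge 2^jR$ on $A_j$, for every $M \ge 0$ one has $\int_{A_j}|f|^p\,d\mu \le (2^jR)^{-M}\int_\Omega |x|^M|f|^p\,d\mu$, which is finite by the ERD hypothesis. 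Combined with $\mu(A_j) \le c_n(2^{j+1}R)^n$, this yields a bound of the form $\int_{A_j}|x|^m|f|\,d\mu \le C\,(2^jR)^{m + n(1 - 1/p) - M/p}$. Choosing $M$ large enough that the exponent is strictly negative makes the geometric series $\sum_{j\ge 0}$ convergent, and together with the bounded-region contribution this gives $|x|^m|f| \in L^1(\Omega)$ for every $m$.

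The only genuine obstacle is precisely this failure of pointwise domination in (6); the annular H\"older argument converts the infinitely many moment bounds on $|f|^p$ into arbitrarily fast polynomial decay \emph{on average}, which is exactly enough to compensate for the lost power in H\"older's inequality.
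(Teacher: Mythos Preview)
Your treatment of (1)--(5) matches the paper's essentially verbatim. For (6), both arguments are correct but the routes differ. The paper applies H\"older's inequality once, globally, with a polynomial weight: writing
\[
\|(1+|\cdot|)^m f\|_1 = \|(1+|\cdot|)^{m+l} f \cdot (1+|\cdot|)^{-l}\|_1 \le \|(1+|\cdot|)^{m+l} f\|_p\,\|(1+|\cdot|)^{-l}\|_{p/(p-1)},
\]
the first factor is finite because $f^p$ is ERD (via (1)), and the second is finite once $l$ is chosen large enough that $(1+|\cdot|)^{-lp/(p-1)}$ is integrable on $\mathbb{R}^n$. This is a one-line argument. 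Your dyadic-annulus approach is perfectly sound and makes the same trade-off---sacrificing moments of $|f|^p$ to pay for the volume growth lost in H\"older---but it expends more machinery (localization, geometric summation) to reach the same conclusion. The paper's version is shorter; yours has the minor advantage of making the ``arbitrarily fast decay on average'' mechanism visually explicit on each shell, which is closer in spirit to how such estimates are often done in harmonic analysis.
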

\begin{proof}
The proof is given in the appendix.
\end{proof}

\begin{remark} 
If $f$ is additionally assumed to be polynomially bounded then the converse statement of claim $\emph{(6)}$ is also true for any $p\in [1,\infty)$, i.e. if $f\in \text{ERD}(\Omega)$ and polynomially bounded so is $f^p$. The assumed polynomial boundedness is however necessary. For instance, one can easily check that although the function $f_\text{C}$, cf. (\ref{eqn_example_erd}), is ERD, this does not hold for $f^2$.
\end{remark}
Having collected the required basic properties of ERD functions we can now state this section's central technical result, which will be used afterwards for the characterization of Hadamard states on FRW spacetimes. It connects the smoothness of the Fourier transform to ``decaying properties'' and is a generalization of theorem 2.3.1 in \cite{lukacs}.
\begin{lemma}
\label{lem_smoothness_moments}
Let $\varphi:\mathbb{R}^n \rightarrow \mathbb{R}$ be a non-negative polynomially bounded function and $\omega(\mathbf{z}):=\lim\limits_{\varepsilon \rightarrow 0}\mathcal{F}_\varepsilon[\varphi](\mathbf{z})$ its regularized Fourier transform, i.e. 
\begin{equation}
\label{def_reg_fourier}
\mathcal{F}_\varepsilon[\varphi](\mathbf{z}):= \integ{\mathbb{R}^n}{}{\varphi(\mathbf{k}) \text{e}^{\text{i}\,\mathbf{k} \mathbf{z} -\varepsilon\,k}}{^{\,n}\mathbf{k}}
\end{equation}
with $k:=\norm{\mathbf{k}}$. Then $\omega \in C^\infty(\mathbb{R}^n)$ if and only if $\varphi$ is ERD. 
\end{lemma}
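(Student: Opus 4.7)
The plan is to prove the two implications separately; the forward direction is differentiation under the integral, the backward direction a Fatou/finite-difference argument in the spirit of Lukacs. If $\varphi\in\text{ERD}(\mathbb{R}^n)$, then $\varphi\in L^1(\mathbb{R}^n)$ by part $\mathit{(1)}$ of lemma \ref{lem_prop_erd_1}, so dominated convergence allows the regulator to be removed and gives $\omega(\mathbf{z})=\int_{\mathbb{R}^n}\varphi(\mathbf{k})\,\text{e}^{\text{i}\mathbf{k}\mathbf{z}}\,\text{d}^n\mathbf{k}$. For every multi-index $\alpha$ the $L^1$-majorant $\abs{\mathbf{k}^\alpha\varphi(\mathbf{k})}\leq(1+\abs{\mathbf{k}})^{\abs{\alpha}}\varphi(\mathbf{k})$, again provided by the same lemma, justifies differentiating under the integral sign together with the continuity of the resulting derivative, and hence $\omega\in C^\infty$.

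For the converse I would first deduce $\varphi\in L^1$ from the $C^\infty$ assumption on $\omega$. Since $\varphi\geq 0$, the integrand $\varphi(\mathbf{k})\,\text{e}^{-\varepsilon k}$ increases monotonically as $\varepsilon\downarrow 0$, so by monotone convergence
\begin{equation*}
\int_{\mathbb{R}^n}\varphi(\mathbf{k})\,\text{d}^n\mathbf{k}=\lim_{\varepsilon\downarrow 0}\mathcal{F}_\varepsilon[\varphi](\mathbf{0})=\omega(\mathbf{0})<\infty.
\end{equation*}
Dominated convergence then identifies $\omega$ as the ordinary Fourier transform of the non-negative $L^1$ density $\varphi$. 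To extract higher moments I would iterate a coordinate-wise central-difference argument. Fixing $i\in\{1,\ldots,n\}$ and $m\in\mathbb{N}$, the identity $(\text{e}^{\text{i}hk_i/2}-\text{e}^{-\text{i}hk_i/2})^{2m}=(-1)^m(2\sin(hk_i/2))^{2m}$ gives
\begin{equation*}
(-1)^m\sum_{j=0}^{2m}\binom{2m}{j}(-1)^{j}\,\omega\bigl((j-m)h\,\mathbf{e}_i\bigr)=\int_{\mathbb{R}^n}\varphi(\mathbf{k})\bigl(2\sin(hk_i/2)\bigr)^{2m}\,\text{d}^n\mathbf{k}.
\end{equation*}
Dividing by $h^{2m}$ and letting $h\to 0$, the left hand side tends to the finite number $(-1)^m\partial_{z_i}^{2m}\omega(\mathbf{0})$ because $\omega\in C^{2m}$, whereas the integrand on the right is non-negative and converges pointwise to $k_i^{2m}\varphi(\mathbf{k})$. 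Fatou's lemma therefore forces $k_i^{2m}\varphi\in L^1$ for every $i$ and every $m$, and the elementary estimate $\abs{\mathbf{k}}^p\leq C_{p,n}\bigl(1+\sum_{i=1}^n k_i^{2\lceil p/2\rceil}\bigr)$ promotes this to $\abs{\mathbf{k}}^p\varphi\in L^1$ for all $p\in\mathbb{N}_0$, which is the ERD property.

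The main obstacle I anticipate is precisely this Fatou step. The quotient $(2\sin(hk_i/2))^{2m}/h^{2m}$ is bounded pointwise by $k_i^{2m}$, but not by any $h$-independent $L^1$-majorant, so dominated convergence is unavailable and one must make do with the one-sided inequality supplied by Fatou. This is also the point at which the hypothesis $\varphi\geq 0$ enters in an essential way, lifting the one-dimensional characteristic-function statement (theorem 2.3.1 in \cite{lukacs}) to the $n$-dimensional polynomially bounded density considered here.
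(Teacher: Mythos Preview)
Your proposal is correct and follows essentially the same balanced-difference (Lukacs) strategy as the paper. The only notable technical variation is that you first establish $\varphi\in L^1$ via $\omega(\mathbf{0})=\lim_{\varepsilon\downarrow 0}\int\varphi\,\text{e}^{-\varepsilon k}$ and monotone convergence, then invoke Fatou's lemma to pass the $h\to 0$ limit inside the moment integral; the paper instead keeps the $\varepsilon$-regulator, restricts the integral to a ball $B_R(0)$ (where both the $\varepsilon$- and $t$-limits are elementary), and finally sends $R\to\infty$ using that the truncated integrals are increasing in $R$ and uniformly bounded by $\abs{\nabla^{2s}_{\mathbf{e}}\omega(0)}$. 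Both routes exploit $\varphi\geq 0$ at precisely the same place and yield the same conclusion.
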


\begin{proof}
The proof is given in the appendix.
\end{proof}

With these technical tools at hand we will now turn to this section's main objective, namely the characterization of all HIH states on a spatially flat FRW spacetime $-$ the result presented here was obtained in joint work with Benjamin Eltzner\footnote{Private communication, permission for publication granted. Also see \cite{eltzner_phd}.} and Nicola Pinamonti\footnote{Private communication, permission for publication granted. Note that in \cite{pinamonti2} it was claimed that the function $\alpha(k)$, as defined in theorem \ref{thm_characterisation_hihs}, has to be rapidly decaying. This is however wrong since the correct function class turns out to be rather $\text{ERD}(\mathbb{R}^3)$.}.
\begin{theorem}
\label{thm_characterisation_hihs}
Let $(M,g)$ be a spatially flat FRW spacetime with scale factor $a\in C^\infty(\tilde{I},\mathbb{R}^+)$, $\tilde{I}\subset \mathbb{R}$ being an open interval. Furthermore, suppose that $\omega$ is a quasifree homogeneous and isotropic state on $(M,g)$ and $\omega_0$ a pure HIH state $-$ the associated $k$-polynomially bounded mode functions of $\omega_0$ are denoted by $q_k(t)$ and the two-point function by $\mathscr{W}^{\omega_0}_2$, cf. theorem \ref{thm_homo_iso_states}.\\
Then $\omega$ is a Hadamard state if and only if its associated two-point function $\mathscr{W}_2^{\omega}$ can be written as 
\begin{align}
\mathscr{W}_2^{\omega}(q,q') & =\lim\limits_{\varepsilon \rightarrow 0}\, \frac{1}{(2\pi)^3}\, \int\limits_{\mathbb{R}^3}\left(\Xi(k)\,T_k(t)\bar{T}_k(t')+\left(1+\Xi(k)\right)\bar{T}_k(t)T_k(t')\right)\nonumber\\[-0.3cm]
&\hspace{7cm} \times \Exp{\text{\emph{i}}\,\mathbf{k}(\mathbf{x}-\mathbf{x}')-\varepsilon k}\text{d}^3\mathbf{k}\nonumber\\[0.1cm]
\label{eqn_characterisation_hihs_2pf}
\text{with} & \quad T_k(t):= \sqrt{1+\alpha(k)^2}\,q_k(t)+\alpha(k)\,\Exp{-\text{\emph{i}}\, \psi(k)}\,\bar{q}_k(t),
\end{align} 
where $\Xi, \alpha \in \text{ERD}(\mathbb{R}^+_0)$ are real-valued, non-negative as well as polynomially bounded and $\psi$ is an integrable real-valued function.
\end{theorem}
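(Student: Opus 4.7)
My first step is to invoke Remark~\ref{rem_bogoliubov_trafo}, writing every quasifree homogeneous and isotropic $\omega$ in the Bogoliubov form $T_k = \alpha_B(k) q_k + \beta_B(k) \bar{q}_k$ with $|\alpha_B|^2 - |\beta_B|^2 = 1$. Since the mode kernel $w_k(t,t')$ depends on $T_k$ only through the bilinears $T_k\bar T_k'$ and $\bar T_k T_k'$, an overall phase of $T_k$ is irrelevant and one may parametrize $\alpha_B = \sqrt{1 + \alpha(k)^2}$, $\beta_B = \alpha(k)\, e^{-i\psi(k)}$ with $\alpha \geq 0$ and $\psi$ real-valued; polynomial boundedness of $T_k$ (Theorem~\ref{thm_homo_iso_states}) then amounts to polynomial boundedness of $\alpha$. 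Because both $\omega$ and $\omega_0$ exhibit the same singular part (\ref{eqn_hadamard_prop_1})--(\ref{eqn_hadamard_prop_2}), the Hadamard property of $\omega$ is equivalent to smoothness of $\Delta\mathscr{W}_2 := \mathscr{W}^\omega_2 - \mathscr{W}^{\omega_0}_2$ on $M\times M$, and this is what I will analyse.

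\textbf{Explicit form of $\Delta\mathscr{W}_2$ and the easy direction.} A direct expansion yields
\begin{equation*}
(2\pi)^{3}\Delta w_k(t,t') = f_1(k)\bigl(q_k(t)\bar q_k(t') + \bar q_k(t) q_k(t')\bigr) + f_2(k)\bigl(e^{i\psi(k)}q_k(t) q_k(t') + e^{-i\psi(k)}\bar q_k(t)\bar q_k(t')\bigr),
\end{equation*}
with $f_1 := \Xi + \alpha^2 + 2\Xi\alpha^2 \geq 0$ and $f_2 := (1+2\Xi)\alpha\sqrt{1+\alpha^2} \geq 0$. If $\alpha,\Xi\in\text{ERD}(\mathbb{R}^+_0)$ are polynomially bounded, Lemma~\ref{lem_prop_erd_1}(3)--(4) shows $f_1,f_2\in\text{ERD}(\mathbb{R}^+_0)$, and $|e^{\pm i\psi}|=1$ keeps $f_2 e^{\pm i\psi}$ absolutely ERD. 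Combined with polynomial boundedness in $k$ of $q_k$ and all its $t$-derivatives (Theorem~\ref{thm_homo_iso_states}), every spacetime-derivative of the integrand of $\Delta\mathscr{W}_2$ is majorized by $(1+k)^N(f_1+f_2)$, which is integrable in $\mathbf{k}$. Differentiation under the integral thus gives $\Delta\mathscr{W}_2\in C^\infty(M\times M)$, i.e.\ $\omega$ is Hadamard.

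\textbf{Necessity --- the main obstacle.} For the converse I would start from $\Delta\mathscr{W}_2\in C^\infty$ and restrict to $t=t'$ by the same wavefront-set pullback argument used in the proof of Theorem~\ref{thm_renorm_energy/pressure}. Writing $\mathbf{z} := \mathbf{x}-\mathbf{x}'$, the spatial Fourier symbol of the restriction is $\tilde\Delta(k,t) = 2 f_1(k)|q_k(t)|^2 + 2 f_2(k)\,\mathrm{Re}\bigl(e^{i\psi(k)}q_k(t)^2\bigr)$, and only the first summand is manifestly non-negative. The main obstacle is that Lemma~\ref{lem_smoothness_moments} requires a non-negative symbol, so the oscillating $f_2$-term must first be neutralized. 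My plan is to time-smear against a non-negative $g\in C^\infty_0(\tilde{I})$ with open support and to study $\omega_g(\mathbf{z}) := \int g(t)\,\Delta\mathscr{W}_2(t,\mathbf{0};t,\mathbf{z})\,dt\in C^\infty(\mathbb{R}^3)$, whose Fourier symbol is $2 f_1(k) A_g(k) + 2 f_2(k)\,\mathrm{Re}(e^{i\psi(k)} B_g(k))$ with $A_g(k) := \int g\,|q_k|^2\,dt$ and $B_g(k):=\int g\, q_k^2\, dt$. The Hadamard property of $\omega_0$ now enters decisively through a WKB-type large-$k$ expansion of the schematic form $q_k(t)\sim (2\omega_k(t) a(t)^3)^{-1/2}\exp(-i\int^t\omega_k(s)\,ds)$: this gives $A_g(k) = c_g/k + O(k^{-2})$ with $c_g := \int g(t)/(2a(t)^2)\,dt > 0$, while iterated integration by parts driven by the fast phase of $q_k^2$ (legitimate because $g\in C^\infty_0$) yields $B_g(k)=O(k^{-N})$ for every $N$. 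Consequently $f_2\cdot B_g$ is rapidly decaying, its Fourier transform is smooth, and after subtracting it from $\omega_g$ one obtains the smooth Fourier transform of the non-negative polynomially bounded function $2 f_1(k) A_g(k)$; Lemma~\ref{lem_smoothness_moments} then gives $f_1 A_g\in\text{ERD}$. Because $(1+k)A_g(k)$ is continuous, positive and converges to $c_g>0$, a uniform lower bound $A_g(k)\geq C/(1+k)$ holds, and the crude majorization $(1+k)^m f_1\leq C^{-1}(1+k)^{m+1}(f_1 A_g)$ upgrades this to $f_1\in\text{ERD}$.

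\textbf{Extraction of $\alpha,\Xi,\psi$.} From $f_1\geq\alpha^2\geq 0$ and Lemma~\ref{lem_prop_erd_1}(4),(6) I conclude $\alpha\in\text{ERD}$; then $\Xi(1+2\alpha^2) = f_1-\alpha^2 \in\text{ERD}$, and $1+2\alpha^2\geq 1$ together with Lemma~\ref{lem_prop_erd_1}(5) gives $\Xi\in\text{ERD}$. The phase $\psi$ only appears via $e^{i\psi(k)}$ and only where $\alpha(k)>0$, and is inherited as a real-valued function from the polynomial boundedness of $\beta_B$. I expect the WKB asymptotics and the integration-by-parts control on $B_g(k)$ in paragraph three to be the technically hardest step, since it is precisely this separation of the oscillatory from the non-oscillatory contribution that reduces the problem to the positivity-based Lemma~\ref{lem_smoothness_moments}.
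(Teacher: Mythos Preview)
Your sufficiency argument is fine and, in fact, more explicit than the paper's (which treats that direction as essentially obvious).

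For necessity, your route is genuinely different from the paper's and has a real gap at the point you yourself flag. The paper does \emph{not} attempt to control the oscillating $f_2$-contribution on the given spacetime. Instead it deforms the scale factor so that $a\equiv 1$ for $t<t_*$, extends $q_k$ and $T_k$ through the deformation, and introduces the explicit Minkowski vacuum modes $v_k(t)=(2\omega_k)^{-1/2}e^{i\omega_k t}$. Both $q_k$ and $T_k$ are then Bogoliubov-related to $v_k$. The key algebraic trick is to apply the operator $\mathscr{E}=\partial_t\otimes\partial_{t'}+a(t)^{-1}a(t')^{-1}\sum_i\partial_i\otimes\partial_{i'}+m^2$ to $\mathscr{W}_2^{*,\omega}-\mathscr{W}_2^{*,\omega_\infty}$ and restrict to $t=t'<t_*$: since $\mathscr{E}[v_k(t)v_k(t')]_{t'=t}=(-\omega_k^2+\omega_k^2)\cdot\ldots=0$ exactly, the cross-terms are annihilated and one is left with the manifestly non-negative symbol $\omega_k\bigl(\Xi+2\Xi|\delta_T|^2+|\delta_T|^2\bigr)$, to which Lemma~\ref{lem_smoothness_moments} applies directly. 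The ERD property of the Bogoliubov coefficient $\mu$ between $\omega$ and $\omega_0$ then follows from $\mu=\gamma_T\delta_q-\gamma_q\delta_T$ and Lemma~\ref{lem_prop_erd_1}(3),(4).

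Your alternative---time-smearing plus non-stationary phase---needs $B_g(k)=\int g\,q_k^2\,dt=O(k^{-N})$ for every $N$, and this is where the argument is incomplete. You invoke a ``WKB-type large-$k$ expansion'' of $q_k$, but the Hadamard property of $\omega_0$ does not by itself furnish such an expansion with controlled remainders to all orders; what it gives you is smoothness of two-point-function \emph{differences}, not pointwise asymptotics of the mode functions. Extracting the statement you need amounts to a converse of Olbermann's Theorem~\ref{thm_olbermann_2} (Hadamard $\Rightarrow$ adiabatic of every order in the sense of $|\beta_n(k)|=O(k^{-2n-2})$), which is true in the Junker--Schrohe framework but is a substantial result in its own right and is nowhere established in the paper. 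By contrast, the deformation argument sidesteps the issue completely: on the Minkowski patch the modes are explicit plane waves and the oscillating term is killed \emph{exactly} by $\mathscr{E}$, not asymptotically. Your lower bound $A_g(k)\ge C/(1+k)$ is on firmer ground, since $|q_k|^2\sim(2ka^2)^{-1}$ follows from matching the leading Hadamard parametrix mode, but this alone does not rescue the separation step.
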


\begin{proof}
By theorem \ref{thm_homo_iso_states} the two-point function of every homogeneous and isotropic quasifree state on an FRW spacetime is of the form 
\begin{align*}
\mathscr{W}_2^{\omega}(q,q')  = & \lim_{\varepsilon \rightarrow 0} \frac{1}{(2\pi)^3}\, \int\limits_{\mathbb{R}^3}\left(\Xi(k)\,T_k(t)\bar{T}_k(t')+\left(1+\Xi(k)\right)\bar{T}_k(t)T_k(t')\right)\\[-0.3cm]
		& \hspace{6.8cm} \times \Exp{\text{i}\,\mathbf{k}(\mathbf{x}-\mathbf{x}') -\varepsilon\,k} \text{d}^3\mathbf{k}
\end{align*}
with $T_k(t)$ being a solution of the temporal KG equation (\ref{eqn_tkge}) and the normalization condition (\ref{eqn_normkge}). If there is another solution of these two equations, in our case $q_k$, $T_k$ can be expressed in terms of $q_k$ by a Bogoliubov transformation, cf. remark \ref{rem_bogoliubov_trafo}, namely we have
\begin{equation}
\label{eqn_proof_bogoliubov_trafo_chara_his}
T_k(t) = \lambda(k)\,q_k(t) + \mu(k)\,\bar{q}_k(t).
\end{equation}
Since $q_k$ represents by definition a pure Hadamard state, it is on the one hand polynomially bounded in $k$ and on the other hand of order $k^{-\frac{1}{2}}$ for large $k$ $-$ the latter property follows from the observation that the two-point function's mode decomposition, which is basically given by $w_k(t,t') = (2\pi)^{-3}\,\bar{q}_k(t)q_k(t')$, cf. (\ref{eqn_2pt_his_3}), has to be of the same leading order as the mode decomposition of the Hadamard parametrix, which can be checked to be of order $k^{-1}$ for large $k$. Therefore, since $T_k$ is polynomially bounded in $k$ as well, $\mu(k)$ and $\lambda(k)$ have to be polynomially bounded in $k$, too.\\[0.3cm]
Now, let $t^*\in \tilde{I}$ be fixed. Since $\tilde{I}$ is open there exists a $t_*\in \tilde{I}$ such that $t_*<t^*$ and we can define the scale factor $a^* \in C^\infty(\tilde{I},\mathbb{R}^+)$ by 
\begin{equation}
a^*(t) = \left\{
\begin{array}{c l c}
a(t)		 & \text{for} & t^*<t\\[0.1cm]
a_\text{i}(t) & \text{for} & t_*\leq t \leq t^*\\[0.1cm]
1 		 & \text{for} & t<t_*,
\end{array}
\right.
\end{equation}
where $a_\text{i}$ is a function smoothly interpolating between the values $1$ and $a(t^*)$. Obviously, our initial solutions $T_k$ and $q_k$ also have to solve the temporal KG equation with the scale factor $a^*$ for $t > t^*$. Therefore, each of them extends to a solution of the ``deformed scale factor KG equation with $a^*$'' for all $t \in \tilde{I}$, in the following denoted by $T^*_k$ and $q_k^*$, respectively. The same holds for their associated two-point function $\mathscr{W}^{*,\omega}_2$ and $\mathscr{W}^{*,\omega_0}_2$, respectively. That is, we have $T^*_k = \lambda\, q^*_k + \mu\, \bar{q}_k^*$ where $\left.T^*_k\right|_{t > t^*} = T_k$ and $\left.\mathscr{W}^{*,\omega}_2\right|_{t > t^*} = \mathscr{W}^{\omega}_2$. It is important to note that this deformation of the scale factor does not affect the Bogoliubov coefficients $\lambda, \mu$ as they could for instance be uniquely determined from $T_k, q_k, \dot{T}_k$ and $\dot{q}_k$ at the instant, say, $t_0>t^*$.\\[0.3cm]
There is however yet another distinguished Hadamard state for the deformed spacetime. Namely, at early times the deformed spacetime equals the Minkowski spacetime and thus we can define for $t<t_*$ the ``vacuum state'' $\omega_\infty$ with the mode function $v_k(t):=\frac{1}{\sqrt{2\omega_k}}\Exp{\text{i} \omega_k\,t}$ where $\omega_k^2:=\frac{k^2}{a^*(t)^2}+m^2$. This mode function is again extended to a solution of the complete spacetime, say $v^*_k(t)$, thus yielding the Hadamard state $\omega^*_\infty$ with two-point function $\mathscr{W}^{*,\omega_\infty}_2$. The Bogoliubov transformations mapping $\omega_\infty$ into $\omega$ and $\omega_0$ are given by
\begin{equation*}
T^*_k(t) = \gamma_T(k)\,v^*_k(t) + \delta_T(k)\,\bar{v}^{\,*}_k(t) \quad \text{and}\quad q^*_k(t) = \gamma_q(k)\,v^*_k(t) + \delta_q(k)\,\bar{v}^{\,*}_k(t),
\end{equation*}
respectively, such that $\abs{\gamma_.(k)}^2-\abs{\delta_.(k)}^2=1$ holds.\\

Now we consider the difference of $\mathscr{W}^{*,\omega}_2-\mathscr{W}^{*,\omega_\infty}_2$, which is given at $(p,p')\in M\times M$ by
\begin{align*}
\left(\mathscr{W}^{*,\omega}_2-\mathscr{W}^{*,\omega_0}_2\right)(p,p') & = \frac{1}{4\pi^3}\,\lim_{\varepsilon\rightarrow 0}\, \int\limits_{\mathbb{R}^3}\! \left(1 + 2 \Xi(k)\right) \mathfrak{Re}\!\left[\gamma_T(k)\, \bar{\delta}_T(k)\, v^*_k(t)\, v_k^*(t')\right]\\[-0.2cm]
		&\hspace{5.1cm }\times\,\Exp{\text{i}\,\mathbf{k}(\mathbf{x}-\mathbf{x}') -\varepsilon\,k}\text{d}^3\mathbf{k}\nonumber \\
		& \quad+ \frac{1}{4\pi^3}\, \lim_{\varepsilon\rightarrow 0}\, \int\limits_{\mathbb{R}^3}\left(\Xi(k) + 2 \Xi(k) \abs{\delta_T(k)}^2+\abs{\delta_T(k)}^2\right)\nonumber\\[-0.2cm]
		& \hspace{2.8cm}\times \mathfrak{Re}\!\left[v^*_k(t) \bar{v}^{\,*}_k(t')\right]\Exp{\text{i}\,\mathbf{k}(\mathbf{x}-\mathbf{x}') -\varepsilon\,k}\,\text{d}^3\mathbf{k}.
\end{align*}
Since $\omega$ and $\omega_\infty$ are both Hadamard states the difference of their two-point function has to be smooth. Therefore, by applying the operator 
\begin{equation*}
\mathscr{E}:= \partial_t\otimes \partial_{t'} + \frac{1}{a(t)a(t')}\sum^3_{i=1} \partial_i\otimes \partial_{i'} +m^2\,\mathds{1}\otimes \mathds{1}
\end{equation*} 
to the distribution $(\mathscr{W}^{*,\omega}_2-\mathscr{W}^{*,\omega_\infty}_2)$ we still obtain a smooth function for the kernel of $\mathscr{E}(\mathscr{W}^{*,\omega}_2-\mathscr{W}^{*,\omega_\infty}_2)$. This smooth kernel can be calculated in the same way as described in the proof of theorem \ref{thm_renorm_energy/pressure}. After taking the temporal coincidence limit $t'\rightarrow t$ as well as setting $\mathbf{z}:=\mathbf{x}-\mathbf{x'}$ and restricting to times $t<t_*$, where the mode function $v^*_k(t)$ was given by the plane wave $\frac{1}{\sqrt{2\omega_k}}\Exp{\text{i} \omega_kt}$, the result of this calculation reads for $t<t_*$
\begin{align*}
\left[\mathscr{E}(\mathscr{W}^{*,\omega}_2-\mathscr{W}^{*,\omega_\infty}_2)\right]_{t'=t} & = \lim_{\varepsilon\rightarrow 0}\, \int\limits_{\mathbb{R}^3}\frac{\omega_k}{4\pi^3} \left(\Xi(k) + 2 \Xi(k)\abs{\delta_T(k)}^2+\abs{\delta_T(k)}^2\right)\\[-0.25cm]
&\hspace{5.5cm} \times \Exp{\text{i}\,\mathbf{k} \mathbf{z} -\varepsilon\,k} \text{d}^3\mathbf{k}\\[0.15cm]
		& = \lim_{\varepsilon \rightarrow 0} \mathcal{F}_\varepsilon\! \left[\frac{\omega_k}{4\pi^3}\! \left(\Xi(k)\! +\! 2\Xi(k)\! \abs{\delta_T(k)}^2\!+\!\abs{\delta_T(k)}^2\right)\right]\!(\mathbf{z}),
\end{align*} 
where $\mathcal{F}_\varepsilon$ denotes the regularized Fourier transform (\ref{def_reg_fourier}). Again, since $\omega$ and $\omega_\infty$ are both Hadamard states the right hand side of the previous equation actually has to be a smooth function on $\mathbb{R}^3$. But since $\omega_k\cdot(\Xi + 2\abs{\delta_T}^2 \Xi +\abs{\delta_T}^2)\geq 0$ this statement is by lemma \ref{lem_smoothness_moments} equivalent to the conclusion that $\omega_k\cdot(\Xi + 2\abs{\delta_T}^2 \Xi+\abs{\delta_T}^2)$ is ERD. In particular, by employing lemma \ref{lem_prop_erd_1}.\emph{(4)}/\emph{(5)}, this means that $\Xi$ and $\abs{\delta_T}^2$ have to be ERD and from lemma \ref{lem_prop_erd_1}.\emph{(6)} it follows that this already has to hold for $\abs{\delta_T}$ itself. Replacing in the above argumentation $\omega$ by $\omega_0$ we immediately obtain the ERD property for $\delta_q$, too.\\[0.25cm]
Now we have to show that the Bogoliubov coefficient $\mu$ connecting $\omega$ and $\omega_0$, cf. equation (\ref{eqn_proof_bogoliubov_trafo_chara_his}), is also ERD. In order to do so, we first express $v^*_k$ and $\bar{v}^{\,*}_k$ in terms of $q^*_k$ and $\bar{q}^*_k$, then replace $T^*_k$ in (\ref{eqn_proof_bogoliubov_trafo_chara_his}) with $T^*_k = \gamma_T\,v^*_k + \delta_T\,\bar{v}^*_k$, substitute $v^*_k$ and $\bar{v}^{\,*}_k$ afterwards and finally read off $\lambda$ and $\mu$. That is, using $\abs{\gamma_q}^2-\abs{\delta_q}^2=1$, we easily find
\begin{equation*}
v^*_k = \bar{\gamma}_q\,q^*_k-\delta_q\,\bar{q}^*_k \quad \Longrightarrow \quad \gamma_T\,(\bar{\gamma}_q\,q^*_k-\delta_q\,\bar{q}^*_k)+ \delta_T\,(\gamma_q\,\overline{q}^*_k-\bar{\delta}_q\,q^*_k)= \lambda\,q^*_k + \mu\,\bar{q}^*_k.
\end{equation*}
If we collect on the right equation's left hand side all terms proportional to $\overline{q}^*_k$ we see that $\mu = \gamma_T \delta_q-\gamma_q \delta_T$. Now $\gamma_q$ and $\gamma_T$ are polynomially bounded which implies, via $\delta_q, \delta_T \in \text{ERD}(\mathbb{R}^+_0)$ and lemma \ref{lem_prop_erd_1}.\emph{(3)}, $\mu \in \text{ERD}(\mathbb{R}_0^+)$.\\[0.25cm]
Finally, splitting $\lambda$ and $\mu$ into their modulus and phase, where we choose without loss of generality $\lambda$ to be real, such that the normalization condition $\abs{\lambda}^2-\abs{\mu}^2=1$ is fulfilled, we obtain the form for $\lambda$ and $\mu$ as claimed in the above theorem.  
\end{proof}

\subsection{Construction of a Hadamard State on certain Spacetimes}
\label{sec_Construction_of_a_Hadamard_State_on_certain_spacetimes}

Theorem \ref{thm_characterisation_hihs} allows to compute any HIH state on a spatially flat FRW spacetime as soon as one HIH state is known. However, to find an initial Hadamard state, that can serve as a, so to speak, \emph{seed state} for the above construction, is not at all an easy task. Nevertheless, there are some results how to obtain Hadamard states on quite general classes of FRW spacetimes $-$ for instance the ``bulk-to-boundary states'' by Dappiaggi, Moretti and Pinamonti \cite{dap_mor_pin_2} or Olbermann's states of low energy \cite{olbermann}. Here we want to present yet another approach based on the results of Junker and Schrohe \cite{junker_schrohe} and Olbermann \cite{olbermann}. Namely, we use the singular part of the Hadamard parametrix to construct the seed state. Although this approach is limited to FRW spacetimes whose scale factor fulfills a certain ODE the Hadamard state obtained in this way turns out to be of a rather simple form.\\[0.25cm]
We will proceed as follows. First we will state the ODE the scale factor has to obey and then define the associated seed solution of the temporal KG equation, which gives rise to a seed state. After providing a short list of three examples we prove the Hadamard property.\\
\begin{lemma}
\label{lem_gauge_state}
Let $a(t)$ be the scale factor of a spatially flat FRW spacetime and $\mathfrak{s}(t,k):= \mathfrak{h}^\varrho(t,k)-\mathfrak{h}^\varrho_0(t)$ the $k$-dependent part of the Hadamard mode expansion (\ref{eqn_h_rho}).  Using the abbreviation $\partial_t f \equiv \dot{f}$, we can define the function
\begin{equation}
\label{def_j}
\jmath(t):=\frac{6H(t) \mathfrak{s}(t,k)+\dot{\mathfrak{s}}(t,k)}{H(t) (m^2+2\omega_k(t)^2)}\,=\,2\,\dfrac{\partial_t\left(a(t)^6\,\mathfrak{s}(t,k)\right)}{\partial_t\left(a(t)^6\,\omega_k(t)^2\right)}.
\end{equation}
Suppose this $\jmath(t)$ is a solution to the differential equation 
\begin{equation}
\label{eqn_scalefactor_cond}
2\jmath(t) \left( 2\,\omega_k(t)^2\jmath(t)+3H(t) \dot{\jmath}(t)+\ddot{\jmath}(t) \right) = \frac{c_k}{a(t)^6}+\dot{\jmath}(t)^2
\end{equation}
for some arbitrary constant $c_k\neq 0$ $-$ in the following we will refer to equation (\ref{eqn_scalefactor_cond}) as the \emph{consistency equation}. Then setting for
\begin{align}
\label{def_q(c>0)}
c_k>0: &\,  q(t):= \sqrt{\frac{\jmath(t)}{\sqrt{c_k}}}\; e^{\text{i}\,\theta(t)},\quad \dot{\theta}(t) = \frac{\sqrt{c_k}}{2\,\jmath(t)\,a(t)^3},\\
\nonumber\\
\label{def_q(c<0)}
c_k<0: &\,  q(t):= \sqrt{\frac{\jmath(t)}{\sqrt{|c_k|}}}\; \left( \cosh(\theta(t))+\text{i}\,\sinh(\theta(t))  \right),\quad \dot{\theta}(t) = \frac{\sqrt{|c_k|}}{2\,\jmath(t)\,a(t)^3},
\end{align}
the function $q(t)$ solves the temporal Klein-Gordon equation (\ref{eqn_tkge}) and the normalization condition (\ref{eqn_normkge}). In the following we will call this $q(t)$ the \emph{seed solution}.
\end{lemma}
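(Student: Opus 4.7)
The plan is direct verification: substitute the given ansatz for $q(t)$ into (\ref{eqn_tkge}) and (\ref{eqn_normkge}), and show that the normalization encodes the prescribed $\dot\theta$ while the Klein--Gordon equation reduces to the consistency equation (\ref{eqn_scalefactor_cond}). In both cases write $q = A\,u$, with $A := (\jmath/\sqrt{|c_k|})^{1/2}$ real-valued and $u(\theta) = \Exp{\text{i}\theta}$ if $c_k>0$, respectively $u(\theta) = \cosh\theta + \text{i}\sinh\theta$ if $c_k<0$. The key structural observation is that in the circular case $u'(\theta) = \text{i}\,u(\theta)$, while in the hyperbolic case $u'(\theta) = \text{i}\,\bar u(\theta)$; this single difference accounts for all the case-dependent signs below.

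First, the normalization. A short computation (using the respective identities for $u'$) gives in both cases
\begin{equation*}
\dot q\,\bar q - \dot{\bar q}\,q \,=\, 2\text{i}\,A^2\,\dot\theta .
\end{equation*}
Inserting $A^2 = \jmath/\sqrt{|c_k|}$ and the prescribed $\dot\theta = \sqrt{|c_k|}/(2\jmath\,a^3)$ yields $\text{i}/a^3$, which is exactly (\ref{eqn_normkge}).

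Next, the Klein--Gordon equation. Computing $\ddot q + 3H\dot q + \omega_k^2 q$ with the ansatz and using $u' = \text{i}\,u$ or $u' = \text{i}\,\bar u$ (and correspondingly $\bar u' = -\text{i}\,\bar u$ or $\bar u' = -\text{i}\,u$) groups the result into parts proportional to $u$ and $\bar u$ (i.e.\ real and imaginary parts, for $c_k>0$) which must vanish independently. The $\bar u$-part (resp.\ imaginary part) reads $2\dot A\dot\theta + A\ddot\theta + 3HA\dot\theta = 0$; multiplication by $A$ rewrites this as $a^{-3}\dif{t}(a^3 A^2\dot\theta) = 0$, which holds automatically since $a^3 A^2\dot\theta = \tfrac12$ by the normalization just checked. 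The $u$-part (resp.\ real part) reads
\begin{equation*}
\ddot A + 3H\dot A \,\mp\, A\dot\theta^2 + \omega_k^2 A \,=\, 0,
\end{equation*}
with the minus sign for $c_k>0$ and the plus sign for $c_k<0$. Substituting $\dot A = (\dot\jmath/2\jmath)\,A$, $\ddot A = [\ddot\jmath/(2\jmath) - \dot\jmath^{\,2}/(4\jmath^2)]\,A$ and $\dot\theta^2 = |c_k|/(4\jmath^2 a^6)$, then multiplying through by $4\jmath^2/A$ and noting that $\mp|c_k| = c_k$ in either case, reduces this identity precisely to (\ref{eqn_scalefactor_cond}).

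The only real obstacle is the sign bookkeeping between the two cases, but the structural identity $u' = \text{i}\,u \leftrightarrow u' = \text{i}\,\bar u$ makes it transparent. Note that the explicit form (\ref{def_j}) of $\jmath$ in terms of $\mathfrak{s}(t,k)$ plays no role in this verification $-$ only the hypothesis that $\jmath$ satisfies (\ref{eqn_scalefactor_cond}) enters. The role of (\ref{def_j}) lies in subsequently showing that the state built from this seed solution is of Hadamard type, which is an independent matter.
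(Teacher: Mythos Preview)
Your proof is correct and follows the same direct-verification strategy as the paper: substitute the ansatz, compute derivatives, and reduce the Klein--Gordon equation to the consistency equation while checking the Wronskian. The paper carries this out explicitly only for $c_k>0$ and states that the other case is analogous, whereas your $u'=\text{i}u$ versus $u'=\text{i}\bar u$ observation lets you treat both cases at once and makes transparent that the imaginary/$\bar u$-part is nothing but conservation of $a^3A^2\dot\theta$, hence automatic from the normalization; this is a mild organizational improvement but not a different argument.
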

\begin{proof}
We start with the case $c_k>0$. By differentiation of $q(t)$ with respect to $t$ and the subsequent substitution of $\dot{\theta}(t)$  we find
\begin{align}
& \dot{q}(t) = \frac{q(t)}{\jmath(t)} \,\left( \frac{\dot{\jmath}(t)}{2} + \frac{\text{i}\,\sqrt{c_k}}{2\,a(t)^3}\right) \nonumber\\[0.2cm]
& \ddot{q}(t) = -\frac{q(t)}{4\jmath(t)^2 a(t)^6}\,\left(c_k+6\text{i}\sqrt{c_k}\,H(t)\, \jmath(t)\, a(t)^3+\left(\dot{\jmath}(t)^2 -2\jmath(t)\, \ddot{\jmath}(t)\right) a(t)^6\right)\nonumber. 
\end{align}
After inserting these in the temporal Klein-Gordon equation we get the desired result:
\begin{align}
\ddot{q}(t)\,& +\,3H(t)\, \dot{q}(t)+\omega_k(t)^2 q(t)\nonumber\\
			  & = \frac{q(t)}{4\jmath(t)^2} \left(-\frac{c_k}{a(t)^6}-\dot{\jmath}(t)^2 +2\jmath(t)\,\ddot{\jmath}(t)+6H(t)\,\jmath(t)\,\dot{\jmath}(t) +4\jmath(t)^2\, \omega_k(t)^2\right)\nonumber\\
			  & = 0.\nonumber
\end{align}
In the same way we verify the normalization condition:
\begin{equation*}
\bar{q}(t)\,\dot{q}(t)-q(t)\,\dot{\bar{q}}(t) = \frac{1}{\sqrt{c_k}}\,\left( \frac{\dot{\jmath}(t)}{2} + \frac{\text{i}\,\sqrt{c_k}}{2\,a(t)^3}\right)-\frac{1}{\sqrt{c_k}}\,\left( \frac{\dot{\jmath}(t)}{2} + \frac{\text{i}\,\sqrt{c_k}}{2\,a(t)^3}\right) = \frac{\text{i}}{a(t)^3}.
\end{equation*}
The  remaining case can be proven analogously.
\end{proof}
\begin{remark}
\label{rem_cons_eqn}
$(i)$ The function $\jmath(t;k)$ is by definition a rational function in $k$ with coefficients depending on $a^{(n)}(t)$, $n=0,...,4$. Since $a(t)$ is independent of $k$, $c_k$ thus has to be a rational function in $k$, too. Therefore once $c_k$ is given one can simplify the consistency equation by first recasting this equation into the form $\sum^l_{n=0} F_{n}[a(t), ... , a^{(6)}(t); m]\,k^{n} = 0,\,l\in \mathbb{N}$, and then solving each $k$-order separately. For instance, if one assumes that $\partial_k c_k =0$, i.e. $c_k$ is just an ordinary constant, the three highest $k$-orders are given by $F_{14} \propto (c_k-1),\, F_{12} \propto(c_k-1)\, m^2\, a(t)^2$ and $F_{10} \propto(c_k-1)\, m^4\, a(t)^4$ so that in this case actually $c_k=1$ has to hold.\\
\\
$(ii)$ Note that for a seed solution with $c_k>0$ the energy density per mode $\varrho_k(t)$ is basically given in terms of $a(t)$ and $\mathfrak{s}(t,k)$, namely one finds
\begin{equation}
\label{eqn_rho_ck>0}
\varrho_k(t) = \left(\varrho_k(t_0)-\dfrac{1}{\sqrt{c_k}}\,\mathfrak{s}(t_0,k)\right)\left(\frac{a(t_0)}{a(t)} \right)^6 +\dfrac{1}{\sqrt{c_k}}\,\mathfrak{s}(t,k).
\end{equation}
\end{remark}
We are now providing (an incomplete list of) three examples of scale factors that, as can be checked, fulfill the ODE (\ref{eqn_scalefactor_cond}).\\
\\
\emph{Example 1: $c_k=1$, $m=0$, $\varrho_k(t_0)= \mathfrak{s}(t_0,k)$.} All spacetimes leading to a solution of the consistency equation with such a parameter configuration solve one of the following four differential equations\footnote{These ODEs were derived by the ``algorithm'' we outlined in remark \ref{rem_cons_eqn}.$(i)$, i.e. by solving the consistency equation (\ref{eqn_scalefactor_cond}) and $\varrho_k(t_0)=\mathfrak{s}(t_0,k)$ in each order of $k$ setting $m=0$.}:
\begin{align}
\label{eqn_c1b0m0_1}
\pm \sqrt{2}\,\left(-\dot{H}\right)^{\frac{3}{2}} + 2 H\,\dot{H}+\ddot{H} &= 0\\
\label{eqn_c1b0m0_2}
\pm \sqrt{2}\,\left(2H^2+\dot{H}\right)^{\frac{3}{2}} + 4 H^3+6H\,\dot{H}+\ddot{H} &= 0.
\end{align}
Some analytic solutions of these two ODEs are given by
\begin{itemize}
\item[$\cdot$] $a(t) = (\beta\pm\alpha\, t)^n \qquad \left(n\in \left\{0,\frac{1}{2},\frac{2}{3},2\right\}\right)$\\[-0.3cm]
\item[$\cdot$] $a(t) = \beta\, \exp(\alpha\,t)$
\end{itemize}
where $\alpha$ and $\beta$ are arbitrary real constants.\\
Note that, as soon as $\mathfrak{h}^\varrho_{-3}(t) \neq 0$ the energy density $\varrho_k(t)$, cf. equation (\ref{eqn_rho_ck>0}), exhibits an non-integrable infrared divergence for $k\rightarrow 0$. The reason for this divergence is that we constructed the mode function $q(t)$ from the mode expansion of the Hadamard series which, in our ansatz, is not well-defined as a function at $k=0$ itself. To make this clearer let us explicitly write down the seed solution associated to the scale factor $a(t) = t^{\frac{2}{3}}$, i.e. 
\begin{equation*}
q(t) = \sqrt{\frac{1 + 9 k^2\, t^{\frac{2}{3}}}{k^3\,t^2}} \,\Exp{\text{i}\,\theta(t)} \quad \text{and}\quad \theta(t) = 3 k\,t^\frac{1}{3} - \arctan\left(3 k\,t^\frac{1}{3}\right).
\end{equation*} 
We see that $q(t)^2$ does not yield an intergable function in $k$ at $k=0$ as would be necessary for the definition of a state. Nevertheless, we can still regularize this solution for $0\leq k <k_0$, $k_0\in \mathbb{R}^+$, by applying a Bogoliubov transformation to $q(t)$ such that the new solution becomes regular at $k=0$. Such a regularizing transformation is for instance obtained from the (non unique) choice $\lambda = \frac{1}{2}\,(k^{2}+k^{-2})$ and $\mu = \frac{1}{2}\,(k^2-k^{-2})$ so that the completely regular solution $q_\text{r}$ reads
\begin{equation}
q_\text{r}(t) = \left\{ 
\begin{array}{c l}
\frac{1}{2}\,(k^{2}+k^{-2})\,q(t) + \frac{1}{2}\,(k^{2}-k^{-2})\,\bar{q}(t), & \text{for  } 0<k\leq k_0 \\
\\
q(t), & \text{for  } k_0<k < \infty.
\end{array}\right.
\end{equation}
\\[0.1cm]
\emph{Example 2: $R=\text{\emph{const}.}\leq 0,\,m=\sqrt{-\frac{1}{6}\,R}$}; each FRW spacetime with a constant non-positive scalar curvature belongs to one of the following five cases:\\[-0.3cm]
\begin{align*}
&m=0: \quad
a(t)= \left\{
\begin{array}{l}
 \beta_2\,\sqrt{\pm t + \beta_1}\\[0.2cm]
 \beta_2
\end{array}
\right.,\\[0.2cm]
& m>0:\quad a(t) = \left\{ 
\begin{array}{l}
 \beta_2\,\Exp{\pm\,\frac{m}{\sqrt{t}}}\\[0.2cm]
 \beta_2\,\sqrt{\cosh(\pm\,\sqrt{2}m\,t+\beta_1)}\\[0.2cm]
 \beta_2\,\sqrt{\sinh(\pm\,\sqrt{2}m\,t+\beta_1)}
\end{array}
\right. ,
\end{align*}
with $\beta_1 \in \mathbb{R}$ and $\beta_2>0$. Those scale factors are solutions to the consistency equation (\ref{eqn_scalefactor_cond}) if $c_k = 1$. Also in these cases we have $\varrho_k(t_0)-\mathfrak{s}(t_0,k)=0$ and $\mathfrak{h}_{-3}^\varrho(t)=0$, i.e. the associated $q(t)$ defines a proper state, which for the case of de Sitter spacetime equals the Bunch-Davies state.\\
\\
\emph{Example 3: }$a(t)\,= \alpha\,\sinh^\frac{2}{3}\left(\beta\pm \frac{3m}{2\sqrt{2}}\,t\right), m>0.$ If we choose $c_k = 1-\frac{(\alpha\,m)^6}{512k^6}$, i.e. $c_k$ changes its sign at $k_0 = \frac{\alpha\,m}{2\sqrt{2}}$, then $a(t)$ determines a solution of the consistency equation. Furthermore, one can check that the seed solution $q(t)$ associated to $a(t)$ does not exhibit any infrared divergences but an integrable pole of order $\abs{k-k_0}^{-\frac{1}{4}}$ at $k=k_0$. Finally, we note that this scale factor describes a dust filled universe with a positive cosmological constant \cite{carroll}.\\[0.2cm]
Now we can show that the states associated to the seed solution are Hadamard states. Here we will use an argument that was given by Olbermann in \cite{olbermann}. Namely, if a pure state can be approximated by an adiabatic vacuum state of arbitrarily high order, this state has to be a Hadamard state. So let us begin with the proper definition of the adiabatic vacuum states.
\begin{definition}
\label{def_adiabatic_state}
Let $T^{(n)}_k(t)$ be the mode function associated to a pure homogeneous and isotropic state $\omega$ on a spatially flat FRW spacetime with scale factor $a$. Suppose furthermore that $I=[T_0,T_1]$ is a compact interval. We define for $t_0, t\in I$ the function
\begin{equation}
\label{def_adiabatic_state_Wn}
W_{(n)}(t;k) := \frac{1}{\sqrt{2a(t)^3\,\Omega_{(n)}(t;k)}}\,\exp\left(\text{i}\,\ainteg{t_0}{t}{\Omega_{(n)}(y;k)}{y}\right)
\end{equation}  
where $\Omega_{(n)}$ is given iteratively by 
\begin{align}
\Omega_{(n+1)}(t;k)^2 &:= \omega_k(t)^2+\dfrac{3}{4}\,H(t)^2+\dfrac{1}{4}\,R(t)+\dfrac{3}{4}\left(\dfrac{\dot{\Omega}_{(n)}(t;k)}{\Omega_{(n)}(t;k)}\right)^2-\dfrac{1}{2}\,\dfrac{\ddot{\Omega}_{(n)}(t;k)}{\Omega_{(n)}(t;k)}\nonumber\\
\label{eqn_adiabatic_omega}
\Omega_{(0)}(t;k)   &:= \omega_k(t).
\end{align}
Then the state $\omega$ is called an \emph{adiabatic vacuum state of order $n$} if and only if its mode function $T^{(n)}_k$ satisfies for the initial time  $t_1 \in I$ the initial conditions
\begin{equation}
T^{(n)}_k(t_1) = W_{(n)}(t_1;k) \quad \text{and} \quad \dot{T}^{(n)}_k(t_1) = \dot{W}_{(n)}(t_1;k).
\end{equation}
\end{definition}
Note that it is a priori not clear that the $\Omega_{(n)}$ exist. Since the scale factor $a(t)$ is an arbitrary strictly positive function, the right hand side of (\ref{eqn_adiabatic_omega}) could in principle become negative. However, L\"uders and Roberts have shown that for all $n\in \mathbb{N}$ there is a number $\chi_n(I)$, $I$ being a closed time interval, such that $\Omega_{(n)}$ is always positive on $\mathcal{R}_n(I):=\{(t,k)\,|\,t\in I\,,\;k>\chi_n(I) \}$ and has continuous derivatives with respect to $t$, cf. \cite{lued_rob}. For $k\leq \chi_n(I)$ one can extrapolate the mode function $T^{(n)}_k$, similarly as in Example $\emph{1}$ above, in order to obtain an adiabatic vacuum state. For further properties of adiabatic vacuum states we refer the reader to \cite{junker_schrohe, lued_rob, olbermann}.\\ 

\begin{theorem}[Olbermann \cite{olbermann}]
\label{thm_olbermann_2}
Let $I=[T_0,T_1]\subset \mathbb{R}$ be a compact time interval and suppose $q_k:I\rightarrow \mathbb{C}$ are the mode functions of a pure homogeneous and isotropic quasifree state on a spatially flat FRW spacetime $(M,g)$ with the two-point distribution $\mathscr{W}^\omega_2$. Furthermore, let $T^{(n)}_k: I\rightarrow \mathbb{C}$ be the mode functions associated to a quasifree adiabatic vacuum state of order $n$ with two-point distribution $\mathscr{W}^{\omega_n}_2$, that is connected to $q_k(t)$ via the Bogoliubov transformation $q_k(t) = \alpha_n(k)\,T^{\,(n)}_k(t)+\beta_n(k)\,\bar{T}^{\,(n)}_k(t)$. If, for all $n$ and $k\rightarrow \infty$, $\abs{\alpha_n(k)}-1 = \mathcal{O}(k^{-2n-2})$ and $\abs{\beta_n(k)} = \mathcal{O}(k^{-2n-2})$ then $q_k(t)$ define a Hadamard state for all $t\in I$.
\end{theorem}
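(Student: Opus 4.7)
The plan is to deduce that $\omega$ is Hadamard by sandwiching its two-point function between those of the adiabatic vacua $\omega_n$ for $n\to\infty$. Since $\omega$ is pure we have $\Xi\equiv 0$ in theorem \ref{thm_homo_iso_states}, and substituting the Bogoliubov relation $q_k = \alpha_n(k)\,T^{(n)}_k + \beta_n(k)\,\bar T^{(n)}_k$ into (\ref{eqn_2pd_his})--(\ref{eqn_mode_hi_state}) yields, for the difference $\mathscr{W}^\omega_2-\mathscr{W}^{\omega_n}_2$, a regularised momentum integral whose integrand is, up to the Fourier factor $\mathrm{e}^{\mathrm{i}\mathbf{k}(\mathbf{x}-\mathbf{x}')}$, a linear combination of $|\beta_n(k)|^2\,T^{(n)}_k(t)\bar T^{(n)}_k(t')$, its complex conjugate, and the interference terms $\alpha_n(k)\bar\beta_n(k)\,T^{(n)}_k(t)T^{(n)}_k(t')$ plus c.c. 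The normalisation $|\alpha_n|^2-|\beta_n|^2=1$ together with the hypothesis $|\alpha_n|-1=\mathcal{O}(k^{-2n-2})$ gives both $|\beta_n|^2 = \mathcal{O}(k^{-2n-2})$ and $|\alpha_n\bar\beta_n| = \mathcal{O}(k^{-2n-2})$.

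Next I would estimate the smoothness of $\mathscr{W}^\omega_2-\mathscr{W}^{\omega_n}_2$ by a derivative-under-the-integral argument. From (\ref{def_adiabatic_state_Wn}) together with the L\"uders--Roberts positivity of $\Omega_{(n)}$ on $\mathcal{R}_n(I)$, one obtains uniformly in $t\in I$ the large-$k$ bounds $|T^{(n)}_k(t)| = \mathcal{O}(k^{-1/2})$ and $|\dot T^{(n)}_k(t)|=\mathcal{O}(k^{1/2})$. Hence any partial differential operator of order $m$ in $(t,\mathbf{x},t',\mathbf{x}')$ applied to the integrand produces a bound of order $(1+k)^{m-2n-3}$, and after multiplying by the spherical volume element $k^2\,\mathrm{d}k$ the integral converges absolutely whenever $m<2n$. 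Consequently $\mathscr{W}^\omega_2-\mathscr{W}^{\omega_n}_2\in C^{2n-1}(I\times\mathbb{R}^3\times I\times\mathbb{R}^3)$.

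It remains to compare $\mathscr{W}^{\omega_n}_2$ itself with the symmetric Hadamard parametrix $\mathcal{H}^\text{s}_{l,\varepsilon}$. The recursive construction (\ref{def_adiabatic_state_Wn})--(\ref{eqn_adiabatic_omega}) of $\Omega_{(n)}$ is designed precisely so that the short-distance asymptotic expansion of $\mathscr{W}^{\omega_n}_2$ reproduces that of $\mathcal{H}^\text{s}_{l,\varepsilon}$ up to a remainder whose regularity grows with $n$; the quantitative version given by Junker--Schrohe and in \cite{olbermann} provides a function $r(n)$ with $r(n)\to\infty$ such that $\mathscr{W}^{\omega_n}_2-\mathcal{H}^\text{s}_{l,\varepsilon}\in C^{r(n)}$. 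Combining the two estimates yields
\[
\mathscr{W}^\omega_2-\mathcal{H}^\text{s}_{l,\varepsilon} \;=\; \bigl(\mathscr{W}^\omega_2-\mathscr{W}^{\omega_n}_2\bigr) + \bigl(\mathscr{W}^{\omega_n}_2-\mathcal{H}^\text{s}_{l,\varepsilon}\bigr) \;\in\; C^{\min(2n-1,\,r(n))}.
\]
Because the decay estimates are assumed to hold for \emph{every} $n\in\mathbb{N}$, the left-hand side lies in $C^k$ for all $k$, which is precisely the Hadamard condition (\ref{eqn_hadamard_prop_1}) on the region $I\times\mathbb{R}^3$.

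The main obstacle will be the third step, namely making the matching between the adiabatic mode $W_{(n)}$ and the Hadamard parametrix quantitative: one must control the infrared sector $k\leq\chi_n(I)$, where the iteration (\ref{eqn_adiabatic_omega}) may fail to be positive and $T^{(n)}_k$ has to be produced by an extrapolation analogous to the one in Example 1. Since this concerns only a compact range of momenta, the extrapolation contributes a smooth remainder and does not affect the conclusion, but verifying this requires careful bookkeeping along the lines of \cite{junker_schrohe, olbermann}.
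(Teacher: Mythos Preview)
The paper does not give its own proof of this theorem: it is stated with attribution to Olbermann \cite{olbermann} and then used as a black box in the proof of theorem~\ref{lem_hadamard_prop_1}. So there is no in-paper proof to compare against.

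Your outline is the right one and matches the argument in the cited literature. A couple of small points are worth tightening. First, in the mode expansion of $\mathscr{W}^\omega_2-\mathscr{W}^{\omega_n}_2$ the coefficient of the diagonal term $\bar T^{(n)}_k(t)T^{(n)}_k(t')+T^{(n)}_k(t)\bar T^{(n)}_k(t')$ is $|\beta_n|^2$, not $|\beta_n|^2$ \emph{and} $|\alpha_n|^2-1$ separately; these coincide because of $|\alpha_n|^2-|\beta_n|^2=1$. Since the hypothesis gives $|\beta_n|=\mathcal{O}(k^{-2n-2})$ directly, you actually have $|\beta_n|^2=\mathcal{O}(k^{-4n-4})$, which only helps. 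Second, the large-$k$ bounds $|T^{(n)}_k|=\mathcal{O}(k^{-1/2})$ and $|\dot T^{(n)}_k|=\mathcal{O}(k^{1/2})$ hold for the WKB ansatz $W_{(n)}$ at $t_1$, but $T^{(n)}_k$ is the genuine solution of (\ref{eqn_tkge}) with those initial data; you should say explicitly that these bounds propagate to all $t\in I$ by the L\"uders--Roberts estimates on solutions of the mode equation, and that higher time derivatives are controlled by repeatedly using (\ref{eqn_tkge}) to trade $\partial_t^2$ for a factor $\omega_k^2\sim k^2$. With those two clarifications the differentiation-under-the-integral step goes through as you wrote it.

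The third step---that $\mathscr{W}^{\omega_n}_2-\mathcal{H}^\text{s}_{l,\varepsilon}\in C^{r(n)}$ with $r(n)\to\infty$---is indeed the substantial input, and your identification of the infrared extrapolation for $k\le\chi_n(I)$ as the bookkeeping burden is accurate; this is precisely what Junker--Schrohe \cite{junker_schrohe} supply (in the microlocal language of Sobolev wavefront sets), and Olbermann's contribution is to package it in the Bogoliubov form you are using. So your sketch is correct in structure; what remains is to cite or reproduce the Junker--Schrohe regularity statement rather than leave it as a named obstacle.
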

In order to show that the seed states are Hadamard states it therefore suffices to show that these states can be approximated by adiabatic vacuum states of arbitrary large order. Before we are able to make this last step we need two more preparatory lemmas.
\begin{lemma}
\label{lem_proof_hada_prop1}
Let $a(t)$ be a solution to the consistency equation (\ref{eqn_scalefactor_cond}). Then $\frac{\sqrt{c_k}}{2a(t)^3\jmath(t)}$ shows the asymptotic behaviour
\begin{equation}
\label{eqn_asymp_c/2aj}
\frac{\sqrt{c_k}}{2a(t)^3\jmath(t)} = \frac{k}{a(t)}+\frac{a(t)}{12k}\,(6m^2+R(t))+\mathcal{O}(k^{-3}) \quad \text{for}\quad k\rightarrow \infty.
\end{equation}
Furthermore, if $\tilde{\Omega}_{(n)}:= \frac{\sqrt{c_k}}{2a(t)^3\jmath(t)}+\frac{\gamma_n(t)}{k^{2n+1}}+\mathcal{O}(k^{-2n-3})$ then for $k\rightarrow \infty$ and all $n\in \mathbb{N}_0$
\begin{align}
\label{eqn_proof_hada_prop2a}
\frac{\dot{\tilde{\Omega}}_{(n)}}{\tilde{\Omega}_{(n)}} &= -3H-\frac{\dot{\jmath}}{\jmath}  + \frac{a (H\gamma_n +\dot{\gamma}_n)}{k^{2n+2}}+\mathcal{O}(k^{-2n-4})\\[0.1cm]
\label{eqn_proof_hada_prop2b}
\frac{\ddot{\tilde{\Omega}}_{(n)}}{\tilde{\Omega}_{(n)}} &= 2\left(\frac{k^2}{a^2} + m^2 -\frac{c_k}{4 a^6 j^2} + \frac{15}{2} H^2 + \frac{1}{4} R + \frac{9}{2} H  \frac{\dot{\jmath}}{\jmath} +\frac{3}{4} \frac{\dot{\jmath}^2}{\jmath^2}\right)\nonumber\\[-0.05cm]
&\hspace{0.5cm} +\,\frac{a}{k^{2n+2}}\,\left(\ddot{\gamma}_n-(3H^2+R/6)\gamma_n\right)+\mathcal{O}(k^{-2n-4}).
\end{align}
\end{lemma}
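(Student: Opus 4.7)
The plan is to prove the two asymptotic statements in turn. For the leading asymptotic of $\sqrt{c_k}/(2a(t)^3\jmath(t))$, I would use the explicit representation (\ref{def_j}) together with the coefficients $\mathfrak{h}^\varrho_j$ supplied by Lemma \ref{lem_h_rho/p}. Since $\mathfrak{s}(t,k) = \mathfrak{h}^\varrho_1(t)\,k + \mathfrak{h}^\varrho_{-1}(t)\,k^{-1} + \mathfrak{h}^\varrho_{-3}(t)\,k^{-3}$ and $\omega_k^2 = k^2/a^2 + m^2$, both $\partial_t(a^6\mathfrak{s})$ and $\partial_t(a^6\omega_k^2)$ are Laurent polynomials in $k$ with time-dependent coefficients. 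Performing the division and simplifying the $1/k^2$ coefficient by means of $R = -6(\dot H + 2H^2)$ collapses it into $-a^2(6m^2+R)/12$, so that $\jmath = (2a^2 k)^{-1}\bigl(1 - a^2(6m^2+R)/(12 k^2) + O(k^{-4})\bigr)$. Inverting this expansion and using that by Remark \ref{rem_cons_eqn}$(i)$ the consistency equation forces $\sqrt{c_k} = 1 + O(k^{-l})$ with $l\geq 2$ then yields (\ref{eqn_asymp_c/2aj}).

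For the derivative formulas I would take logarithmic derivatives of $\tilde{\Omega}_{(n)}$. Factoring
\begin{equation*}
\tilde{\Omega}_{(n)} = \frac{\sqrt{c_k}}{2a^3\jmath}\Bigl(1 + \frac{2a^3\jmath\,\gamma_n}{\sqrt{c_k}\,k^{2n+1}} + O(k^{-2n-3})\Bigr)
\end{equation*}
and inserting the leading behavior $2a^3\jmath/\sqrt{c_k} = a/k + O(k^{-3})$ from the first part, this reduces to $\tilde{\Omega}_{(n)} = (\sqrt{c_k}/(2a^3\jmath))\bigl(1 + a\gamma_n/k^{2n+2} + O(k^{-2n-4})\bigr)$. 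Since $c_k$ is $t$-independent, taking the logarithmic derivative and using $\partial_t(a\gamma_n/k^{2n+2}) = a(H\gamma_n + \dot{\gamma}_n)/k^{2n+2}$ immediately produces (\ref{eqn_proof_hada_prop2a}). For (\ref{eqn_proof_hada_prop2b}) I would then use $\ddot{\tilde{\Omega}}_{(n)}/\tilde{\Omega}_{(n)} = \partial_t(\dot{\tilde{\Omega}}_{(n)}/\tilde{\Omega}_{(n)}) + (\dot{\tilde{\Omega}}_{(n)}/\tilde{\Omega}_{(n)})^2$, and the $\gamma_n$-independent part collects into $-3\dot H - \ddot{\jmath}/\jmath + 2\dot{\jmath}^2/\jmath^2 + (3H + \dot{\jmath}/\jmath)^2$.

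The main trick, and in my view the principal obstacle, is to eliminate $\ddot{\jmath}/\jmath$ by rearranging the consistency equation (\ref{eqn_scalefactor_cond}) into $\ddot{\jmath}/\jmath = c_k/(2a^6\jmath^2) + \dot{\jmath}^2/(2\jmath^2) - 2\omega_k^2 - 3H\dot{\jmath}/\jmath$. After this substitution and the Ricci identity $-3\dot H = R/2 + 6H^2$, the $\gamma_n$-independent part of $\ddot{\tilde{\Omega}}_{(n)}/\tilde{\Omega}_{(n)}$ reorganizes into exactly the bracketed expression of (\ref{eqn_proof_hada_prop2b}). For the $k^{-2n-2}$ correction only the leading behaviour $\dot{\jmath}/\jmath = -2H + O(k^{-2})$ (a direct corollary of the first part) is required, so that $3H+\dot{\jmath}/\jmath = H + O(k^{-2})$; then $\partial_t[a(H\gamma_n+\dot{\gamma}_n)/k^{2n+2}] - 2(3H+\dot{\jmath}/\jmath)\,a(H\gamma_n+\dot{\gamma}_n)/k^{2n+2}$ collapses to $a(\ddot{\gamma}_n + (\dot H - H^2)\gamma_n)/k^{2n+2}$, and replacing $\dot H - H^2 = -(3H^2 + R/6)$ yields the stated formula. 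The remaining bookkeeping is lengthy but entirely mechanical once the consistency-equation substitution has been carried out.
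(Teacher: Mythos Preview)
Your approach is correct in spirit and quite close to the paper's, but it differs in one organisational point and contains one small arithmetic slip.

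For (\ref{eqn_asymp_c/2aj}) the paper does \emph{not} expand $\jmath$ and $c_k$ separately. Instead it rewrites the consistency equation (\ref{eqn_scalefactor_cond}) as
\[
\frac{c_k}{4a^6\jmath^2}=\omega_k^2-\frac{\dot\jmath^2}{4\jmath^2}+\frac{3H}{2}\frac{\dot\jmath}{\jmath}+\frac{1}{2}\frac{\ddot\jmath}{\jmath},
\]
computes the large-$k$ expansions of $\dot\jmath/\jmath$ and $\ddot\jmath/\jmath$ from the explicit rational form of $\jmath$ (your same ingredients), substitutes, and only then takes the square root. This sidesteps any separate statement about the asymptotics of $c_k$; your appeal to Remark \ref{rem_cons_eqn}$(i)$ for ``$\sqrt{c_k}=1+\mathcal{O}(k^{-l})$'' is not literally contained there (the remark only treats the constant-$c_k$ case) and would need its own short argument via the consistency equation. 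Your route works, but the paper's packaging is slightly cleaner.

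For (\ref{eqn_proof_hada_prop2a}) and (\ref{eqn_proof_hada_prop2b}) your logarithmic-derivative organisation and the identity $\ddot{\tilde\Omega}/\tilde\Omega=\partial_t(\dot{\tilde\Omega}/\tilde\Omega)+(\dot{\tilde\Omega}/\tilde\Omega)^2$ are equivalent to what the paper does. Note however that $\partial_t(-3H-\dot\jmath/\jmath)=-3\dot H-\ddot\jmath/\jmath+\dot\jmath^2/\jmath^2$, so the $\gamma_n$-independent part should carry a single $\dot\jmath^2/\jmath^2$, not $2\dot\jmath^2/\jmath^2$; with the wrong coefficient the substitution of $\ddot\jmath/\jmath$ from (\ref{eqn_scalefactor_cond}) would produce $\tfrac{5}{4}\dot\jmath^2/\jmath^2$ rather than the required $\tfrac{3}{4}\dot\jmath^2/\jmath^2$ inside the bracket of (\ref{eqn_proof_hada_prop2b}). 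With that correction your derivation goes through exactly as you describe.
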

\begin{lemma}
\label{lem_proof_hada_prop2}
Let $a(t)$ be a solution to the consistency equation (\ref{eqn_scalefactor_cond}). Then $\Omega_{(n)}$ has the asymptotic behaviour
\begin{equation}
\label{eqn_proof_hada_prop1}
\forall n \in \mathbb{N}_0:\quad\Omega_{(n)}(t;k) = \frac{\sqrt{c_k}}{2 a(t)^3 \jmath(t)}+\frac{\gamma_n(t)}{k^{2n+1}}+\mathcal{O}(k^{-2n-3})\quad \text{for}\quad k \rightarrow \infty.
\end{equation} 
\end{lemma}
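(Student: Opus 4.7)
The plan is induction on $n$, propagating the claimed expansion through the recursion (\ref{eqn_adiabatic_omega}) by substituting the asymptotic formulas (\ref{eqn_proof_hada_prop2a})/(\ref{eqn_proof_hada_prop2b}) from lemma \ref{lem_proof_hada_prop1}. Throughout I abbreviate $A(t;k) := \sqrt{c_k}/(2\,a(t)^3\,\jmath(t))$, so the target expansion reads $\Omega_{(n)} = A + \gamma_n/k^{2n+1} + \mathcal{O}(k^{-2n-3})$.

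For the base case $n=0$, I have $\Omega_{(0)} = \omega_k = \sqrt{k^2/a^2+m^2}$. A direct binomial expansion gives $\omega_k = k/a + m^2 a/(2k) + \mathcal{O}(k^{-3})$, which I compare term by term with $A = k/a + a(6m^2+R)/(12k) + \mathcal{O}(k^{-3})$, the latter being the content of (\ref{eqn_asymp_c/2aj}). This fixes $\gamma_0(t) = -a(t)\,R(t)/12$ and confirms the claimed error order.

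For the inductive step, assume the expansion holds at level $n$ and substitute (\ref{eqn_proof_hada_prop2a}) and (\ref{eqn_proof_hada_prop2b}) into
\begin{equation*}
\Omega_{(n+1)}^2 = \omega_k^2 + \tfrac{3}{4}H^2 + \tfrac{1}{4}R + \tfrac{3}{4}\bigl(\dot{\Omega}_{(n)}/\Omega_{(n)}\bigr)^2 - \tfrac{1}{2}\,\ddot{\Omega}_{(n)}/\Omega_{(n)}.
\end{equation*}
A conspicuous cancellation of all leading $k$-independent geometric contributions takes place: the $H^2$, $R$, $H\dot{\jmath}/\jmath$ and $(\dot{\jmath}/\jmath)^2$ pieces balance pairwise between the three terms, while the $k^2/a^2+m^2$ inside $\ddot{\Omega}_{(n)}/(2\Omega_{(n)})$ cancels against $\omega_k^2$. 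What survives from the leading orders is precisely $+c_k/(4a^6\jmath^2)=A^2$, plus a correction at order $k^{-2n-2}$ built linearly from $\gamma_n,\dot{\gamma}_n,\ddot{\gamma}_n$; schematically $\Omega_{(n+1)}^2 = A^2 - D(t)/k^{2n+2} + \mathcal{O}(k^{-2n-4})$ with
\begin{equation*}
D = \tfrac{a}{2}\,\bigl[\,3\bigl(3H+\dot{\jmath}/\jmath\bigr)(H\gamma_n+\dot{\gamma}_n) + \ddot{\gamma}_n - (3H^2 + R/6)\,\gamma_n\,\bigr].
\end{equation*}

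Finally I extract $\Omega_{(n+1)}$ by taking the positive square root. Since $A = k/a + \mathcal{O}(k^{-1})$, and therefore $1/A = a/k + \mathcal{O}(k^{-3})$, the expansion $\Omega_{(n+1)} = A - D/(2A\,k^{2n+2}) + \mathcal{O}(k^{-2n-4})$ becomes $\Omega_{(n+1)} = A - aD/(2k^{2n+3}) + \mathcal{O}(k^{-2n-5})$, which closes the induction with $\gamma_{n+1} := -aD/2$ and provides an explicit recursion for the coefficients. The main obstacle is the careful bookkeeping of the leading cancellations and the propagation of subleading orders; in particular, one has to verify that the cross-term $\gamma_n^2/k^{4n+2}$ generated when squaring $\gamma_n/k^{2n+1}$ lies inside the claimed error $\mathcal{O}(k^{-2n-4})$ of $\Omega_{(n+1)}^2$, which is automatic for $n\geq 1$ and handled directly by the explicit calculation in the base case.
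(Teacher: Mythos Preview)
Your proof is correct and follows essentially the same inductive route as the paper: verify the base case by comparing the binomial expansion of $\omega_k$ with (\ref{eqn_asymp_c/2aj}), then feed (\ref{eqn_proof_hada_prop2a})/(\ref{eqn_proof_hada_prop2b}) into the recursion (\ref{eqn_adiabatic_omega}), observe that the leading geometric pieces cancel down to $A^2$, and extract the square root. The only cosmetic difference is that the paper substitutes the leading behaviour $\dot{\jmath}/\jmath = -2H + \mathcal{O}(k^{-2})$ into your $D$ to obtain the cleaner $t$-only recursion $\gamma_{n+1} = -\tfrac{a^2}{4}\bigl(\ddot{\gamma}_n + 3H\dot{\gamma}_n + \tfrac{R}{6}\gamma_n\bigr)$; your displayed $D(t)$ still carries the $k$-dependent $\dot{\jmath}/\jmath$, so strictly speaking it is $D(t;k)$, but the subleading part contributes only at $\mathcal{O}(k^{-2n-4})$ and is harmlessly absorbed in the error.
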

\begin{proof}
The proofs of these two lemmas are given in the appendix.
\end{proof}
\begin{theorem}
\label{lem_hadamard_prop_1}
Let $a(t)$ be a solution of the consistency equation (\ref{eqn_scalefactor_cond}), $q_k$ the associated seed solution (\ref{def_q(c>0)}) or (\ref{def_q(c<0)}) and $T_k^{(n)}$ an adiabatic vacuum state of order $n$ $-$ all defined on the closed time interval $I\subset \mathbb{R}$. Suppose furthermore that $q_k$ and $T^{(n)}_k$ are connected via the Bogoliubov transformation $q_k = \alpha_n(k)T^{(n)}_k+\beta_n(k)\,\bar{T}^{(n)}_k$. Then it holds, for large $k$,
\begin{equation*}
\abs{\alpha_n(k)}-1 = \mathcal{O}(k^{-2n-2}) \quad \text{and} \quad \abs{\beta_n(k)} = \mathcal{O}(k^{-2n-2}),
\end{equation*}
i.e. $q_k: I \rightarrow \mathbb{C}$ define a Hadamard state.
\end{theorem}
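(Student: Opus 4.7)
The key observation is that the seed solution \eqref{def_q(c>0)} can be put into explicit WKB form, matching the structure of the adiabatic mode function. Indeed, since $\dot\theta(t) = \sqrt{c_k}/(2a(t)^3 \jmath(t))$, we have $\jmath/\sqrt{c_k} = 1/(2a^3\dot\theta)$, so
\begin{equation*}
q_k(t) = \frac{1}{\sqrt{2\, a(t)^3\,\dot\theta(t)}}\, e^{\text{i}\,\theta(t)},
\end{equation*}
with the case $c_k<0$ admitting an analogous hyperbolic reformulation. This is exactly the form of $W_{(n)}(t;k)$ in \eqref{def_adiabatic_state_Wn} but with $\Omega_{(n)}$ replaced by $\dot\theta$. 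Hence the problem reduces to controlling how close $\Omega_{(n)}$ is to $\dot\theta = \sqrt{c_k}/(2a^3 \jmath)$, and this is precisely what lemmas \ref{lem_proof_hada_prop1} and \ref{lem_proof_hada_prop2} supply.

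Since $q_k$ and $T^{(n)}_k$ both solve the temporal KG equation \eqref{eqn_tkge} and satisfy the normalization \eqref{eqn_normkge}, the Bogoliubov coefficients in $q_k = \alpha_n(k)\,T^{(n)}_k + \beta_n(k)\,\bar T^{(n)}_k$ can be read off at any fixed $t_1 \in I$ as Wronskians,
\begin{equation*}
\beta_n(k) = \text{i}\, a(t_1)^3\, \bigl(T^{(n)}_k\,\dot q_k - \dot T^{(n)}_k\, q_k\bigr)\big|_{t_1}, \qquad \alpha_n(k) = -\text{i}\, a(t_1)^3\, \bigl(\bar T^{(n)}_k\,\dot q_k - \dot{\bar T}^{(n)}_k\, q_k\bigr)\big|_{t_1}.
\end{equation*}
Using the logarithmic derivatives $\dot q_k/q_k = \text{i}\dot\theta - \tfrac{3}{2}H - \ddot\theta/(2\dot\theta)$ and $\dot T^{(n)}_k/T^{(n)}_k = \text{i}\,\Omega_{(n)} - \tfrac{3}{2}H - \dot\Omega_{(n)}/(2\Omega_{(n)})$, the drift terms $3H/2$ cancel and one obtains
\begin{equation*}
\beta_n(k) = -\text{i}\, a(t_1)^3\, T^{(n)}_k(t_1)\, q_k(t_1)\,\left[\text{i}\,(\Omega_{(n)}-\dot\theta) + \tfrac{1}{2}\,\tfrac{d}{dt}\log\bigl(\Omega_{(n)}/\dot\theta\bigr)\right]_{t=t_1}.
\end{equation*}

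By lemma \ref{lem_proof_hada_prop2}, $\Omega_{(n)} - \dot\theta = \gamma_n(t)\,k^{-2n-1} + \mathcal{O}(k^{-2n-3})$ as $k\to\infty$, while lemma \ref{lem_proof_hada_prop1} gives $\dot\theta \sim k/a(t)$ to leading order. Consequently $\Omega_{(n)}/\dot\theta = 1 + \mathcal{O}(k^{-2n-2})$, and since the subleading terms in $\Omega_{(n)}$ are smooth in $t$ with $t$-derivatives of the same $k$-order (cf. equation \eqref{eqn_proof_hada_prop2a}), $\tfrac{d}{dt}\log(\Omega_{(n)}/\dot\theta) = \mathcal{O}(k^{-2n-2})$. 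The bracket is therefore of order $\mathcal{O}(k^{-2n-1})$, and the prefactor $|a^3\,T^{(n)}_k\, q_k| = (2\sqrt{\Omega_{(n)}\dot\theta})^{-1}$ is of order $\mathcal{O}(k^{-1})$. Multiplying gives $|\beta_n(k)| = \mathcal{O}(k^{-2n-2})$ as required. The bound on $\alpha_n$ is then automatic from the symplectic identity $|\alpha_n|^2 - |\beta_n|^2 = 1$ (remark \ref{rem_bogoliubov_trafo}), yielding the even stronger estimate $|\alpha_n(k)| - 1 = \mathcal{O}(k^{-4n-4})$. An application of theorem \ref{thm_olbermann_2} completes the argument.

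The main obstacle is the asymptotic bookkeeping, especially verifying that the $k$-dependence of $\jmath(t;k)$, which in general is rational in $k$ (cf. remark \ref{rem_cons_eqn}), does not spoil the decay rates under time differentiation; and handling the $c_k<0$ branch in parallel. Both tasks become routine once the seed solution has been identified as a WKB-type expression whose phase velocity $\dot\theta$ coincides with the leading order of the adiabatic frequency $\Omega_{(n)}$.
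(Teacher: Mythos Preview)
Your argument is correct and follows essentially the same route as the paper: evaluate the Bogoliubov coefficients at a fixed instant $t_1$, feed in the asymptotic expansions of lemmas \ref{lem_proof_hada_prop1} and \ref{lem_proof_hada_prop2}, and invoke theorem \ref{thm_olbermann_2}. The only difference is packaging---the paper sets $t_0=t_1$ so that $\theta(t_1)=0$ and the adiabatic phase vanishes, then solves the resulting $2\times2$ linear system for $\alpha_n,\beta_n$ explicitly, whereas you extract $\beta_n$ via the Wronskian and organize the estimate through the logarithmic derivatives; both computations collapse to the same comparison $\Omega_{(n)}-\dot\theta=\mathcal{O}(k^{-2n-1})$ against a prefactor of order $k^{-1}$.
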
 
\begin{proof}
By our assumption $q_k(t)$ and $T_k^{(n)}(t)$ are, for all $t\in I$, connected via a Bo\-gol\-iu\-bov transformation. In particular, this holds true for the initial time $t_1$, i.e.
\begin{align*}
q_k(t_1) \!	&=\! \alpha_n(k)T^{(n)}_k(t_1)\!+\!\beta_n(k) \bar{T}^{(n)}_k(t_1) = \alpha_n(k)W_{(n)}(t_1;k)\!+\!\beta_n(k)\bar{W}_{(n)}(t_1;k)\\[0.2cm]
\dot{q}_k(t_1)\!	&=\!  \alpha_n(k)\dot{T}^{(n)}_k(t_1)\! +\!\beta_n(k)\dot{\bar{T}}^{(n)}_k(t_1) =\alpha_n(k)\dot{W}_{(n)}(t_1;k)\!+\!\beta_n(k)\dot{\bar{W}}_{(n)}(t_1;k),
\end{align*}
where the dot denotes the derivative with respect to the time $t$. Without loss of generality we choose in the definition of $W_{(n)}$, cf. equation (\ref{def_adiabatic_state_Wn}), the so far unspecified parameter $t_0$ such that it equals $t_1$. Thereby the above initial conditions can be written in the simpler form 
\begin{align*}
q_k(t_1) 	&= \dfrac{\alpha_n(k)+\beta_n(k)}{\sqrt{2a(t_1)^3\Omega_{(n)}(t_1)}}\\[0cm]
\dot{q}_k(t_1)	&= -(\alpha_n(k)+\beta_n(k))\,\dfrac{3H(t_1)+\frac{\dot{\Omega}_{(n)}(t_1)}{\Omega_{(n)}(t_1)}}{2\sqrt{2a(t_1)^3\, \Omega_{(n)}(t_1)}}+\dfrac{\text{i}\,(\alpha_n(k)-\beta_n(k))\,\Omega_{(n)}(t_1)}{2\sqrt{2a(t_1)^3\, \Omega_{(n)}(t_1)}}.
\end{align*}
Replacing $q_k(t_1)$, $\dot{q}_k(t_1)$ by their explicit expressions resulting from (\ref{def_q(c>0)}) or (\ref{def_q(c<0)}) $-$ note that $\theta(t_1)=0$ $-$ and solving this system for $\alpha_n(k)$ and $\beta_n(k)$ we obtain
\begin{equation*}
\begin{array}{r c l}
\alpha_n(k) 	& = & \dfrac{1}{2}\,\sqrt{\dfrac{2 a^3 \jmath}{\sqrt{c_k}\,\Omega_{(n)}}}\left(\left(\dfrac{\sqrt{c_k}}{2 a^3 \jmath} + \Omega_{(n)}\right) - \text{i} \left(3 H + \dfrac{\dot{\jmath}}{\jmath} +\dfrac{\dot{\Omega}_{(n)}}{\Omega_{(n)}}\right)\right)\\[0.5cm]
\beta_n(k)	& = & \dfrac{1}{2}\,\sqrt{\dfrac{2 a^3 \jmath}{\sqrt{c_k}\,\Omega_{(n)}}}\left(\left(\Omega_{(n)}-\dfrac{\sqrt{c_k}}{2 a^3 \jmath} \right) + \text{i} \left(3 H + \dfrac{\dot{\jmath}}{\jmath} +\dfrac{\dot{\Omega}_{(n)}}{\Omega_{(n)}}\right)\right),
\end{array}
\end{equation*}
where all functions are evaluated at the initial time $t=t_1$. After we have taken the squared modulus of $\beta_n(k)$ we plug in the expansions (\ref{eqn_asymp_c/2aj}), (\ref{eqn_proof_hada_prop2a}) and (\ref{eqn_proof_hada_prop1}), which yields
\begin{align*}
\abs{\beta_n(k)}^2	& =  \dfrac{1}{4\left(\frac{k}{a}+\mathcal{O}(k^{-1})\right)^2}\left(\left(\dfrac{\gamma_n}{k^{2n+1}}+\mathcal{O}(k^{-2n-3})\right)^2+\left(\mathcal{O}(k^{-2n-2})\right)^2\right)\\
\\[-0.3cm]
			& =  \dfrac{a^2\,\gamma_n^2}{4\,k^{4n+4}}+\mathcal{O}(k^{-4n-5}).
\end{align*} 
If we finally take the square root we attain at the claimed order for $\abs{\beta_n}$. This also automatically induces the right order for $1-\abs{\alpha_n(k)}$ since $\abs{\alpha_n(k)}^2-\abs{\beta_n(k)}^2 = 1$.
\end{proof}
$\,$
\subsection{No-Go Theorem for the Chaplygin Gas Equation of State on Spacetimes with Non-Positive Scalar Curvature}
\label{sec_NoGoTheorem}
In the previous two subsections we have demonstrated how to obtain all HIH states once one seed (Hadamard) state is given. Furthermore, we have given a criterion on the scale factor sufficient to entail the existence of such a seed state. In this subsection we combine these results to make a statement about the existence of Hadamard states that obey the CG EoS. We will start with the spacetimes of Example $\emph{2}$ above and comment on the results with regard to their relation to the CG case.
\begin{theorem}
\label{thm_no-go-CG}
Let $(M,g)$ be a spatially flat FRW spacetime with scale factor $a\in C^\infty$ such that its scalar curvature is $R= \text{const.}\leq 0$ and let $\phi$ be the quantized minimally coupled scalar field of mass $m = \sqrt{-\frac{1}{6}R}$.\\
Then the only HIH state which induces a Chaplygin Gas equation of state $\state{p(t)}{\omega} = -\frac{A}{\state{\varrho(t)}{\omega}}$ with some $A>0$ is the Bunch-Davies state in the de Sitter case $a(t)=\beta_2\,\Exp{\frac{m}{\sqrt{2}}\,t}$, $\beta_2>0$, yielding $A=\frac{c^2\,m^8}{256\pi^4}$, $c\in \mathbb{R}$. For other choices of $a(t)$, no HIH state fulfills the Chaplygin Gas equation of state at all times.
\end{theorem}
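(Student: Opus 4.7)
The plan is to combine the seed state construction of subsection \ref{sec_Construction_of_a_Hadamard_State_on_certain_spacetimes} with the characterization theorem \ref{thm_characterisation_hihs} and then test the rigid functional equation obtained from the CG EoS on each of the five scale factors from Example 2.

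Because $R=\text{const.}$ and $m^2=-R/6$, the conformal-time Klein-Gordon equation reduces to $\tilde T_k''+k^2\,\tilde T_k=0$, so the seed solution of lemma \ref{lem_gauge_state} is the conformal vacuum $q_k(t)=(a(t)\sqrt{2k})^{-1}\,\Exp{-\text{i}\,k\,\eta(t)}$ with $\eta(t):=\int a(t)^{-1}\,\text{d}t$, and one checks directly that $\varrho_k^{(q)}(t)=\mathfrak{s}(t,k)$. Under these hypotheses $\mathfrak{h}^\varrho_{-3}\equiv 0$, so the simplified form in the remark after theorem \ref{thm_renorm_energy/pressure} applies and gives $\state{\varrho}{\text{seed}}(t)=-(2\pi)^{-3}\,\mathfrak{h}_0^\varrho(t)+C_{00}(t)$, which after substituting $\dot R=0$ and $m^2=-R/6$ reduces to a quadratic polynomial in $H(t)^2$ with coefficients depending only on $D_1,\ldots,D_4$. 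For a general HIH state theorem \ref{thm_characterisation_hihs} yields $T_k=\sqrt{1+\alpha^2}\,q_k+\alpha\,\Exp{-\text{i}\,\psi}\,\bar q_k$ with $\alpha,\Xi\in\text{ERD}(\mathbb{R}^+_0)$, and expanding $|\dot T_k|^2+\omega_k^2|T_k|^2$ decomposes $\state{\varrho}{\omega}$ into (i) the seed piece, (ii) a non-oscillatory ``thermal'' piece of the form $M_1(H^2+m^2)/a^2+M_3/a^4$ where $M_1,M_3\geq 0$ are moments of $(1+2\Xi)(1+2\alpha^2)-1$, and (iii) an oscillatory piece $\mathcal I(t)$ with kernel proportional to $\alpha\sqrt{1+\alpha^2}(1+2\Xi)\,\Exp{\pm 2\text{i}\,k\,\eta(t)}$.

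Next I rewrite the CG condition using lemma \ref{lem_HIH_pressure}: combining $\state{p}{\omega}=-A/\state{\varrho}{\omega}$ with $\dif{t}\varrho+3H(\varrho+p)=0$ gives $\dif{t}(a^6\varrho^2)=\dif{t}(a^6)\cdot A$, so the CG EoS is equivalent to $\varrho(t)^2=A+C/a(t)^6$ for some $A>0$, $C\in\mathbb{R}$. I then run the case analysis on the five scale factors of Example 2. For $a\propto\sqrt{\pm t+\beta_1}$ ($m=0$, $R=0$) the state-side carries only integer inverse powers of $t'=\pm t+\beta_1$, whereas $\sqrt{A+C/a^6}$ has to produce $t'^{-3/2}$, forcing $C=0$ and $\varrho$ to be a positive constant in $t$; moment non-negativity then forces $\Xi\equiv\alpha\equiv 0$ a.e., but the residual seed piece still depends on $t$ through $H^4\propto t'^{-4}$, a contradiction. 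The hyperbolic cases $a\propto\sqrt{\cosh(\cdot)}$ and $a\propto\sqrt{\sinh(\cdot)}$ are handled identically: the state-side carries only integer powers of $1/\cosh^2(u)$ or $1/\sinh^2(u)$, while $a^{-6}$ is $1/\cosh^{3/2}(u)$ or $1/\sinh^{3/2}(u)$, and these powers are incommensurate. Only for $a=\beta_2\,\Exp{m\,t/\sqrt{2}}$ does the seed piece become constant in $t$; $\varrho^2=A+C/a^6$ then forces $C=0$, the thermal piece $\sim M_1\,\Exp{-2H_0 t}+M_3\,\Exp{-4H_0 t}$ must vanish identically, giving by non-negativity $\Xi\equiv 0$ and $\alpha\equiv 0$ in the non-oscillatory part; finally $\mathcal I(t)$ must also be constant in $t$ and, since $t\mapsto\eta(t)$ is a smooth strictly-monotone bijection, injectivity of the Fourier representation forces $\alpha\equiv 0$ a.e. The remaining state is the Bunch-Davies state; for it $\varrho$ is constant and $p=-\varrho$ by the conservation identity of lemma \ref{lem_HIH_pressure}, so CG holds with $A=\varrho^2$, and an explicit evaluation of $-\mathfrak{h}^\varrho_0+(2\pi)^3 C_{00}$ reduces the renormalization freedom to a single overall constant $c$, yielding the stated value $A=c^2 m^8/(256\pi^4)$.

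The step I expect to be hardest is excluding the oscillatory contribution $\mathcal I(t)$ on de Sitter cleanly: one must show that no non-trivial $\alpha\,\Exp{\text{i}\,\psi}\in\text{ERD}$ can produce a $t$-independent integral against $\Exp{\pm 2\text{i}\,k\,\eta(t)}$. Because $\eta$ is a smooth bijection onto a half-line, $\mathcal I\circ\eta^{-1}$ is, up to an invertible $\eta$-dependent prefactor, a one-dimensional Fourier-type transform of an $L^1$ density, and injectivity of the Fourier transform forces the integrand to vanish. A parallel but easier Riemann-Lebesgue argument shows $\mathcal I(t)\to 0$ as $|\eta(t)|\to\infty$ in the $\sqrt{t}$ and hyperbolic cases, so $\mathcal I$ cannot supply any of the non-decaying or half-integer components that $\sqrt{A+C/a^6}$ would require.
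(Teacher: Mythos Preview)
Your overall structure---seed state plus Bogoliubov characterization, then case analysis---matches the paper, and your reformulation of the CG EoS as $\varrho(t)^2=A+C/a(t)^6$ is a legitimate and potentially cleaner alternative to the paper's direct product $\varrho\cdot p+A$. However, the execution of the case analysis has genuine gaps.

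For the hyperbolic scale factors your incommensurability claim is wrong on two counts. First, $a=\beta_2\sqrt{\cosh u}$ gives $a^{-6}=\beta_2^{-6}\cosh^{-3}u$, not $\cosh^{-3/2}u$. Second, even after this correction the state side does \emph{not} carry only integer powers of $\cosh^{-2}u$: the thermal term $(H^2+m^2)/a^2$ produces $\cosh^{-1}$ and $\cosh^{-3}$, the $\xi_3/a^4$ term gives $\cosh^{-2}$, and the $H\,\si_2/a^3$ piece contributes $\sinh\,u\,\cosh^{-5/2}u$. So $\cosh^{-3}$ on the CG side is perfectly commensurate with terms already present, and no contradiction follows by power counting. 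The paper itself concedes these two cases are ``quite involved'' and does not exhibit a short argument; you will need the same scaled-limit machinery as in the de~Sitter case.

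More generally, your power-matching strategy ignores that the coefficients $\co_n(\eta)$ and $\si_n(\eta)$ are themselves functions of time through $\eta(t)$, so one cannot simply read off ``the coefficient of $a^{-n}$'' and set it to zero. In de~Sitter $\eta\propto 1/a$, so the oscillatory piece mixes all inverse powers of $a$. The paper circumvents this by computing $\lim_{t\to\infty}a(t)^n\bigl(\varrho\,p+A\bigr)$ for increasing $n$, using l'H\^opital and the ERD property to differentiate under the integral; this isolates the conditions $\xi_1+\co_1(\eta_\infty)=0$, $\si_2(\eta_\infty)=0$, and crucially $\xi_3=5\co_3(\eta_\infty)$, after which the $n=5$ limit diverges unless $\xi_3=0$, and non-negativity of the integrand of $\xi_3$ forces $\alpha=\Xi=0$. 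Your route to $\xi_3=0$ (``non-negativity of the thermal moments forces them to vanish'') does not work as stated, because the thermal and oscillatory pieces can in principle cancel; you need an asymptotic argument that separates them. Finally, you omit the Minkowski case $a=\beta_2$, $m=0$, which the paper handles separately.
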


\begin{proof}
The spacetime and the quantized scalar field under consideration are precisely the ones from Example $\emph{2}$ of the previous section, i.e. we have to distinguish five different scenarios, three for $m>0$ and two for $m=0$.\\
So, in the proof's first part we calculate the energy density and pressure for an HIH state, where we only use the assumption $R=-6m^2$. Afterwards, we check the theorem's claim for each case separately. We begin with the de Sitter case and show by analyzing the ``late'' time behavior of the energy density and pressure that a CG EoS cannot be realized except for the Bunch-Davies state. The remaining four cases are treated similarly. 

In order to compute the energy density and pressure we begin with the expression for the energy density per mode, cf. (\ref{eqn_energydensity/preassure_mode}). By theorem \ref{thm_characterisation_hihs} the most general mode function inducing an HIH state is given by
\begin{equation}
\label{eqn_proof_nogo_CGEoS_1}
T_k(t)= \sqrt{1+\alpha(k)^2}\,q_k(t)+\alpha(k)\,\Exp{-\text{i}\psi(k)}\,\bar{q}_k(t),
\end{equation}
where $\alpha(k)$ is polynomially bounded and ERD while $\psi$ is an integrable real-valued function. The mode functions $q_k$ on the other hand must be the ones associated to a Hadamard state. But, since by Example $\emph{2}$ the scale factor under consideration is a solution to the consistency equation (\ref{eqn_scalefactor_cond}), this is by theorem \ref{lem_hadamard_prop_1} indeed the case for the seed solution (\ref{def_q(c>0)}), which reads
\begin{equation}
\label{eqn_proof_nogo_CGEoS_2}
q_k(t) = \frac{1}{\sqrt{2k}\,a(t)}\,\Exp{\text{i}\,k\,\eta(t)} \qquad \text{with} \qquad \eta(t) := \integ{t_0}{t}{\frac{1}{a(\tau)}}{\tau}
\end{equation}
being the conformal time. Using these particular $q_k$ and $T_k$ we can easily compute $\varrho_k(t)$ and obtain
\begin{align*}
\varrho_k(t) = 	&\;\frac{1+2\,\Xi(k)}{4 k\, a(t)^4}\left(1 + 2 \alpha(k)^2\right) \left(2 k^2 + a(t)^2\left(m^2 + H(t)^2\right)\right)\\[0.1cm]
		&\;\;+\,\frac{1 + 2 \Xi(k)}{2 k\, a(t)^2}\,\alpha(k)\,\sqrt{1 + \alpha(k)^2}\left(m^2 + H(t)^2\right) \cos(2 k\,\eta(t)+\psi(k))\\[0.1cm]
		&\;\;+\,(1 + 2 \Xi(k))\,\frac{\alpha(k)}{a(t)^3}\,\sqrt{1 + \alpha(k)^2}\,H(t)\, \sin(2 k\, \eta(t)+\psi(k)).
\end{align*}
Now we need to calculate the Hadamard series $\mathfrak{h}^{\varrho}(t,k)$ to renormalize $\varrho_k$. Since $R=-6m^2$ and therefore the field is effectively conformally coupled and massless, the logarithmically diverging part of $\mathfrak{h}^{\varrho}(t,k)$ vanishes so that
\begin{align*}
\mathfrak{h}^\varrho(t,k) = 	& \;\frac{k}{2a(t)^4}+\frac{H(t)^2+m^2}{4a(t)^2\,k}-\frac{\pi\,\delta_0(\mathbf{k})}{120}\left(H(t)^4+12H(t)^2 m^2-11m^4\right).
\end{align*}
Finally, for these particular spacetimes we find for the renormalization ambiguities, cf. equation (\ref{eqn_renorm_ambiguity_rw_spacetime}),   
\begin{align*}
C_{00} = &\, (D_1 \!-\!18 D_3 \!+\! 6D_4) m^4 \!+\! 3 (D_2 \!+\! 12 D_3 \!-\! 8 D_4) m^2 H(t)^2 \!+\! 24 D_4 H(t)^4\\[0.1cm]
   \equiv & \;\frac{1}{960\,\pi^2}\left(\tilde{D}_1\, m^4 + \tilde{D}_2\, m^2\, H(t)^2 + \tilde{D}_3\,H(t)^4\right),
\end{align*}
where $\tilde{D}_1, \tilde{D}_2, \tilde{D}_3 \in \mathbb{R}$ are relabeled renormalization degrees of freedom. That is, by putting all these results together and defining for $n\in \mathbb{N}$ the, in part, time-dependent moments
\begin{equation}
\label{eqn_def_xicosi}
\begin{array}{ r l}
 \xi_n &:= \ainteg{0}{\infty}{k^n\left(\Xi(k)+2\Xi(k)\,\alpha(k)^2 +\alpha(k)^2\right)}{k}, \\
\\[-0.3cm]
\alpha_n &:= \ainteg{0}{\infty}{k^n\,(1+2\Xi(k))\,\alpha(k)\,\sqrt{1+\alpha(k)^2}}{k},\\
\\[-0.3cm]
\co_n(\eta) &:=  \ainteg{0}{\infty}{k^n\,(1+2\Xi(k))\,\alpha(k)\,\sqrt{1+\alpha(k)^2}\,\cos\left(2k\,\eta+\psi(k)\right)}{k},\\
\\[-0.3cm]
\si_n(\eta) & :=  \ainteg{0}{\infty}{k^n\,(1+2\Xi(k))\,\alpha(k)\,\sqrt{1+\alpha(k)^2}\,\sin\left(2k\,\eta+\psi(k)\right)}{k},
\end{array}
\end{equation}
with $\xi_n \geq 0$, $-\alpha_n\leq \co_n(\eta) \leq \alpha_n$ and $-\alpha_n\leq \si_n(\eta) \leq \alpha_n$, we obtain for the energy density and pressure, cf. equation (\ref{eqn_renorm_energy}) with $k_0=0$ as well as equation (\ref{eqn_renorm_pressure}), 
\begin{align}
\state{\varrho(t)}{\omega}\! &=\! \dfrac{1}{960\pi^2}\,\left((\tilde{D}_1-11)m^4+(\tilde{D}_2+12)m^2\,H(t)^2+(1+\tilde{D}_3)H(t)^4\right) \nonumber\\
			  &\quad +\,\dfrac{H(t)^2+m^2 }{4 \pi^2 a(t)^2}\,(\xi_1+\co_1(\eta))\,+ \dfrac{\xi_3}{2 \pi^2 a(t)^4}+ \dfrac{H(t)\,\si_2(\eta)}{2 \pi^2 a(t)^3}\nonumber\\[0.2cm]
\state{p(t)}{\omega}\! &=\! \frac{1}{2880\pi^2}\!\left(\!(9\!-\!3\tilde{D}_1\!-\!2\tilde{D}_2\!-\!\tilde{D}_3) m^4\!+\!(\tilde{D}_2\!+\!8)m^2H(t)^2\!+\!(5\!+\!\!\tilde{D}_3)H(t)^4\right)\nonumber\\[0.1cm]
\label{eqn_renorm_energydensity/pressure_r6m2}
			& \quad+ \dfrac{H(t)^2\!-\!m^2}{4\pi^2 a(t)^2}\!\left(\xi_1\!+\!\co_1(\eta)\right)\!+\! \dfrac{\xi_3\!-\!2\co_3(\eta)}{6 \pi^2 a(t)^4}\!+\! \dfrac{H(t)\,\si_2(\eta)}{2 \pi^2 a(t)^3}.  
\end{align}
\emph{First case: $a(t)= \beta_2\,\Exp{\frac{m}{\sqrt{2}}\,t}$}. In order to prove that there is no possibility of fulfilling the CG EoS $\state{\varrho}{\omega}\state{p}{\omega}=-A<0$ at all times, we will consider various (scaled) limits of this EoS for $t\rightarrow \infty$, i.e. we will make use of the necessary condition
\begin{equation*}
\forall n\in \mathbb{N}_0: \qquad \lim_{t\rightarrow \infty}\;a(t)^n\left(\state{\varrho(t)}{\omega}\,\state{p(t)}{\omega}+A\right) = 0 
\end{equation*}
and show that it is violated. Introducing the new constant $c:=\frac{1}{240}\, ( 4\tilde{D}_1 + 2\tilde{D}_2 + \tilde{D}_3-19) \in \mathbb{R}$, which is the remaining renormalization degree of freedom for this spacetime, we can take the product of $\state{\varrho}{\omega}$ and $\state{p}{\omega}$, order the result by inverse powers of the scale factor and obtain
\begin{align}
\label{eqn_dS_varrho*p}
\state{\varrho}{\omega}\,\state{p}{\omega} &=  -\frac{c^2\, m^8}{256 \pi^4}\! -\! \frac{c\, m^6(\xi_1\!+\!\co_1(\eta))}{32 \pi^4\,a(t)^2}\!-\!\frac{c m^4(\xi_3 \!+\! \co_3(\eta)\!+\! \frac{9}{4 c} (\xi_1\! +\! \co_1(\eta))^2)}{48 \pi^4\, a(t)^4}\nonumber\\[0.1cm]
				& \quad+\,\frac{m^3\,\si_2(\eta)\, (\xi_1 + \co_1(\eta))}{8 \sqrt{2} \pi^4\,a(t)^5} +\frac{m^2 (\si_2(\eta)^2 - (\xi_1 + \co_1(\eta))\,\co_3(\eta))}{8 \pi^4\, a(t)^6}\nonumber\\[0.1cm]
				& \quad+\,\frac{m\,\si_2(\eta)\,(2 \xi_3 - \co_3(\eta))}{6 \sqrt{2} \pi^4\, a(t)^7} + \frac{\xi_3\,(\xi_3 - 2 \co_3(\eta))}{12 \pi^4\, a(t)^8}.
\end{align} 
Before we can take the limits announced above we observe that  $\co_n(\eta)$ and $\si_n(\eta)$ are continuous functions, due to the dominant convergence theorem. Furthermore, we will denote $\lim\limits_{t\rightarrow \infty} \eta(t) \equiv \eta_\infty \in \mathbb{R}\cup\{\pm \infty\}$. As $\xi_n$, $\co_n(\eta)$ and $\si_n(\eta)$ are also always finite while $a(t)$ is diverging for $t\rightarrow \infty$ we find for the first limit
\begin{equation*}
0 = \lim_{t\rightarrow \infty}\left(\state{\varrho(t)}{\omega}\,\state{p(t)}{\omega}+A\right) =-\frac{c^2\, m^8}{256 \pi^4}+ A, 
\end{equation*}
which uniquely fixes the CG parameter $A$. To take the next limits scaled with $a(t)^n$, $n=2,3,4,$  we need to employ l'H\^{o}pital's rule, as for $n>2$ the $a(t)^n$ factor diverges while $(\state{\varrho(t)}{\omega}\,\state{p(t)}{\omega}+A)$ tends to zero for $t\rightarrow \infty$ $-$ note that, since the functions $\alpha(k)$ and $\Xi(k)$ are ERD we can differentiate in $\co_n$ and $\si_n$ under the integral. Hence, the various limits yield
\begin{align}
0& = \lim_{t\rightarrow \infty}a(t)^2\left(\state{\varrho(t)}{\omega}\,\state{p(t)}{\omega}+A\right) = - \frac{c\,m^6}{32 \pi^4}\,\,(\xi_1+\co_1(\eta_\infty))\nonumber\\
0& = \lim_{t\rightarrow \infty}a(t)^3\left(\state{\varrho(t)}{\omega}\,\state{p(t)}{\omega}+A\right) = -\frac{c\,m^5}{8 \sqrt{2} \pi^4}\,\si_2(\eta_\infty)\nonumber\\
\label{eqn_nogotheorem_proof_2}
0&= \lim_{t\rightarrow \infty}a(t)^4\left(\state{\varrho(t)}{\omega}\,\state{p(t)}{\omega}+A\right) = \; -\,\frac{c\, m^4}{48 \pi^4}\, (\xi_3 - 5 \co_3(\eta_\infty)).
\end{align}
Now we have all requirements to show that $a(t)^5\left(\state{\varrho(t)}{\omega}\,\state{p(t)}{\omega}+A\right)$ almost always diverges. By proceeding in the same way as for the previous limits we see that 
\begin{align*}
&\lim_{t\rightarrow \infty}a(t)^5 \left(\state{\varrho(t)}{\omega}\,\state{p(t)}{\omega}+A\right)\\
&\, =\!-\;\!\! \frac{c\;\! m^6}{32 \pi^4}\!\lim_{t\rightarrow \infty}\!a(t)^3(\xi_1\!+\!\co_1(\eta))\!-\!\frac{c\;\! m^4}{48 \pi^4}\!\lim_{t\rightarrow \infty}\! a(t)(\xi_3\! +\! \co_3(\eta)\!+\! \frac{9}{4 c} (\xi_1\! +\! \co_1(\eta))^2)\\[0.25cm]
&\, = \! \frac{c\, m^4}{48 \pi^4}\,\lim_{t\rightarrow \infty} a(t)\,(\co_3(\eta)-\xi_3).
\end{align*}
But due to equation (\ref{eqn_nogotheorem_proof_2}) the function $\co_3$ has to converge to $\frac{1}{5}\,\xi_3$ and thus the previous limit diverges unless $\xi_3 \equiv 0$. This is only possible if one chooses $\alpha=\Xi=0$, i.e. the (unique) Bunch-Davies state $\omega_\text{BD}$. In this state the energy density and pressure become constant and therefore (trivially) obey the CG EoS. So, a CG EoS is in general not possible on an FRW background with scale factor $a(t)= \beta_2\,\Exp{\frac{m}{\sqrt{2}}t}$.\\[0.3cm]
\emph{Second case: $a_\text{s}(t)=\beta_2\sqrt{\sinh(\sqrt{2}m\,t+\beta_1)}$ or $a_\text{c}(t)=\beta_2\sqrt{\cosh(\sqrt{2}m\,t+\beta_1)}$.}
To show that there is no CG EoS inducing HIH state for these two scale factors, one can repeat the same reasoning as above $-$ namely that there is a power $n\in \mathbb{N}$ such that $a_\text{c/s}(t)^n\left(\state{\varrho(t)}{\omega}\,\state{p(t)}{\omega}+A\right)$ diverges for all states. As these scale factors are more complex the computations become quite involved but nevertheless accomplish the aim. In particular, for both scale factors there is no state simultaneously yielding a constant energy density and pressure.\\[0.3cm]
\emph{Third case: $a(t) = \beta_2\,\sqrt{t+\beta_1}$}. Plugging in this scale factor into the expressions for $\state{\varrho}{\omega}$ and $\state{p}{\omega}$, cf. equation (\ref{eqn_renorm_energydensity/pressure_r6m2}), setting $m=0$ and using that $2H(t)= a(t)^{-2}$ we immediately get
\begin{align*}
\state{\varrho(t)}{\omega}& = \frac{1 + D_3}{15360 \pi^2\,a(t)^8} + \frac{\xi_3}{2\pi^2\, a(t)^4}+\frac{\si_2(\eta)}{4\pi^2\, a(t)^5} + \frac{\xi_1 + \co_1(\eta)}{16\pi^2\, a(t)^6}\;\overset{t\rightarrow \infty}{\longrightarrow}\; 0\\
\\[-0.3cm]
\state{p(t)}{\omega}& =  \frac{5 + D_3}{46080 \pi^2 a(t)^8} + \frac{\xi_3-2\co_3(\eta)}{6 \pi^2\, a(t)^4} + \frac{\si_2(\eta)}{4\pi^2\, a(t)^5}+ \frac{\xi_1 + \co_1(\eta)}{16 \pi^2\, a(t)^6}\; \overset{t\rightarrow \infty}{\longrightarrow}\; 0,
\end{align*} 
which obviously does not allow for a CG EoS.\\[0.3cm]
\emph{Fourth case: $a(t) = \beta_2$.} For the massless scalar field on Minkowski space the energy density and pressure are given by $\state{\varrho(t)}{\omega} = \xi_3/(2\pi^2\beta_2^4)$ and $\state{p(t)}{\omega} = (\xi_3-\co_2(\eta))/(6\pi^2\beta_2^4)$, respectively. As the energy density is constant in time so has to be the pressure in order to allow for a consistent equation of state. That is, $\co_2(\eta) \equiv 0$. But then $\state{\varrho(t)}{\omega}$ and $\state{p(t)}{\omega}$ obey the well-known equation of state $\state{p}{\omega}=\frac{1}{3}\state{\varrho}{\omega}\geq0$. 
\end{proof}
\begin{remark}
If the scale factor possesses a zero at $t_0 \in \mathbb{R}\cup \{\pm\infty\}$ we can also consider the scaled limits $\lim\limits_{t\rightarrow t_0}\;a(t)^{n}\left(\state{\varrho(t)}{\omega}\,\state{p(t)}{\omega}+A\right) = 0,\;n\in\mathbb{N},$ to construct a contradiction. For instance in the de Sitter case, we obtain by equation (\ref{eqn_dS_varrho*p}) for the limit\footnote{Note that by the Riemann-Lebesgue lemma $\lim\limits_{\eta(t)\rightarrow \pm\infty} \co_n(\eta(t)) = 0.$} $0= \lim\limits_{t\rightarrow -\infty}\;a(t)^{8}\left(\state{\varrho(t)}{\omega}\,\state{p(t)}{\omega}+A\right) = \xi_3^2/(12\pi^4)$ and hence the same necessary condition $\xi_3=0$ as for the limit $t \rightarrow \infty.$
\end{remark}
The theorem basically shows that on spacetimes with a non-positive constant scalar curvature the late time scaling behavior of energy density and pressure are incompatible with the scaling behavior required by the CG EoS. Here the existence of one exception, namely the Bunch-Davies state, is merely due to the high symmetry of de Sitter space. In all other cases this symmetry is not present and hence a state yielding a constant energy density and pressure does not exist. In particular, such a state would rather mimic the properties of a cosmological constant than the properties of the CG EoS.\\
Now the CG scale factor (\ref{eqn_cg_scalefactor}) converges at late times to a de Sitter scale factor $-$ in terms of the conformal time $\eta(t):= \int^\infty_t a_\text{cg}(\tau)^{-1}\text{d}\tau$ one can for instance show that 
\begin{equation}
\label{eqn_latetime_acg}
\lim\limits_{\eta\rightarrow 0}\;\!\! \frac{1}{\eta^4}\!\left(\!(a_\text{cg}\circ t)(\eta)\!-\!\frac{\sqrt{3}}{\sqrt{8\pi} A^\frac{1}{4}\eta}\right)\;\!\!\!=\;\!\!0  \text{ and } \lim\limits_{\eta\rightarrow 0}\!\;\! \frac{1}{\eta^5}\!\left(\frac{1}{6}(R\circ t)(\eta)\!+\! \frac{16}{3}\pi\sqrt{A}\right)\;\!\!\! =\!\;\!0
\end{equation}
with $a_\text{dS}(\eta)=\sqrt{\frac{3}{8\pi\sqrt{A}}}\,\eta^{-1}$ being the de Sitter scale factor with respect to conformal time. Therefore, one might hope that one can approximate the late time behavior of the energy density and pressure by the respective late time behavior on a de Sitter space and thus show that there is at most one candidate state, namely the one that converges to the Bunch-Davies state as this is the only state on de Sitter space with the right scaling behavior. Using the latter equation of (\ref{eqn_latetime_acg}), one can even argue, similar to Dapiaggi, Moretti \& Pinamonti in the proof of their theorem 4.2 in \cite{dap_mor_pin},\footnote{One recasts the temporal KG equation by the ansatz $T_k = \varphi_k/a_\text{cg}$ and a change to conformal time into $\varphi_k''(\eta)+k^2\varphi_k(\eta) + (a\circ t)(\eta)^2(m^2+\frac{1}{6}(R\circ t)(\eta))\varphi_k(\eta) = 0$. The last term can be considered as a perturbation as it vanishes sufficiently fast for $\eta \rightarrow 0$ and thus allows the construction of a solution in terms of the Dyson series.} that there is a state $T_k(t):=\varphi_k(\eta(t))/a_\text{cg}(t)$ on the CG spacetime such that 
\begin{equation*}
\lim\limits_{\eta \rightarrow 0}\,\frac{\text{d}^n}{\text{d}\eta^n}\left(\varphi_k(\eta)-\chi_k(\eta)\right) = 0 \qquad \quad (n=0,...,5),
\end{equation*}        
where $\chi_k(\eta)=\frac{1}{\sqrt{2k}} \Exp{-\text{i}k \eta}$ is the Bunch-Davies state on the associated ``limiting'' de Sitter space written with respect to conformal time. This means there actually is a state on the CG spacetime that converges for $t\rightarrow \infty$ to the Bunch-Davies. However, in order to make this heuristic argument rigorous one would at least need to show that $T_k(t)$ is also a Hadamard state $-$ which is not obvious and could not be achieved so far.\footnote{The techniques used by Dapiaggi, Moretti and Pinamonti, cf. \cite{dap_mor_pin_2}, in order to prove that their bulk-to-boundary state is actually Hadamard if the background spacetime converges for $t\rightarrow 0$ sufficiently fast towards a de Sitter spacetime heavily rely on the fact that those spacetimes possess a (geodesically complete) past cosmological horizon. A similar horizon does however not exist if the spacetimes converge for $t \rightarrow \infty$ towards a de Sitter space, as it is considered here.}\\
Another problem arises for the case of the CG spacetime if one approaches the Big Bang singularity at $t=0$. In contrast to the $R=-6m^2$-scale factors the functions $\mathfrak{h}^{\varrho}_{-3}(t)$ and $\mathfrak{h}^{p}_{-3}(t)$ of the Hadamard expansion, cf. (\ref{eqn_h_rho}) in combination with (\ref{eqn_renorm_pressure}), respectively, do not vanish for $a_\text{cg}$ so that one needs to use the complete formula (\ref{eqn_renorm_energy}), again in combination with (\ref{eqn_renorm_pressure}), to calculate the energy density and pressure. In particular, the term proportional to $\log a_\text{cg}(t)$ probably remains in the final expressions for $\state{\varrho}{\omega}$ and $\state{p}{\omega}$ $-$ note that this term cannot be canceled by some special choice of the renormalization ambiguities but, if at all possible, would have to be canceled by a state dependent counterpart. That means, for $t\rightarrow 0$ \emph{both} $\state{\varrho}{\omega}$ and $\state{p}{\omega}$ presumably diverge, which is incompatible with an EoS of the form $\state{p}{\omega}=-A\,\state{\varrho}{\omega}^{-1}$. Regarding this problem it would be interesting to also have a statement similar to the one in theorem \ref{thm_no-go-CG} for the scale factor $a(t) = \sinh^\frac{2}{3}(\frac{3m}{2\sqrt{2}}\,t)$, since this scale factor shows the same asymptotic behavior for early and late times as the CG scale factor $a_\text{cg}$. By Example $\emph{3}$ of section \ref{sec_Construction_of_a_Hadamard_State_on_certain_spacetimes} we would even have a seed state to repeat the analysis. Unfortunately, the calculations become too extensive so that no conclusion could yet be drawn.

\section{Conclusion and Discussion}

In this paper we tried to answer the question whether it is possible that quantum matter can induce a Chaplygin gas equation of state (CG EoS). If this was indeed the case one would not need to introduce some mysterious dark energy of unknown nature in order to explain the observed accelerated expansion of the universe.\\
We approached the problem within the framework of quantum field theory on curved spacetimes. Then the task is to see if there are physical states, i.e. Hadamard states, which solve the semi-classical Einstein equation such that the CG EoS holds. As we worked on a spatially flat FRW background, which is spatially homogeneous and isotropic, we assumed the same symmetry for the states under consideration $-$ note that this is sufficient to obtain an energy-momentum tensor of perfect fluid type. However, not every Hadamard state is homogeneous and isotropic and vice versa not every homogeneous and isotropic state is Hadamard. So in the first step we deduced the transformation behavior between two (spatially) homogeneous and isotropic Hadamard (HIH) states. It turned out that the coefficients of the connecting Bogoliubov transformation have to be of essentially rapid decay. Then we presented a new approach for the explicit construction of an HIH state. While this method is restricted to a small class of scale factors $-$ yet some very important scale factors are contained in this class $-$ the obtained Hadamard state is of a simple form.\\
Unfortunately, it is not possible to explicitly solve the temporal Klein-Gordon equation on the Chaplygin gas (CG) spacetime, i.e. an FRW spacetime equipped with the CG scale factor $a_\text{cg}$. This solution is however needed in order to explicitly construct the HIH states and thus to decide whether their associated energy density and pressure obey the CG EoS on this particular spacetime. Therefore, we analyzed a special cases, namely the scalar field on $R=\text{const.}\leq 0$ spacetimes. In contrast to the CG spacetime these spacetimes allow for an explicit construction of the energy density and pressure, as long as one chooses the scalar field's mass to be $m=\sqrt{-\frac{1}{6}R}$, and hence are available for a direct analysis $-$ as in this case the scalar field is effectively conformally coupled, the same analysis should apply to the massless conformally coupled scalar field up to some minor alterations, for instance one would have to use a different EMT. By investigating the energy density's and pressure's late time behavior we were able to show that except for the de Sitter space, which belongs to the class of spacetimes considered here, there is no HIH state inducing the CG EoS at all times. In the de Sitter case only the Bunch-Davies state yields a CG EoS. But, the energy density and pressure have to be constants. In particular note that the existence of this exception is only possible due to the high symmetry of the de Sitter space, a symmetry which is in general not present on other FRW spacetimes, either.\\
Finally, we presented some (heuristic) arguments that seem to contradict the existence of HIH states entailing the CG EoS on the CG spacetime.\\
However, as our argumentation was entirely based on the scaling behavior of the energy density and pressure at late and early times it is, strictly speaking, still possible that there are HIH states, which allow for an approximate CG EoS in the intermediate time regime. While this might still be a worthwhile possibility from a conceptual point of view we see less use for those states in their application to cosmology. The major advantage of the CG EoS was that it provides a single and simple ``mechanism'' to explain the occurrence of a Big Bang, a matter dominated phase and a ``dark energy era'' in the evolution of the universe in a \emph{unified} way. This therefore requires that this EoS holds at \emph{all} times. Even if one wants to use the CG EoS just as an explanation for dark energy it has to be valid at least at late times.


\begin{appendix}
\section{Appendix}
\subsubsection*{Proof of lemma \ref{lem_h_rho/p}}
\begin{proof}
The leading and subleading orders $\mathfrak{h}^{\varrho}_{2n-1}(t)$, $n=-1,0,1$, have been already determined in \cite{eltz_gott}, thus we can confine ourselves to the zeroth order contribution $\mathfrak{h}^\varrho_0(t)$. This is obtained by a very tedious but straight-forward calculation. For this reason we only describe the general line of reasoning $-$ a more detailed explanation of this computation is however given in \cite{zschoche}.\\
Let us start with the expansion of the Hadamard parametrix as defined in (\ref{eqn_hadamard_prop_2}) with $\varepsilon=0$ and its derivatives. Due to the spatial homogeneity and isotropy of a flat FRW spacetime the squared geodesic distance $\sigma(q,q')$ depends on $q$ and $q'$ solely via $t, t'$ and $z:=\norm{\mathbf{x-y}}$, the same holds for the Hadamard recursion coefficients $u(q,q')$ and $v_j(q,q')$ . Furthermore, it was shown in \cite{moretti_0} that $u(q,q')$ and $v_j(q,q')$ are actually symmetric so that $\mathcal{H}^\text{s}_{l,0}=\mathcal{H}_{l,0}$ and since the energy-momentum operator only contains derivatives up to 2nd order, it suffices to consider the case $l=1$ $-$ all terms stemming from contributions with $l>1$ vanish after taking the coincidence limit. Therefore, we have 
\begin{equation*}
\mathcal{H}^\text{s}_{1,0}(t,t',z) =  -\frac{u}{4\pi^2\,\sigma}-\frac{1}{4\pi^2}\,\left(v_0+v_1\,\sigma\right)\, \log\left(-\frac{\sigma}{\lambda^2}\right).
\end{equation*} 
Hence, to determine $\mathfrak{h}^\varrho_0(t)$ one needs to calculate the zeroth order contribution from an expansion of $\left[H_1(t,t',z)\right]_{t'=t}$, $\left[\nab{0}\!\otimes\nab{0'}H_1(t,t',z)\right]_{t'=t}$ and $\sum^3_{i=1}\left[\nab{i}\!\otimes\nab{i'}H_1(t,t',z)\right]_{t'=t}$ with respect to $z$ at $z_0=0$ and add them up according to the definition of $\mathfrak{h}^\varrho$. This can be however achieved in a straight-forward manner employing the expansions for $\sigma(t,t',z),\, u(t,t',z),\, v_0(t,t',z)$ and $v_1(t,t',z)$ as given in \cite{eltz_gott}. 
\end{proof}

\subsubsection*{Proof of lemma \ref{lem_HIH_pressure}}
\begin{proof}
To shorten notations we sketch the proof for a massless scalar field in a pure HIH state. First we have to apply $T^\text{can}_{i i'}$ to the two-point function $\mathscr{W}^{\omega}_2(q,q')$ of the HIH state, cf. equation (\ref{eqn_2pt_his_3}), which yields in the coincidence limit $q' \rightarrow q$ the ``unrenormalized pressure'':
\begin{align}
\clim{T^\text{can}_{i i'}\mathscr{W}^{\omega}_2} & = \lim_{\delta \rightarrow 0}\,\int_{\mathbb{R}^3}\frac{a(t)^2\, \Exp{-\delta k}}{2(2\pi)^3}\left(\abs{\dot{T}_k(t)}^2\!+\!\left(\frac{k^2_i\!-\!\sum_{j\neq i}k_j^2}{a(t)^2}\right) \abs{T_k(t)}^2\right) \text{d}^3\mathbf{k}\nonumber\\
	& =: a(t)^2\,\lim\limits_{\delta \rightarrow 0} p^\delta_i(t)\nonumber 
\end{align} 
Note that, since the integrand of $p_i^\delta(t)$ was not renormalized, the limit $\delta \rightarrow 0$ is not well-defined. However, $p_i^\delta(t)$ is finite and as $T_k(t)$ depends on $\mathbf{k}$ only via $k=\abs{\mathbf{k}},$ which entails, for all $i,j = 1,2,3$ and $\delta >0,$ the identity
\begin{equation*}
\integ{\mathbb{R}^3}{}{k^2_i\, \abs{T_k(t)}^2\,\Exp{-\delta k}}{^3\mathbf{k}} = \integ{\mathbb{R}^3}{}{k^2_j\, \abs{T_k(t)}^2\,\Exp{-\delta k}}{^3\mathbf{k}},
\end{equation*} 
we have 
\begin{equation*}
p_1^\delta(t) = \integ{\mathbb{R}^3}{}{\,\frac{1}{2(2\pi)^3}\left(\abs{\dot{T}_k(t)}^2-\frac{k^{\,2}_1}{a(t)^2}\, \abs{T_k(t)}^2\right) \Exp{-\delta k}}{^3\mathbf{k}} = p_2^\delta(t)  = p_3^\delta(t).
\end{equation*}
That is, the ``unrenormalized pressure'' is isotropic. As the spacetime and thus the Hadamard parametrix as well as the renormalization ambiguities $C_{a b}$ are also isotropic (and spatially homogeneous), this property remains conserved even after renormalization, i.e. it will hold $\state{p_1(t)}{\omega}=\state{p_2(t)}{\omega}=\state{p_3(t)}{\omega}$ $-$ the generalization to $m>0$ and $\Xi(k)>0$ is straight-forward.\\[0.2cm]
Now that we established the EMT's isotropy, equation (\ref{eqn_renorm_pressure}) follows immediately from the covariant conservation of the EMT, which holds by construction.
\end{proof}

\subsubsection*{Proof of lemma \ref{lem_prop_erd_1}}
\begin{proof}
Claim \emph{$(1)$} is trivial. The forward direction of the second assertion follows directly from the observation that, for $k=(k_1,...,k_n)\in \mathbb{R}^n$, we have $k_i \leq \abs{k}$. Thus, for any multi-index $(l_1,...,l_n)\in \mathbb{N}^n$ with $\sum_i l_i =: N$, we can conclude $\int_\Omega k_1^{l_1} \cdot ... \cdot k_n^{l_n} f(k)\text{d}\mu \leq \int_\Omega \abs{k}^N \abs{f(k)}\text{d}\mu < \infty.$ The inverse statement can be obtained if one expands the power $\abs{k}^{2m}$, takes into account that $\abs{f}=f$ and thus expresses $\norm{\abs{k}^{2m} f(k)}_1$ as a finite sum of finite moments of $f$. This in turn immediately yields the finiteness for $\norm{\abs{\,\cdot\,}^{2m-1}\,f}_1$, too, since
\begin{align*}
\norm{\abs{\,\cdot\,}^{2m-1}f}_1& = \integ{B_1(0)}{}{k^{2m-1}  \abs{f(\mathbf{k})}}{^n\mathbf{k}}+\!\!\!\!\integ{\mathbb{R}^n\backslash B_1(0)}{}{\!\!k^{2m-1} \abs{f(\mathbf{k})}}{^n\mathbf{k}}\\
						    & \leq \norm{f}_1+\norm{\abs{\,\cdot\,}^{2m} f}_1 < \infty.
\end{align*}
The third claim is again trivial. While the first part of the fourth assertion follows directly from the triangle inequality $\abs{f+g}\leq \abs{f}+\abs{g}$ the second one is a consequence of $\abs{f}\leq \abs{f+g}$ since $g$ is positive. To show claim \emph{$(5)$} we estimate $ \norm{(1+\abs{\,\cdot\,})^m\, g}_1 = \norm{f^{-1}\,(1+\abs{\,\cdot\,})^m\, f\cdot g}_1 \leq \norm{(1+\abs{\,\cdot\,})^m\, f\cdot g}_1\; \norm{f^{-1}}_\infty \leq \norm{(1+\abs{\,\cdot\,})^m\, f\cdot g}_1\; \inf\limits_{k\in \Omega} \abs{f(k)} <\infty$, where the first inequality is due to the H\"older inequality with $p=1$ and $q=\infty$. Using claim \emph{$(1)$} this proves \emph{$(5)$}.\\
The last assertion is again basically a consequence of H\"older's inequality (HI). Namely, suppose that the assumptions of claim \emph{$(6)$} are fulfilled for some fixed $p\in(1,\infty)$ and that $l\in \mathbb{N}$ $-$ note that by claim \emph{(1)} the case $p=1$ is already proven. Then we have 
\begin{align*}
\norm{(1\!+\!\abs{\,\cdot\,})^m f}_1 &\! =\! \norm{(1\!+\!\abs{\,\cdot\,})^{m+l} f (1\!+\!\abs{\,\cdot\,})^{-l}}_1\!\overset{\text{HI}}{\leq}\! \norm{(1\!+\!\abs{\,\cdot\,})^{m+l} f}_p \norm{(1\!+\!\abs{\,\cdot\,})^{-l}}_{\frac{p}{p-1}} \\[0.2cm]
			     &\!\;\! =\!\;\! \norm{(1+\abs{\,\cdot\,})^{m+l}\,f}_p\,\norm{(1+\abs{\,\cdot\,})^{-\frac{l\,p}{p-1}}}_{1}^{\frac{p-1}{p}}.
\end{align*}
Since $-\frac{p}{p-1}<-1$, for $1<p<\infty$, we can always find a sufficiently large $l\in \mathbb{N}$ such that $\norm{(1+\abs{\,\cdot\,})^{-\frac{l\,p}{p-1}}}_{1} < \infty$. On the other hand $\norm{(1+\abs{\,\cdot\,})^{m+l}\,f}_p$ is finite by assumption. That is $\norm{(1+\abs{\,\cdot\,})^m\,f}_1< \infty$ and by claim \emph{(1)} the function $f$ therefore has to be ERD.
\end{proof}

\subsubsection*{Proof of lemma \ref{lem_smoothness_moments}}
\begin{proof}
Let $t > 0$ and $\mathbf{e} \in \mathbb{R}^n$ an arbitrary direction of $\mathbb{R}^n$. Then we define for $h:\mathbb{R}^n \rightarrow \mathbb{C}$ the $m$-th balanced difference by $\Delta^{t\cdot \mathbf{e}}_1\, h(\mathbf{y})  := h(\mathbf{y}+t\cdot \mathbf{e}) - h(\mathbf{y}-t\cdot \mathbf{e}) \quad \text{and}\quad \Delta^{t\cdot \mathbf{e}}_{m+1}\, h(\mathbf{y})  := \Delta^{t\cdot \mathbf{e}}_1\,\Delta^{t\cdot \mathbf{e}}_m\, h(\mathbf{y}).$ By induction it can be shown that this recursive definition is equivalent to the explicit form
\begin{equation*}
\Delta^{t\cdot \mathbf{e}}_m\, h(\mathbf{y}) = \sum^m_{l=0} (-1)^l\, \binom{m}{l} \, h\left(\mathbf{y}+(m-2l)\,t\cdot \mathbf{e}\right).
\end{equation*}
If we apply this result to the function $h(\mathbf{y})=\Exp{\text{i}\,\mathbf{k}\mathbf{ y}},\;\mathbf{k}\in\mathbb{R}^n,$ this yields
\begin{align}
\label{eqn_delta_sinus}
\Delta^{t\cdot \mathbf{e}}_m\, h(\mathbf{y}) & = \Exp{\text{i}\;\! \mathbf{k y}} \left(\Exp{\text{i} t\, \mathbf{k e}}-\Exp{-\text{i} t \,\mathbf{k e}}\right)^m = \Exp{\text{i} \mathbf{k y}} \left(2\text{i}\sin(t\,\mathbf{k e})\right)^m.
\end{align}  
Now suppose $\omega \in C^\infty(\mathbb{R}^n)$. Hence all derivatives of $\omega$ exist, in particular the $(2s)$-th derivative in the direction of $\mathbf{e}$ at $\mathbf{y}=0$, and we have
\begin{equation*}
\infty > M \equiv \abs{\left.\nabla^{2s}_{\mathbf{e}}\omega\right|_{\mathbf{y}=0}} = \underset{t\rightarrow 0}{\lim} \abs{\frac{\Delta^{t \cdot \mathbf{e}}_{2s}\,\omega(0)}{(2t)^{2s}}}.
\end{equation*}
Using the definition of $\omega$ and equation (\ref{eqn_delta_sinus}), the balanced difference at $\mathbf{y}=0$ is given by 
\begin{equation*}
\Delta^{t \cdot \mathbf{e}}_{2s}\,\omega(0)=(-1)^s \lim\limits_{\varepsilon \rightarrow 0} \integ{\mathbb{R}^n}{}{(2\sin(t\,\mathbf{k e}))^{2s}\varphi(\mathbf{k})\,\text{e}^{-\varepsilon\,k}}{^n \mathbf{k}}.
\end{equation*}
Because $\varphi$ is non-negative, we can then estimate $M$ from below with
\begin{align}
M & = \underset{t\rightarrow 0}{\lim}\,\lim\limits_{\varepsilon\rightarrow 0} \integ{\mathbb{R}^n}{}{\left(\frac{\sin(t\,\mathbf{k e})}{t}\right)^{2s} \varphi(\mathbf{k})\,\text{e}^{-\varepsilon\,k}}{^n \mathbf{k}}\nonumber\\
    & \geq \underset{t\rightarrow 0}{\lim} \,\lim\limits_{\varepsilon\rightarrow 0} \integ{B_R(0)}{}{\left(\frac{\sin(t\,\mathbf{k e})}{t}\right)^{2s} \varphi(\mathbf{k})\, \text{e}^{-\varepsilon\,k}}{^n \mathbf{k}}\nonumber\\
    & =\underset{t\rightarrow 0}{\lim} \integ{B_R(0)}{}{\left(\frac{\sin(t\,\mathbf{k e})}{t}\right)^{2s} \varphi(\mathbf{k})}{^n\mathbf{ k}} = \integ{B_R(0)}{}{(\mathbf{k e})^{2s} \varphi(\mathbf{k})}{^n \mathbf{k}} 
\label{eqn_smoothness_moments_proof_1}
\end{align}
for \emph{all} positive $R$, where $B_R(0)$ is the $n$-dimensional ball around the origin with radius $R$. Choosing $\mathbf{e}$ as one of the axis' directions $\mathbf{e}_i$ we immediately get the existence of the $(2s)$-th $k_i$-moment of $\varphi$, since the last integral in (\ref{eqn_smoothness_moments_proof_1}) can be viewed as a strictly increasing real-valued function of the ball's radius $R$ that is bounded from above by $M$ and therefore has to converge for $R\rightarrow \infty$. The same argument as in the proof of lemma \ref{lem_prop_erd_1}.$\emph{(2)}$ then yields the existence of all $k_i$-moments with an order smaller than $2s$. By iterating this procedure for different directions we can create any moment of desired order and see that it is always possible to bound these by the modulus of the corresponding derivative at $\mathbf{y}=0$. Since $\varphi$ was non-negative we finally apply lemma \ref{lem_prop_erd_1}.\emph{(2)} and thus prove the first statement.\\
If we on the other hand assume that $\varphi$ is ERD we can pull the regularizing $\varepsilon$-limit into the integral by the dominant convergence theorem, i.e. we have $\omega(\mathbf{z}) = \mathcal{F}_0[\varphi](\mathbf{z})$. For the same reason we can also differentiate underneath the integral wherein each derivative generates a monomial in $k_i$. And since 
\begin{equation*}
\abs{\frac{\partial^{\abs{N}}}{\partial z^N} \integ{\mathbb{R}^n}{}{\varphi(\mathbf{k})\,\Exp{\text{i}\,\mathbf{k z}}}{^n \mathbf{k}}} \leq \integ{\mathbb{R}^n}{}{k^{\abs{N}}\,\abs{\varphi(\mathbf{k})}}{^n \mathbf{k}}<\infty,   
\end{equation*}
where $N\in \mathbb{N}^n$ is a multi-index, every derivative is finite. 
\end{proof}

\subsubsection*{Proof of lemma \ref{lem_proof_hada_prop1}}

\begin{proof}
Let us start with the proof of equation (\ref{eqn_asymp_c/2aj}). Since $a(t)$ solves the consistency equation, $\frac{c_k}{4a(t)^6\jmath(t)^2}$ is given by
\begin{equation}
\label{eqn_defc/2aj}
\frac{c_k}{4 a^6 \jmath^2} =  \frac{k^2}{a^2} + m^2 - \frac{\dot{\jmath}^2}{4 \jmath^2} + \frac{1}{2} \left(3 H\,\frac{\dot{\jmath}}{\jmath} + \frac{\ddot{\jmath}}{\jmath}\right)
\end{equation}
and that is, we need the asymptotic behaviour of $\dot{\jmath}/\jmath$ and $\ddot{\jmath}/\jmath$. By definition of $\jmath$, see equation (\ref{def_j}), we have 
\begin{align}
\jmath &=\! \frac{6H \mathfrak{s}+\dot{\mathfrak{s}}}{H (m^2+2\omega^2)}\nonumber\\[0.15cm]
       &\!\!\;=\! \frac{12 k^4\! +\! k^2 a^2 (12 m^2\! - \!R)\! -\!\!\!\; \frac{a^4}{4} (18 m^4\!\!\!\; +\!\!\;\! 2 m^2\!\!\;R\! -\!\frac{R^2}{6}\! -\!2(6 m^2\! +\! R)H^2\! -\! 5 H\! \dot{R} \!-\!\ddot{R})}{12 k^3 a^2 (2 k^2 + 3 m^2 a^2)}\nonumber\\[0.15cm]
\label{eqn_PQ1}
     &\equiv\! \frac{A_4(t)\,k^4+A_2(t)\,k^2+A_0(t)}{k^3\,(B_2(t)\,k^2+B_0(t))}\equiv \frac{P(t,k)}{k^3\,Q(t,k)}.
\end{align} 
and therefore 
\begin{equation}
\label{eqn_PQ2}
\frac{\dot{\jmath}}{\jmath} = \frac{\dot{P}Q-P\dot{Q}}{Q P} \quad \text{and} \quad \frac{\ddot{\jmath}}{\jmath} = \frac{2P\dot{Q}^2+\ddot{P}Q^2-2\dot{P}\dot{Q}Q-PQ\ddot{Q}}{Q^2 P} 
\end{equation}
are rational functions with the numerator and the denominator being of order $k^6$. Employing, for large $k$, the expansion scheme
\begin{equation*}
\frac{a_1 k^6 + b_1 k^4 + c_1 k^2 + d_1}{a_2 k^6 + b_2 k^4 + c_2 k^2 + d_2} = \frac{a_1}{a_2} + \frac{a_2 b_1 - a_1 b_2}{a_2^2 k^2} +\mathcal{O}(k^{-4})
\end{equation*}
we find that we need to determine the two highest orders in $k$ in the above expressions for $\dot{\jmath}/\jmath$ and $\ddot{\jmath}/\jmath$. These can however be read off from (\ref{eqn_PQ1}) and (\ref{eqn_PQ2}) and we obtain
\begin{align}
\frac{\dot{\jmath}}{\jmath} &= \frac{\dot{A}_4 B_2 - \dot{B}_2 A_4}{A_4 B_2}+\frac{B_2^2 (A_4 \dot{A}_2-A_2 \dot{A}_4)- A_4^2(B_2 \dot{B}_0 - B_0 \dot{B})}{A_4^2 B_2^2\,k^2} +\mathcal{O}(k^{-4}) \nonumber\\[0.1cm]
\label{eqn_expansion_j'/j}
 & =  -2 H - \frac{a^2}{12k^2}\,(2 H (6 m^2 + R) +\dot{R})+\mathcal{O}(k^{-4})
\end{align}
as well as, without stating the general expansion scheme, 
\begin{equation}
\label{eqn_j'/j_asymp}
\frac{\ddot{\jmath}}{\jmath} = 8 H^2 + \frac{R}{3} + \frac{a^2}{36 k^2}\,(6 m^2 R + R^2 + 24 H^2 (6 m^2 + R) -3\ddot{R}) + \mathcal{O}(k^{-4}).
\end{equation}
Plugging these results into (\ref{eqn_defc/2aj}) we get
\begin{equation}
\label{eqn_proof_c/2a3j}
\frac{c_k}{4a^6\jmath^2} = \frac{k^2}{a^2} + (m^2 + \frac{1}{6}R) + \mathcal{O}(k^{-2}) \;\; \Longrightarrow\;\;
\frac{\sqrt{c_k}}{2a^3\jmath} = \frac{k}{a} +  \frac{a}{12 k}\,(6 m^2 + R)+\mathcal{O}(k^{-3}).
\end{equation}
Now suppose $\tilde{\Omega}_{(n)} = \sqrt{c_k}/(2a^3\jmath)+ \gamma_n/k^{2n+1}+\mathcal{O}(k^{-2n-3})$. If we take the time derivative of $\tilde{\Omega}_{(n)}$, divide it afterwards by $\tilde{\Omega}_{(n)}$ and use that $2a^3\jmath/\sqrt{c_k} = a/k +\mathcal{O}(k^{-3})$, which follows from (\ref{eqn_proof_c/2a3j}), we get
\begin{eqnarray*}
\dfrac{\dot{\tilde{\Omega}}_{(n)}}{\tilde{\Omega}_{(n)}}& = & \frac{-3H - \dfrac{\dot{\jmath}}{\jmath}+\dfrac{2a^3\jmath}{\sqrt{c_k}} \left(\dfrac{\dot{\gamma}_n}{k^{2n+1}}+\mathcal{O}(k^{-2n-3}) \right)}{1+\dfrac{2a^3\jmath}{\sqrt{c_k}}\left(\dfrac{\gamma_n}{k^{2n+1}}+\mathcal{O}(k^{-2n-3})\right)}\\[0.2cm]
	&  = &\frac{ -3H - \dfrac{\dot{\jmath}}{\jmath}+\dfrac{a\,\dot{\gamma}_n}{k^{2n+2}}+\mathcal{O}(k^{-2n-4})}{1+ \dfrac{a \gamma_n}{k^{2n+2}}+\mathcal{O}(k^{-2n-4})}\\[0.2cm]
	& = &-3H-\frac{\dot{\jmath}}{\jmath}+\frac{a\gamma_n}{k^{2n+2}}\left(3H +\frac{\dot{\jmath}}{\jmath}+\frac{\dot{\gamma}_n}{\gamma_n}\right) + \mathcal{O}(k^{-2n-4})\\[0.2cm]
& \overset{(\ref{eqn_expansion_j'/j})}{=} &-3H-\frac{\dot{\jmath}}{\jmath}+\frac{a}{k^{2n+2}}\left(H\gamma_n +\dot{\gamma}_n\right) + \mathcal{O}(k^{-2n-4}).
\end{eqnarray*}
The remaining proof of (\ref{eqn_proof_hada_prop2b}) works analogously: calculate the second derivative of $\tilde{\Omega}_{(n)}$ and replace the occurring second time derivative of $\jmath$ by (\ref{eqn_scalefactor_cond}), then pull the overall factor $\sqrt{c_k}/a^3\jmath$ out of the numerator and denominator of $\ddot{\tilde{\Omega}}_{(n)}/\tilde{\Omega}_{(n)}$ and expand the obtained expression for large $k$, which yields
\begin{align*}
\frac{\ddot{\tilde{\Omega}}_{(n)}}{\tilde{\Omega}_{(n)}} \overset{(\ref{eqn_asymp_c/2aj})}{=}&\, \frac{2\left(\frac{k^2}{a^2}\! +\! m^2\! -\!\frac{c_k}{4 a^6 j^2}\! +\! \frac{15}{2} H^2\! +\! \frac{1}{4} R\! +\! \frac{9}{2} H  \frac{\dot{\jmath}}{\jmath} \!+\!\frac{3}{4} \frac{\dot{\jmath}^2}{\jmath^2}\right) \!+\!\frac{a\ddot{\gamma}_n}{k^{2 n+2}}\!+\!\mathcal{O}(k^{-2n-4})}{1 + \frac{a \gamma_n}{k^{2 n+2}}+\mathcal{O}(k^{-2n-4})}\\[0.25cm]
&\hspace{-0.6cm}= \,2\,\left(\frac{k^2}{a^2} + m^2 -\frac{c_k}{4 a^6 j^2} + \frac{15}{2} H^2 + \frac{1}{4} R + \frac{9}{2} H  \frac{\dot{\jmath}}{\jmath} +\frac{3}{4} \frac{\dot{\jmath}^2}{\jmath^2}\right)\\
 &\hspace{-0.2cm}\! +\!\!\left.\left.\frac{a}{k^{2n+2}}\right(\!\ddot{\gamma}_n\!-\!2\gamma_n \right(\underset{\overset{(\ref{eqn_proof_c/2a3j})}{=}- \frac{1}{6} R+\mathcal{O}(k^{-2}) }{\underbrace{\frac{k^2}{a^2}\! +\! m^2\! -\!\frac{c_k}{4 a^6 j^2}}} \!+\!\!\! \underset{\overset{(\ref{eqn_j'/j_asymp})}{=} -6H^2+\mathcal{O}(k^{-2}) }{\underbrace{\frac{9H}{2} \frac{\dot{\jmath}}{\jmath}\! +\! \frac{3}{4} \frac{\dot{\jmath}^2}{\jmath^2}}}\left.\left.\hspace{-0.45cm}+\,\frac{15}{2} H^2\! +\! \frac{R}{4} \right)\!\!\right)\\[0.25cm]
&\hspace{-0.6cm}=\, 2\,\left(\frac{k^2}{a^2} + m^2 -\frac{c_k}{4 a^6 j^2} + \frac{15}{2} H^2 + \frac{R}{4} + \frac{9H}{2}\frac{\dot{\jmath}}{\jmath} +\frac{3}{4} \frac{\dot{\jmath}^2}{\jmath^2}\right) \\
& \hspace{5cm}+ \frac{a}{k^{2n+2}}\,(\ddot{\gamma}_n-(3H^2+R/6)\gamma_n).
\end{align*}
\end{proof}

\subsubsection*{Proof of lemma \ref{lem_proof_hada_prop2}}

\emph{Proof.} We will prove the lemma by induction. For $n=0$, $\Omega_{(0)}$ equals $\omega$ and, hence, by equation (\ref{eqn_asymp_c/2aj}) of the previous lemma, we can conclude that
\begin{align*}
\Omega_{(0)} -\frac{\sqrt{c_k}}{2a^3\jmath}& = \sqrt{\frac{k^2}{a^2}+m^2} - \frac{k}{a}-\frac{a}{12k}\,(6m^2+R)+\mathcal{O}(k^{-3})\\
	     & = \frac{k}{a}\!+\!\frac{a\,m^2}{2k}\!- \!\frac{k}{a}\!-\!\frac{a}{12k}\,(6m^2+R)\!+\!\mathcal{O}(k^{-3}) = -\frac{a}{12k}\,R+\mathcal{O}(k^{-3}),
\end{align*}
which proves the base case. So, let us assume we have shown for some arbitrary $n \in \mathbb{N}_0$ that $\Omega_{(n)} = \sqrt{c_k}/(2a^3\jmath)+\gamma_n/k^{2n+1}+\mathcal{O}(k^{-2n-3})$. Using the definition of $\Omega_{(n+1)}^2$ and applying lemma \ref{lem_proof_hada_prop1}, which is applicable due to the induction hypothesis, we obtain
\begin{align}
\Omega_{(n+1)}^2 & = \omega^2+\frac{3}{4}\,H^2+\frac{1}{4}\,R+\frac{3}{4}\,\left(\frac{\dot{\Omega}_{(n)}}{\Omega_{(n)}}\right)^2-\frac{1}{2}\,\frac{\ddot{\Omega}_{(n)}}{\Omega_{(n)}}\nonumber\\[0.cm]
		    & = \frac{c_k}{(2a^3\jmath)^2} -\frac{a}{2\,k^{2n+2}}\left(\ddot{\gamma}_n+3 H \dot{\gamma}_n+\nicefrac{R}{6}\,\gamma_n\right) +\mathcal{O}(k^{-2n-4}).\nonumber \\[-0.2cm]
\nonumber
\end{align}
In order to finish the induction we finally employ the expansion of $\sqrt{1\! -\! \frac{b}{x^{2n+2}}\!+\!\mathcal{O}(k^{-2n-4})}= 1\!-\!\frac{b}{2}\frac{1}{x^{2n+2}}\!+\! \mathcal{O}(k^{-2n-4})$ and $2a^3\jmath/\sqrt{c_k} = \frac{k}{a} +\mathcal{O}(k^{-1})$ to the last equation and thus have achieved the desired form of
\begin{align*}
\Omega_{(n+1)} &=\, \frac{\sqrt{c_k}}{2a^3 \jmath}- \frac{\sqrt{c_k}}{2a^3\jmath}\left(\frac{a}{4\,k^{2n+2}}\left(\ddot{\gamma}_n+3 H \dot{\gamma}_n+\nicefrac{R}{6}\,\gamma_n\right) + \mathcal{O}(k^{-2n-4})\right)\\[0.2cm]
			   &=\, \frac{\sqrt{c_k}}{2a^3 \jmath}- \frac{a^2}{4\,k^{2n+3}}\left(\ddot{\gamma}_n+3 H \dot{\gamma}_n+\nicefrac{R}{6}\,\gamma_n\right) + \mathcal{O}(k^{-2n-5}).
\end{align*}
\hfill $\square$

\end{appendix}

\subsection*{Acknowledgment}
The author gratefully acknowledges financial support by the International Max Planck Research School (IMPRS). He would like to thank B. Eltzner as well as N. Pinamonti for very helpful discussions and the two referees for their very constructive comments on the manuscript. He also thanks his supervisor Rainer Verch for valuable  discussions and his highly appreciated help in writing this article, which was carried out in partial fulfillment of the author's PhD project.


\end{document}